\tikzset{%
  symbol/.style={%
    draw=none,
    every to/.append style={%
      edge node={node [sloped, allow upside down, auto=false]{$#1$}}}
  }
}
\newcommand{\blank}{{-}}
\newcommand{\defeq}{\mathrel{:=}}
\newcommand{\sets}{\mathbf{Set}}
\newcommand{\cat}{\mathbf{CAT}}
\newcommand{\cc}{\mathbf{C}}
\newcommand{\cats}{\mathcal{S}}
\newcommand{\catr}{\mathcal{R}}
\newcommand{\catc}{\mathcal{C}}
\newcommand{\laxto}{\xrightarrow{\mathrm{lax}}}
\newcommand{\EM}[1]{\hat{#1}}
\newcommand{\ob}[1]{\mathop{\mathrm{Ob}}{#1}}
\newcommand{\opposite}{\mathrm{op}}
\newcommand{\Endo}[1]{\mathop{\mathrm{Endo}}({#1})}
\newcommand{\idmorph}{\mathrm{id}}
\newcommand{\idfunc}{\mathrm{Id}}
\newcommand{\twocell}{\Rightarrow}
\newcommand{\nattr}{\Rightarrow}
\newcommand{\xto}[1]{\xrightarrow{#1}}
\newcommand{\EMfree}{\hat{J}}
\newcommand{\EMforget}{\hat{E}}
\newcommand{\EMunit}{\hat{\eta}}
\newcommand{\EMcounit}{\hat{\varepsilon}}
\newcommand{\gu}{\overline{\eta}}
\newcommand{\gcu}{\overline{\varepsilon}}
\newcommand{\Term}[3]{\mathrm{Term}_{#1}({#2}, {#3})}
\newcommand{\op}{\mathrm{op}}
\newcommand{\pure}{\mathop{\mathbf{e}}}
\newcommand{\doop}{\mathop{\mathbf{do}}}
\newcommand{\oprtyp}[3]{{#1} \rightsquigarrow {#2}; {#3}}
\newcommand{\interp}[2]{{|{#2}|}^{#1}}
\newcommand{\gat}[1]{\mathcal{#1}}
\newcommand{\GATMod}[3]{\mathrm{Mod}_{#2}^{#1}({#3})}
\newcommand{\free}[2]{F_{#2}^{#1}}
\newcommand{\freeex}[1]{\overline{#1}}
\newcommand{\forget}[2]{U_{#2}^{#1}}
\newcommand{\num}{\mathtt{int}}
\newcommand{\unit}{\mathbf{1}}
\newcommand{\unitval}{\star}
\newcommand{\recv}{\mathord{recv}}
\newcommand{\send}{\mathord{send}}
\newcommand{\recvint}{\mathrm{recvint}}
\newcommand{\sendint}{\mathrm{sendint}}
\newcommand{\lookupint}{\mathrm{lookupint}}
\newcommand{\updateint}{\mathrm{updateint}}
\newcommand{\cateff}{\textit{CatEff}}
\newcommand{\ret}{\mathop{\mathbf{val}}\nolimits}
\newcommand{\lt}{\mathop{\mathbf{let}}}
\newcommand{\bind}{\leftarrow}
\newcommand{\lin}{\mathop{\mathbf{in}}\nolimits}
\newcommand{\letin}[3]{
  \lt {#1} \bind {#2} \lin {#3}
}
\newcommand{\handle}[2]{\mathop{\mathbf{handle}} {#1} \mathop{\mathbf{with}} {#2}}
\newcommand{\inl}{\mathop{\mathbf{inl}}}
\newcommand{\inr}{\mathop{\mathbf{inr}}}
\newcommand{\pair}[2]{\langle {#1}, {#2} \rangle}
\newcommand{\proj}[4]{\mathop{\mathbf{proj}} {#1} \mathop{\mathbf{as}} {\langle {#2},{#3} \rangle} .{#4}}
\newcommand{\mc}{\mathop{\mathbf{match}}}
\newcommand{\match}[5]{\mc{#1} \{ {#2}.{#3}; {#4}.{#5} \}}
\newcommand{\opr}{\mathop{\mathrm{op}}}
\newcommand{\handlerto}{\Rightarrow}
\newcommand{\rto}{\rightarrow}
\newcommand{\ctx}{\mathcal{E}}
\newcommand{\ctxf}{\mathcal{F}}
\newcommand{\ctxto}{\rightsquigarrow}
\newcommand{\update}[2]{\mathrm{update}^{#1}_{#2}}
\newcommand{\lookup}[1]{\mathrm{lookup}_{#1}}
\newcommand{\upd}{\mathrm{update}}
\newcommand{\lkp}{\mathrm{lookup}}
\newcommand{\denote}[1]{[\![{#1}]\!]}
\newcommand{\inj}{\mathop{\mathrm{in}}\nolimits}
\newcommand{\apx}{\lhd}
\begin{document}

\begin{frontmatter}
  \title{Category-Graded Algebraic Theories and Effect Handlers} 
  \author{Takahiro Sanada\thanksref{email}}	%%Note NO SPACE between 
  \address{Research Institute for Mathematical Sciences\\ Kyoto University\\ Kyoto, Japan}
  \thanks[email]{Email: \href{mailto:tsanada@kurims.kyoto-u.ac.jp} {\texttt{\normalshape tsanada@kurims.kyoto-u.ac.jp}}}
\begin{abstract}
  We provide an effect system \cateff{} based on
  a category-graded extension of algebraic theories
  that correspond to category-graded monads.
  \cateff{} has category-graded operations and handlers.
  Effects in \cateff{} are graded by morphisms of the grading category.
  Grading morphisms represent fine structures of effects such as dependencies or sorts of states.
  Handlers in \cateff{} are regarded as an implementation of category-graded effects.
  We define the notion of category-graded algebraic theory to give semantics of \cateff{}
  and prove soundness and adequacy.
  We also give an example using category-graded effects
  to express protocols for sending receiving typed data.
\end{abstract}
\begin{keyword}
Algebraic theory, algebraic effect, effect handler, category-graded monad
\end{keyword}
\end{frontmatter}

%TODO mandatory: add short abstract of the document
\begin{abstract}
    We provide an effect system \cateff{} based on
    a category-graded extension of algebraic theories
    that correspond to category-graded monads.
    Our effect system has category-graded operations and handlers.
    Effects are graded by morphisms of the grading category.
    Grading morphisms represent fine structures of effects
    such as dependencies or sorts of states.
    Category-graded handlers
    are regarded as an implementation of category-graded effects.
    We give an example using category-graded effects
    to represent computational effects such as sending and receiving typed data.
\end{abstract}

\section{Introduction}
\subsection{Background}
Moggi \cite{Moggi91} used \textit{monads} to capture computational effects.
Monads have a close relationship with \textit{algebraic theories} \cite{HP07}.
Algebraic effects \cite{PlPo01} are effects based on algebraic theories.
Handlers of algebraic effects \cite{PlPr09} provide clear ways to implement effects.
Algebraic effects and handlers are useful notions to make programs with effects.

There are several extensions of monads \cite{Atkey09,Katsumata14,OWE20,OrchardPM14}.
These variations of monads enable us to reason about computational effects in more detail.

\textit{Parameterised monads} \cite{Atkey09} are monads with parameters
which represent initial and terminal states of computational effects
such as change of type of state.
In an effect system based on parameterised monads,
each computational term is graded by an object of a parameter category $\cats$.
For example,
we can capture the feature of mutable state of mutable type.
To see this, let $\cats$ be a discrete category whose objects are $\num$ and $\unit$.
The parameter indicates the type of state.
We can construct computational terms,
$M$ with parameter $\num$ and $N$ with parameter $\unit$.
We can know that
the computations $M$ and $N$ have states of type $\num$ and $\unit$ respectively.
Let us consider lookup and update operations for this mutable state of mutable type.
There are two lookup operations and four update operations
$\lkp_{\num}(), \lkp_{\unit}()$,
$\upd_{\num \to \unit}(V)$,
$\upd_{\unit \to \num}(V)$,
$\upd_{\num \to \num}(V)$ and
$\upd_{\unit \to \unit}(V)$.
$\lkp_{\num}()$ is an operation that reads the state of type $\num$
and returns the value in it.
We can use $\lkp_{\num}()$ only when the type of the state is $\num$.
$\lkp_{\unit}()$ is similar.
$\upd_{\alpha \to \beta}(V)$ is an operation that writes the value $V$ in the state
changing the type of state from $\alpha$ to $\beta$.
The parameter category $\cats$ does not need to be a discrete category.
If $\cats$ has nontrivial morphisms,
the intuition is that morphisms mean subtyping relations between the types of the state.

\textit{Graded monads} \cite{Katsumata14,Mellies12} are monads graded by a partially ordered monoid.
Elements of the partially ordered monoid express quantity of effects
such as memory locations that effects affect.
Its formal theory was given in \cite{FKM16},
and its algebraic theories were given in \cite{Smirnov08,DorschMS19,Kura20}.
For example,
let $L = \{ l_1, l_2, \dots, l_n \}$ be a set of memory locations.
Then we can get a partially ordered monoid $2^L$
where the product $\cdot$ of $2^L$ is the union of sets $\cup$ and
the order $\le$ of $2^L$ is the inclusion $\subseteq$.
If a computation term $M$ is graded by $A \in 2^L$,
we can know that $M$ may access the memory locations contained in $A$.
The role of the order is weakening of a set of locations which may be accessed.
If a computation term $M$ is graded by $A \in 2^L$ and $A \le B$,
we can deduce that $M$ is also graded by $B$.
The intuition is that
computation that may access the locations in $A$
may access the locations in $B$ which is larger than $A$.
Let us consider memory lookup and update operations.
Let $\lkp_i()$ be an operation that reads location $l_i$ and returns its value
and $\upd_j(V)$ be an operation that writes $V$ on the location $l_j$
and returns the unit value.
$\lkp_i()$ and $\upd_j(V)$ are graded by $\{ l_i \}$ and $\{ l_j \}$ respectively.
If $M$ is graded by $A \in 2^L$,
$\mathop{\mathbf{let}} x \leftarrow \lkp_i() \mathop{\mathbf{in}} M$
and
$\mathop{\mathbf{let}} \_ \leftarrow \upd_j(V) \mathop{\mathbf{in}} M$
are graded by $\{ l_i \} \cdot A$ and $\{ l_j \} \cdot A$, respectively.

\textit{Category-graded monads} \cite{OWE20} are introduced
to unify parameterised and graded monads.
Graded monads are 2-category-graded monads with a single object.
Parameterised monads are category-graded monads with \textit{generalised units}.
Category-graded monads and the constructions of these Eilenberg-Moore and Kleisli categories
are a special case of lax functors and these two constructions are studied by Street \cite{Street72}.

\subsection{Overview}
In this paper, we provide
\begin{itemize}
\item category-graded extensions of algebraic theories,
\item a category-graded effect system \cateff{} with effect handlers
  based on category-graded algebraic theories, and
\item operational and denotational semantics of the effect system.
\end{itemize}

In category-graded algebraic theories,
terms are graded by morphisms of a grading category $\cats$.
In the effect system that corresponds to category-graded algebraic theories,
we will define a judgement $\Gamma \vdash_f M : A$ for computational term $M$
and a morphism $f$ in $\cats$.
This judgement means that
the computation $M$ will return a value of type $A$ under the environment $\Gamma$
and invoke effects graded by $f$.
The term $M$ will be denoted by a map $\denote{\Gamma} \to T_f \denote{A}$,
where $T$ is a category-graded monad.
Grading morphisms can express
a finer structure of computational effects
than
elements of monoids in ordinary graded monads
or parameters in parameterised monads can,
especially structures of dependency and sorts of state.

For instance,
let us consider the following morphisms
$f = (\unit \xto{\tau^{\unit}_{\num}} \num \xto{\send_{\num}} \num \xto{\recv^{\num}_{\num}} \num)$ and
$g = (\unit \xto{\recv^{\unit}_{\num}} \num \xto{\send_{\num}} \num)$
in a category $\cats$.
A computational term $M$ graded by $f$
is a computation that behaves as follows.
\begin{enumerate}
\item The type of the initial state is unit type $\unit$.
\item Some effects are invoked in $M$ and
  a value of type $\num$ is stored in the state.
  Thus, the type of state is changed from $\unit$ to $\num$.
  These effects are graded by $\tau^{\unit}_{\num}$,
  which means internal computation with a change of types of the state.
\item An effect sending the value of the state to another process is invoked.
  It is graded by $\send_{\num}$.
  The type of the state is not changed by the sending effect,
  so the domain and codomain of $\send_{\num}$ are the same.
\item An effect receiving graded by $\recv^{\num}_{\num}$ is invoked.
  It receives a value of type $\num$ and stores it in the state.
  In this case,
  the types of state before receiving and after receiving are the same,
  but in general, they may be changed.
\end{enumerate}
A computational term $N$ graded by $g$ is a computation that behaves as follows.
\begin{enumerate}
\item The type of the initial type is unit type $\unit$.
\item An effect graded by $\recv^{\unit}_{\num}$ is invoked.
  It receives a value of type $\num$ and stores it in the state.
  The type of state before receiving and the type of receiving value are different,
  so the type of state is changed from $\unit$ to $\num$.
\item An effect graded by $\send_{\num}$ is invoked.
  It sends a value of the state.
\end{enumerate}
Thanks to the grading morphisms,
we can know the transition of the type of the state,
and deduce that $M$ and $N$ can interact with each other and yield values.
We think of morphisms in $\cats$ as protocols of communication.

We can construct handlers of category-graded effects.
As ordinary handlers are
the morphisms induced by the universality of free models of algebraic theories,
category-graded handlers are
the morphisms induced by
the universality of free models of category-graded algebraic theories.
We can regard category-graded handlers
as an implementation of category-graded effects.
(Monoid-)graded monads without order and
parameterised monads whose parameterising category is discrete
are special category-graded monads,
so we can get handlers for effects corresponding to these monads automatically.

\textbf{Contents}.
In \Cref{preliminaries}, we introduce notations and review some categorical notions.
In \Cref{category-graded-algebraic-theory},
we define category-graded algebraic theory
and describe the free construction for the theory.
In \Cref{category-graded-effect-system},
we explain our effect system based on category-graded algebraic theory.
We call the effect system \cateff{}.
The effect system has handlers of category-graded effects.
In \Cref{operational-semantics} and \Cref{denotational-semantics},
we describe operational and denotational semantics of our effect system, respectively.
In \Cref{soundness-and-adequacy}, we show the soundness and adequacy of the semantics.
We argue the theory of the generalised units and its integration into \cateff{} in \Cref{generalised-units} and \Cref{correspondence-of-p-monads-to-cat-graded-monads-with-gu}.
\section{Category-Graded Monads} \label{preliminaries}
We assume that readers are familiar
with basic notions of category theory
such as monads \cite{MacLane71}.
% \subsection{Notations}
Throughout this paper, we use the following notations.
\begin{itemize}
\item Let $\cat$ be the 2-category of all categories, functors and natural transformations.
\item Let $\sets$ be the category of all sets and maps.
\item For a category $\catc$,
  $\idfunc_{\catc}$ is the identity functor on $\catc$.
\item For a category $\catc$,
  $\Endo{\catc}$ is the full 2-subcategory of $\cat$
  whose 0-cell is only $\catc$.
% \item For a category $\catc$,
%   $[\catc, \catc]$ is a category
%   whose objects are endofunctors on $\catc$
%   and morphisms are natural transformations of these functors.
\item For a category $\catc$ with finite and countable products and coproducts,
  an object $C$ of $\catc$,
  and a finite or countable set $X$,
  $C^X$ is the $X$-fold product of $C$, that is $\prod_{x \in X} C$ and
  $X \times C$ is the $X$-fold sum of $C$, that is $\sum_{x \in X} C$.
\end{itemize}

\subsection{Lax Functors and Category-Graded Monads}
Category-graded monads are introduced by Orchard et al. \cite{OWE20}.
In this section, we fix a category $\catc$ and a small category $\cats$.

\begin{definition}[Lax functor]
  Let $\cc$ be a 2-category.
  A \textit{lax functor} $F \colon \cats \laxto \cc$ is
  a tuple $F = (F, \eta^F, \mu^F)$
  where
  \begin{itemize}
  \item For each object $a$ of $\cats$,
    $Fa$ is a 0-cell of $\cc$.
  \item For each morphism $f \colon a \to b$ of $\cats$,
    $Ff \colon Fa \to Fb$ is a 1-cell of $\cc$.
  \item For each object $a$ of $\cats$,
    $\eta^F_a \colon \idmorph_{Fa} \twocell F \idmorph_a$ is a 2-cell of $\cc$.
  \item For each morphism $f \colon a \to b$ and $g \colon b \to c$ of $\cats$,
    $\mu^F_{g, f} \colon Fg \circ Ff \twocell F(g \circ f)$
    is a 2-cell of $\cc$
  \end{itemize}
  satisfying the following commutative diagrams:
  \[
  \begin{tikzcd}
    Ff
    \ar[r, Rightarrow, "Ff \eta^F_a"]
    \ar[d, Rightarrow, "\eta^F_b Ff"']
    \ar[rd, equal]
    & Ff \circ F\idmorph_a
    \ar[d, Rightarrow, "\mu^F_{f, \idmorph_a}"]
    \\
    F\idmorph_b \circ Ff
    \ar[r, Rightarrow, "\mu^F_{\idmorph_b, f}"']
    &
    Ff
  \end{tikzcd}
  \hspace{1cm}
  \begin{tikzcd}
    Fh \circ Fg \circ Ff
    \ar[r, Rightarrow, "Fh \mu^F_{g,f}"]
    \ar[d, Rightarrow, "\mu^F_{h,g} Ff"']
    & Fh \circ F(g \circ f)
    \ar[d, Rightarrow, "\mu^F_{h, g \circ f}"]
    \\
    F(h \circ g) Ff
    \ar[r, Rightarrow, "\mu^F_{h \circ g, f}"']
    & F(h \circ g \circ f)
  \end{tikzcd}
  \]
  We call
  $\eta^F$ and $\mu^F$
  \textit{unit} and \textit{multiplication} of $F$ respectively.
\end{definition}
We use string diagrams \cite{Selinger10} for diagrammatic reasoning.
In string diagram, a region of a diagram represents a 0-cell of a 2-category,
a string between two regions represents a 1-cell
from the 0-cell of the left regions to the 0-cell of the right region,
and a node on strings represents a 2-cells from the 1-cell of bottom strings to the 1-cell of top strings.
We can depict unit and multiplication,
and the axioms of lax functor
by string diagrams as follows:
\[
\eta^F_a = \begin{tikzpicture}[baseline=0.5cm, x=0.5cm, y=0.5cm]
\coordinate (eta) at (1.000000, 1.000000);
\coordinate (rb) at (2.000000, 0.000000);
\coordinate (rt) at (2.000000, 2.000000);
\coordinate (Fid) at (1.000000, 2.000000);
\coordinate (lt) at (0.000000, 2.000000);
\coordinate (lb) at (0.000000, 0.000000);
\fill[red!20] (lt) -- (lb)  -- (rb) -- (rt)-- cycle;
\draw (eta) -- (Fid);

\node[above] at (Fid) {$F \idmorph_a$};
\node[below] at (eta) {$\eta^F_a$};
\fill[black] (eta) circle [radius=0.1];
\end{tikzpicture}, \quad
\mu^F_{f,g} = \begin{tikzpicture}[baseline=0.5cm, x=0.5cm, y=0.5cm]
\coordinate (rt) at (4.000000, 2.000000);
\coordinate (rb) at (4.000000, 0.000000);
\coordinate (lb) at (0.000000, 0.000000);
\coordinate (lt) at (0.000000, 2.000000);
\coordinate (Fgf) at (2.000000, 2.000000);
\coordinate (mu) at (2.000000, 1.000000);
\coordinate (Fg) at (3.000000, 0.000000);
\coordinate (Ff) at (1.000000, 0.000000);
\fill[green!20] (rt) -- (rb)  -- (Fg) to[out=90, in=0] (mu) -- (Fgf)-- cycle;
\fill[blue!20] (Ff) to[out=90, in=180] (mu)  to[out=0, in=90] (Fg)-- cycle;
\fill[red!20] (lt) -- (lb)  -- (Ff) to[out=90, in=180] (mu) -- (Fgf)-- cycle;
\draw (mu) -- (Fgf);
\draw (Fg) to[out=90, in=0] (mu);
\draw (Ff) to[out=90, in=180] (mu);

\node[above] at (Fgf) {$F(g \circ f)$};
\node[below] at (Ff) {$Ff$};
\node[below] at (Fg) {$Fg$};
\node[below] at (mu) {$\mu^F_{f,g}$};

\fill[black] (mu) circle [radius = 0.1];
\end{tikzpicture}, \quad
\begin{tikzpicture}[baseline=0.5cm, x=0.5cm, y=0.5cm]
\coordinate (rb) at (4.000000, 0.000000);
\coordinate (rt) at (4.000000, 3.000000);
\coordinate (lt) at (0.000000, 3.000000);
\coordinate (lb) at (0.000000, 0.000000);
\coordinate (Ffb) at (3.000000, 0.000000);
\coordinate (Ff) at (2.000000, 3.000000);
\coordinate (mu) at (2.000000, 2.000000);
\coordinate (eta) at (1.000000, 1.000000);
\fill[blue!20] (rt) -- (rb)  -- (Ffb) to[out=90, in=0] (mu) -- (Ff)-- cycle;
\fill[red!20] (lt) -- (lb)  -- (Ffb) to[out=90, in=0] (mu) -- (Ff)-- cycle;
\draw (mu) -- (Ff);
\draw (Ffb) to[out=90, in=0] (mu);
\draw (eta) to[out=90, in=180] (mu);

\node[above] at (Ff) {$Ff$};
\node[below] at (Ffb) {$Ff$};
\fill[black] (eta) circle [radius=0.1];
\fill[black] (mu) circle [radius=0.1];
\end{tikzpicture}
=
\begin{tikzpicture}[baseline=0.5cm, x=0.5cm, y=0.5cm]
\coordinate (rb) at (2.000000, 0.000000);
\coordinate (rt) at (2.000000, 3.000000);
\coordinate (lt) at (0.000000, 3.000000);
\coordinate (lb) at (0.000000, 0.000000);
\coordinate (Ffb) at (1.000000, 0.000000);
\coordinate (Ff) at (1.000000, 3.000000);
\fill[red!20] (lt) -- (lb)  -- (Ffb) -- (Ff)-- cycle;
\fill[blue!20] (rt) -- (rb)  -- (Ffb) -- (Ff)-- cycle;
\draw (Ffb) -- (Ff);

\node[above] at (Ff) {$Ff$};
\node[below] at (Ffb) {$Ff$};
\end{tikzpicture}
=
\begin{tikzpicture}[baseline=0.5cm, x=0.5cm, y=0.5cm]
\coordinate (rb) at (4.000000, 0.000000);
\coordinate (rt) at (4.000000, 3.000000);
\coordinate (lt) at (0.000000, 3.000000);
\coordinate (lb) at (0.000000, 0.000000);
\coordinate (Ffb) at (1.000000, 0.000000);
\coordinate (Ff) at (2.000000, 3.000000);
\coordinate (mu) at (2.000000, 2.000000);
\coordinate (eta) at (3.000000, 1.000000);
\fill[red!20] (lt) -- (lb)  -- (Ffb) to[out=90, in=180] (mu) -- (Ff)-- cycle;
\fill[blue!20] (rt) -- (rb)  -- (Ffb) to[out=90, in=180] (mu) -- (Ff)-- cycle;
\draw (mu) -- (Ff);
\draw (Ffb) to[out=90, in=180] (mu);
\draw (eta) to[out=90, in=0] (mu);

\node[above] at (Ff) {$Ff$};
\node[below] at (Ffb) {$Ff$};
\fill[black] (eta) circle [radius=0.1];
\fill[black] (mu) circle [radius=0.1];
\end{tikzpicture},
\]
\[
\begin{tikzpicture}[baseline=0.5cm, x=0.5cm, y=0.5cm]
\coordinate (rb) at (6.000000, 0.000000);
\coordinate (rt) at (6.000000, 3.000000);
\coordinate (lb) at (0.000000, 0.000000);
\coordinate (lt) at (0.000000, 3.000000);
\coordinate (Fhgf) at (3.000000, 3.000000);
\coordinate (m1) at (3.000000, 2.000000);
\coordinate (m0) at (2.000000, 1.000000);
\coordinate (Fh) at (5.000000, 0.000000);
\coordinate (Fg) at (3.000000, 0.000000);
\coordinate (Ff) at (1.000000, 0.000000);
\fill[yellow!20] (Fh) to[out=90, in=0] (m1)  -- (Fhgf) -- (rt) -- (rb)-- cycle;
\fill[green!20] (Fg) to[out=90, in=0] (m0)  to[out=90, in=180] (m1) to[out=0, in=90] (Fh)-- cycle;
\fill[blue!20] (Ff) to[out=90, in=180] (m0)  to[out=0, in=90] (Fg)-- cycle;
\fill[red!20] (lt) -- (lb)  -- (Ff) to[out=90, in=180] (m0) to[out=90, in=180] (m1) -- (Fhgf)-- cycle;
\draw (Fh) to[out=90, in=0] (m1);
\draw (Fg) to[out=90, in=0] (m0);
\draw (Ff) to[out=90, in=180] (m0);
\draw (m0) to[out=90, in=180] (m1);
\draw (m1) -- (Fhgf);

\node[above] at (Fhgf) {$F(h \circ g \circ f)$};
\node[below] at (Ff) {$Ff$};
\node[below] at (Fg) {$Fg$};
\node[below] at (Fh) {$Fh$};

\fill[black] (m0) circle [radius=0.1];
\fill[black] (m1) circle [radius=0.1];
\end{tikzpicture}
=
\begin{tikzpicture}[baseline=0.5cm, x=0.5cm, y=0.5cm]
\coordinate (rb) at (6.000000, 0.000000);
\coordinate (rt) at (6.000000, 3.000000);
\coordinate (lb) at (0.000000, 0.000000);
\coordinate (lt) at (0.000000, 3.000000);
\coordinate (Fhgf) at (3.000000, 3.000000);
\coordinate (m1) at (3.000000, 2.000000);
\coordinate (m0) at (4.000000, 1.000000);
\coordinate (Fh) at (5.000000, 0.000000);
\coordinate (Fg) at (3.000000, 0.000000);
\coordinate (Ff) at (1.000000, 0.000000);
\fill[red!20] (lt) -- (lb)  -- (Ff) to[out=90, in=180] (m1) -- (Fhgf)-- cycle;
\fill[blue!20] (Fg) to[out=90, in=180] (m0)  to[out=90, in=0] (m1) to[out=180, in=90] (Ff)-- cycle;
\fill[green!20] (Fg) to[out=90, in=180] (m0)  to[out=0, in=90] (Fh)-- cycle;
\fill[yellow!20] (rt) -- (rb)  -- (Fh) to[out=90, in=0] (m0) to[out=90, in=0] (m1) -- (Fhgf)-- cycle;
\draw (m0) to[out=90, in=0] (m1);
\draw (m1) -- (Fhgf);
\draw (Ff) to[out=90, in=180] (m1);
\draw (Fg) to[out=90, in=180] (m0);
\draw (Fh) to[out=90, in=0] (m0);

\node[above] at (Fhgf) {$F(h \circ g \circ f)$};
\node[below] at (Ff) {$Ff$};
\node[below] at (Fg) {$Fg$};
\node[below] at (Fh) {$Fh$};

\fill[black] (m0) circle [radius=0.1];
\fill[black] (m1) circle [radius=0.1];
\end{tikzpicture}.
\]

\begin{definition}[Category-graded monads]
  A \textit{category-graded monad} (or an \textit{$\cats$-graded monad}) on $\catc$
  is a lax functor $\cats^{\opposite} \laxto \Endo{\catc}$.
  That is, an $\cats$-graded monad consists of mapping of objects and morphisms
  $T \colon \cats^{\opposite} \to \Endo{\catc}$ and
  families of natural transformations
  $\eta_a \colon \idfunc_{\catc} \nattr T_{\idmorph_a}$
  for $a \in \cats$
  and
  $\mu_{f,g} \colon T_f T_g \nattr T_{f;g}$
  for $f \colon a \to b$ and $g \colon b \to c$ in $\cats$
  that make the following diagrams commute.
  \[
  \begin{tikzcd}
    T_f
    \ar[r, Rightarrow, "\eta_a T_f"]
    \ar[d, Rightarrow, "T_f \eta_b"']
    \ar[rd, equal]
    & T_{\idmorph_a} T_f
    \ar[d, Rightarrow, "\mu_{\idmorph_a, f}"] \\
    T_f T_{\idmorph_b}
    \ar[r, Rightarrow, "\mu_{f, \idmorph_b}"']
    & T_f
  \end{tikzcd} \hspace{1cm}
  \begin{tikzcd}
    T_f T_g T_h
    \ar[r, Rightarrow, "T_f \mu_{g,h}"]
    \ar[d, Rightarrow, "\mu_{f,g} T_h"']
    & T_f T_{g; h}
    \ar[d, Rightarrow, "\mu_{f, g; h}"] \\
    T_{f;g} T_h
    \ar[r, Rightarrow, "\mu_{f;g, h}"']
    & T_{f; g; h}
  \end{tikzcd}
  \]
\end{definition}

If $\cats$ is the trivial category,
that is the category with a single object
and the identity morphism on it,
$\cats$-graded monad is a usual monad \cite{benabou67}.
If $\cats$ is a category with single object
and endomorphisms on it,
the endomorphisms form a monoid and
$\cats$-graded monad is a (monoid-)graded monad without order.
To consider parameterised monads as category-graded monads introduced in \cite{OWE20},
we have to introduce \textit{generalised units}, see \cite{OWE20}.

\subsection{Eilenberg-Moore Construction on Lax Functors}
\label{Eilenberg-Moore-construction-of-lax-functors}
According to \cite{Street72}, there are two functors obtained from a lax functor.
The two constructions correspond to
Eilenberg-Moore and Kleisli construction on a monad, respectively.
In this subsection, we review the Eilenberg-Moore construction on lax functors.

Let $\cats$ be a small category and $F \colon \cats \to \cat$ be a lax functor.
The Eilenberg-Moore construction gives a functor $\EM{F} \colon \cats \to \cat$.
\begin{definition}
  For a lax functor $F = (F, \eta^F, \mu^F) \colon \cats \to \cat$,
  the functor $\EM{F} \colon \cats \to \cat$ is defined as follows:
  \begin{itemize}
  \item For an object $a \in \ob{\cats}$,
    the category $\EM{F}a$ is defined as follows.
    \begin{itemize}
    \item Objects are pairs $(A,\xi)$ where
      $A$ is a map which assigns to each morphism
      $f \colon a \to b$ of $\cats$ an object $A_f \in \ob{Fb}$
      and $\xi$ is a family of morphisms
      ${\left\{
        \xi_{f,g} \colon F_f A_g \to A_{f \circ g}
        \right\}}_{f, g}$
      such that the following diagrams commute:   
      \[
      \begin{tikzcd}
        A_f
        \ar[r, "\eta^F_b A_f"]
        \ar[rd, equal]
        & F_{\idmorph_b} A_f
        \ar[d, "\xi_{\idmorph_b, f}"] \\
        & A_f
      \end{tikzcd}
      \quad
      \begin{tikzcd}
        F_h F_g A_f
        \ar[r, "F_h \xi_{g,f}"]
        \ar[d, "\mu^F_{h,g} A_f"']
        & F_h A_{g \circ f}
        \ar[d, "\xi_{h, g \circ f}"] \\
        F_{h \circ g} A_f
        \ar[r, "\xi_{h \circ g, f}"']
        & A_{h \circ g \circ f}
      \end{tikzcd}
      \]
      for each $f \colon a \to b$, $g \colon b \to c$, $h \colon c \to d$ in $\cats$.
    \item Morphisms are $\alpha \colon (A,h) \to (A', \xi')$ where
      $\alpha$ is a family of morphisms
      ${\left\{
        \alpha_f \colon A_f \to A'_f
        \right\}}_{f}$
      such that
      the following diagram commutes:
      \[
      \begin{tikzcd}
        F_g A_f
        \ar[r, "F_g \alpha_f"]
        \ar[d, "\xi_{g,f}"']
        & F_g A'_f
        \ar[d, "\xi'_{g,f}"] \\
        A_{g \circ f}
        \ar[r, "\alpha_{g \circ f}"']
        & A'_{g \circ f}
      \end{tikzcd}
      \]
      for each $f \colon a \to b$, $g \colon b \to c$.
    \end{itemize}
  \item For a morphism $f \colon a \to b$,
    a functor $\EM{F}f \colon \EM{F}a \to \EM{F}b$ is defined as follows:
    \begin{itemize}
    \item $\EM{F}f$ assigns
      an object $(B, \zeta) \defeq (\EM{F}f)(A,\xi)$ of $\EM{F}b$
      to an object $(A, \xi)$ of $\EM{F}a$
      where
      $B_g = A_{g \circ f}$
      and
      $\zeta_{h,g} = \xi_{h, g \circ f}$
      for $g \colon b \to c$, $h \colon c \to d$.
    \item For a morphism $\alpha \colon (A, \xi) \to (A', \xi')$ of $\EM{F}a$,
      a morphism
      \[ (\EM{F}f)\alpha \colon (\EM{F}f)(A, \xi) \to (\EM{F}f)(A', \xi') \]
      is defined as
      ${((\EM{F}f)\alpha)}_g = \alpha_{g \circ f}$ for $g \colon b \to c$ in $\cats$.
    \end{itemize}
  \end{itemize}
\end{definition}

Since $\cats$-graded monad
$T \colon \cats^{\opposite} \to \Endo{\catc}$
is a special case of lax functor $\cats^{\opposite} \to \cat$,
the category of Eilenberg-Moore algebras of $\cats$-graded monad $T$
is obtained by the above construction.

Next, we describe an adjunction between $Fa$ and $\EM{F}a$ for $a \in \ob{\cats}$.
A functor $\EMfree_a \colon Fa \to \EM{F}a$ is defined as follows.
\begin{itemize}
\item For an object $X \in \ob{Fa}$,
  $\EMfree_a X \defeq (A, \xi) \in \ob{\EM{F}a}$
  where
  $A_f = F_f X$ and
  $\xi_{g,f} = \mu^F_{g,f,X} \colon F_g A_f \to A_{g \circ f}$
  for $f \colon a \to b$, $g \colon b \to c$ in $\cats$.
\item For a morphism $x \colon X \to Y$ in $Fa$,
  $\EMfree_a x \colon \EMfree_a X \to \EMfree_a Y$ is a family of
  $ {\EMfree_a x}_{f} \defeq F_f x$.
\end{itemize}
A functor $\EMforget_a \colon \EM{F}a \to Fa$ is defined as follows.
\begin{itemize}
\item For an object $(A, \xi) \in \ob{\EM{F}a}$,
  $\EMforget_a (A, \xi) \defeq A_{\idmorph_a} \in Fa$.
\item For a morphism $\alpha \colon (A, \xi) \to (A', \xi')$ in $\EM{F}a$,
  $\EMforget_a \alpha = \alpha_{\idmorph_a}$.
\end{itemize}
We will show that $\EMfree$ is a left adjoint to $\EMforget$.
To do so, we construct a unit and a counit of the adjunction.
The unit $\EMunit_a \colon \idfunc_{Fa} \nattr \EMforget_a \EMfree_a$ is
a natural transformation whose components are
$\EMunit_{a,X} \defeq \eta^F_X \colon X \to F_{\idmorph_a}X$.
The counit $\EMcounit_a \colon \EMfree_a \EMforget_a \nattr \idfunc_{\EM{F}a}$ is
a natural transformation whose components are
\[ \EMcounit_{a, (A,\xi)} \defeq \xi_{(\blank), \idmorph_a}
\colon
\left( F_{(\blank)} A_{\idmorph_a}, {(\mu^F_{g,f,A_{\idmorph_a}})}_{g,f} \right)
\to
(A, \xi). \]
\begin{proposition}
The tuple $(\EMfree_a, \EMforget_a, \EMunit_a, \EMcounit_a)$
forms an adjunction.
% \[
% \begin{tikzcd}[column sep=8em]
%   F_a
%   \ar[r, bend left=20, "\EMfree_a", ""{name=A, below}]
%   & \EM{F}a
%   \ar[l, bend left=20, "\EMforget_a", ""{name=B,above}]
%   \ar[from=A, to=B, symbol=\dashv]
% \end{tikzcd}.
% \]
\end{proposition}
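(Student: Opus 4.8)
The plan is to prove $\EMfree_a \dashv \EMforget_a$ by exhibiting $\EMunit_a$ and $\EMcounit_a$ as its unit and counit, i.e.\ by checking the two triangle identities, after first discharging the well-definedness obligations that the construction left implicit. Concretely I would proceed in four small steps. \emph{(i) Functoriality.} $\EMforget_a$ merely reads off the $\idmorph_a$-component, so it is patently a functor; for $\EMfree_a$ one checks that the family $\{F_f x\}_f$ is a morphism of $\EM{F}a$ — which is exactly naturality of $\mu^F_{g,f}$ at $x$ — and that identities and composites are preserved, which follows from each $F_f$ being a functor. \emph{(ii) $\EMcounit_{a,(A,\xi)}$ is well typed.} That $\xi_{(\blank),\idmorph_a}$ is a morphism $\bigl(F_{(\blank)}A_{\idmorph_a},(\mu^F_{g,f,A_{\idmorph_a}})_{g,f}\bigr)\to(A,\xi)$ in $\EM{F}a$ is precisely the associativity square in the definition of objects of $\EM{F}a$, instantiated at the triple $(\idmorph_a,f,g)$.

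\emph{(iii) Naturality.} Naturality of $\EMunit_a$ at $x\colon X\to Y$ in $Fa$ unwinds to $F_{\idmorph_a}x\circ\eta^F_X=\eta^F_Y\circ x$, which is naturality of $\eta^F_a\colon\idfunc_{Fa}\nattr F_{\idmorph_a}$; naturality of $\EMcounit_a$ at $\alpha\colon(A,\xi)\to(A',\xi')$ unwinds, component by component, to $\alpha_f\circ\xi_{f,\idmorph_a}=\xi'_{f,\idmorph_a}\circ F_f\alpha_{\idmorph_a}$, which is the morphism square for $\alpha$ instantiated at the pair $(f,\idmorph_a)$. \emph{(iv) Triangle identities.} For $X\in Fa$, the component at $f$ of $(\EMcounit_a\EMfree_a)\circ(\EMfree_a\EMunit_a)$ is $\mu^F_{f,\idmorph_a,X}\circ F_f\eta^F_X$, which equals $\idmorph_{F_fX}$ by the right-unit law of the lax functor $F$, so this composite is the identity on $\EMfree_a X$. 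Dually, for $(A,\xi)\in\EM{F}a$, the component at $(A,\xi)$ of $(\EMforget_a\EMcounit_a)\circ(\EMunit_a\EMforget_a)$ is $\xi_{\idmorph_a,\idmorph_a}\circ\eta^F_{A_{\idmorph_a}}$, which equals $\idmorph_{A_{\idmorph_a}}$ by the unit square in the definition of objects of $\EM{F}a$, instantiated at $f:=\idmorph_a$; hence this composite is the identity on $\EMforget_a(A,\xi)$.

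I do not anticipate a substantial obstacle: every step is a direct instantiation of a coherence law already in hand, and this is simply the classical Eilenberg--Moore argument performed fibrewise over $\ob{\cats}$. The only point requiring genuine care is the bookkeeping — tracking which argument ($\idmorph_a$, a general morphism, or a composite) is substituted into which axiom, and keeping separate the two sources of equations, namely the lax-functor laws for $F$ and the algebra laws defining $\EM{F}a$. If the equational manipulations become unwieldy, the string-diagram calculus set up in the preliminaries renders the two triangle identities essentially pictorial.
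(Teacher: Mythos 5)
Your proof is correct and follows the same unit–counit route the paper intends: the paper itself only states the unit and counit and then says the adjunction ``follows by definition'' (deferring details to Street), and your four steps supply exactly the omitted verifications — well-definedness via the algebra axioms, naturality via the naturality of $\eta^F$ and the morphism square for $\alpha$, and the two triangle identities via the lax-functor unit law and the unit axiom of objects of $\EM{F}a$. All instantiations are the right ones, so nothing further is needed.
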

\begin{proof}
  Follows by definition. For detail, see \cite{Street72}.
\end{proof}

For a morphism $f \colon a \to b$,
let us calculate the functor
$\EMforget_b \EM{F}f \EMfree_a \colon Fa \to Fb$.
For an object $X \in Fa$, we have
$\EMforget_b \EM{F}f \EMfree_a (X)
= \EMforget_b \EM{F}f \left( F_{(\blank)} X, {(\mu^F_{g,k,X})}_{g,k} \right) 
= \EMforget_b \left( F_{(\blank) \circ f} X, {(\mu^F_{g,k \circ f,X})}_{g,k} \right)
= F_{\idmorph_b \circ f} X = F_f X$.
For a morphism $x \colon X \to Y$, we have
$\EMforget_b \EM{F}f \EMfree_a (x)
= \EMforget_b \EM{F}f (F_{\blank} x)
= \EMforget_b (F_{\blank \circ f} x)
= F_{\idmorph_b \circ f} x = F_f x$.
Therefore, we have $F_f = \EMforget_b \EM{F}f \EMfree_a$.
% \[
% \begin{tikzcd}
%   Fa
%   \ar[r, "\EMfree_a"]
%   \ar[d, "F_f"']
%   & \EM{F}a
%   \ar[d, "\EM{F}f"] \\
%   Fb
%   &
%   \EM{F}b
%   \ar[l, "\EMforget_b"]
% \end{tikzcd}
% \]

\subsection{A Lax Functor Induced by Adjunctions and a Functor}
In \Cref{Eilenberg-Moore-construction-of-lax-functors},
a lax functor $F \colon \cats \to \cat$ gives adjunctions and a functor.
Conversely, we will show that adjunctions and a functor determine a lax functor.
The construction of lax functor
is a generalization of the construction \cite{FKM16,Mellies17} of lax $M$-action
from an adjunction and a strict $M$-action where $M$ is a monoid.

\begin{theorem}
  Given a functor $G \colon \cats \to \cat$,
  a map $F \colon \ob{\cats} \to \ob{\cat}$ and
  adjunctions
  $(J_a \colon Fa \to Ga, E_a \colon Ga \to Fa, \eta_a, \varepsilon_a)$
  for each $a \in \ob{\cats}$,
  $F$ is extended to a lax functor $F \colon \cats \to \cat$.
\end{theorem}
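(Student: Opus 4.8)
The plan is to mimic the classical construction that turns an adjunction together with a strict monoid action into a lax action, as in \cite{FKM16,Mellies17}, but now indexed over the morphisms of $\cats$ rather than over a monoid. Given the data $(G, F, (J_a, E_a, \eta_a, \varepsilon_a)_a)$, I define the action of the lax functor on morphisms by
\[
  F f \defeq E_b \circ G f \circ J_a \colon F a \to F b
  \qquad (f \colon a \to b \text{ in } \cats).
\]
This is the only reasonable choice: it is the "transport through $Ga$, then across $Gf$, then back down to $Fb$" recipe, and it is forced by the requirement (established at the end of \Cref{Eilenberg-Moore-construction-of-lax-functors}) that $\EMforget_b \circ \EM{F}f \circ \EMfree_a = F_f$ in the case $G = \EM{F}$. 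The unit $\eta^F_a \colon \idmorph_{Fa} \twocell F\idmorph_a$ is built from the adjunction unit $\eta_a$ together with the functoriality isomorphism $G\idmorph_a \cong \idmorph_{Ga}$: concretely $\eta^F_a = E_a (G\idmorph_a)^{-1}_{J_a} \cdot \eta_a$ (identifying $G\idmorph_a$ with the identity via the functor laws of $G$). The multiplication
\[
  \mu^F_{g,f} \colon Fg \circ Ff \twocell F(g \circ f)
\]
for $f \colon a \to b$, $g \colon b \to c$ is the "zig-zag" 2-cell: expand $Fg \circ Ff = E_c \, Gg \, J_b \, E_b \, Gf \, J_a$, insert the counit $\varepsilon_b \colon J_b E_b \twocell \idmorph_{Gb}$ in the middle, and then use the composition law $Gg \circ Gf = G(g\circ f)$ of the functor $G$ to land in $E_c \, G(g\circ f) \, J_a = F(g\circ f)$.

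The verification then splits into three parts. First, functoriality/naturality bookkeeping: check that $Ff$ is a well-defined functor and that $\mu^F$ and $\eta^F$ are natural in the appropriate sense — this is immediate since everything is a whiskered composite of functors and natural transformations. Second, the unit law of a lax functor (the left-hand pentagon in the definition): this reduces, after unwinding the definitions of $\mu^F$ and $\eta^F$, to one of the triangle identities $\varepsilon_b J_b \cdot J_b \eta_b = \idmorph$ (for the $\mu^F_{f,\idmorph}$ side) and $E_b \varepsilon_b \cdot \eta_b E_b = \idmorph$ (for the $\mu^F_{\idmorph,f}$ side), combined with the unit coherence of $G$. Third, the associativity law (the right-hand square): after expanding both composites $\mu^F_{h\circ g, f} \cdot (\mu^F_{h,g} Ff)$ and $\mu^F_{h, g\circ f} \cdot (Fh\, \mu^F_{g,f})$ in terms of the inserted counits $\varepsilon_b$ and $\varepsilon_c$, the two counit insertions are whiskered along disjoint parts of the string diagram, so they commute past each other by the interchange law, and the functoriality of $G$ on the composite $h\circ g\circ f$ makes the two sides agree. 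I would present this step using the string-diagram notation already set up in the excerpt, where the counit insertions appear as "cups" on non-interfering strands.

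The main obstacle is not any single hard idea but the coherence bureaucracy around the functor $G$: since $G$ is an honest (strict) functor we have $G\idmorph_a = \idmorph_{Ga}$ and $G(g\circ f) = Gg\circ Gf$ on the nose, so in fact the unit and composition comparison 2-cells for $G$ are identities, which simplifies $\eta^F$ and $\mu^F$ considerably; one just has to be careful to use this strictness consistently and not conflate it with the pseudo/lax coherences of $F$. A secondary point worth stating explicitly is that the construction is compatible with the Eilenberg–Moore construction of the previous subsection — i.e. applying this theorem to $G = \EM{F}$, $J_a = \EMfree_a$, $E_a = \EMforget_a$ recovers (up to the computed equality $F_f = \EMforget_b \EM{F}f \EMfree_a$) the original lax functor $F$ — but for the statement as given it suffices to exhibit the tuple $(F, \eta^F, \mu^F)$ and check the two lax-functor axioms, which the three-part verification above accomplishes.
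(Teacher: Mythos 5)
Your proposal is correct and follows essentially the same route as the paper: define $Ff = E_b\circ Gf\circ J_a$, take $\eta^F_a$ from the adjunction unit $\eta_a$ (using strictness of $G$), take $\mu^F_{g,f}$ by inserting the counit $\varepsilon_b$, and verify the unit laws via the triangle identities and associativity via interchange of the disjoint counit insertions, exactly as in the paper's string-diagram computation. (Only a cosmetic slip: the triangle identity for the $\mu^F_{f,\idmorph_a}$ side lives at the object $a$, i.e.\ $(\varepsilon_a J_a)\cdot(J_a\eta_a)=\idmorph_{J_a}$, not at $b$.)
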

\begin{proof}
  For each morphism $f \colon a \to b$ of $\cats$,
  we define $Ff := E_b \circ Gf \circ J_a \colon Fa \to Fb$.
  The unit $\eta^F$ of $F$ is induced by the units of the adjunctions.
  For each $a \in \ob{\cats}$,
  we define $\eta^F_a := \eta_a \colon \idfunc_{Fa} \nattr F_{\idmorph_a}$.
  \[
  \begin{tikzcd}
    Fa
    \ar[r, "J_a"]
    \ar[d, dashed, "Ff"']
    & Ga
    \ar[d, "Gf"] \\
    Fb
    & Gb
    \ar[l, "E_b"]
  \end{tikzcd}
  =
  \begin{tikzpicture}[baseline=0.5cm, x=0.5cm, y=0.5cm]
  \coordinate (Jat) at (1,2);
  \coordinate (Gft) at (2,2);
  \coordinate (Ebt) at (3,2);
  \node[above] at (Jat) {$J_a$};
  \node[above] at (Gft) {$Gf$};
  \node[above] at (Ebt) {$E_b$};
  \coordinate (Jas) at (1,0);
  \coordinate (Gfs) at (2,0);
  \coordinate (Ebs) at (3,0);
  \node[below] at (Jas) {$J_a$};
  \node[below] at (Gfs) {$Gf$};
  \node[below] at (Ebs) {$E_b$};

  \fill[red!20] (0,0) -- (Jas) -- (Jat) -- (0,2) -- cycle;
  \fill[red!50] (Jas) -- (Gfs) -- (Gft) -- (Jat) -- cycle;
  \fill[blue!50] (Ebs) -- (Gfs) -- (Gft) -- (Ebt) -- cycle;
  \fill[blue!20] (4,0) -- (Ebs) -- (Ebt) -- (4,2) -- cycle;
  \draw (Jas) to (Jat);;
  \draw (Gfs) to (Gft);
  \draw (Ebs) to (Ebt);
\end{tikzpicture}
  , \quad
  \begin{tikzcd}
    Fa
    \ar[r, "J_a"]
    \ar[d, "\idfunc_{Fa}"', ""{name=A, below}]
    & Ga
    \ar[d, "{G \idmorph_a = \idfunc_{Ga}}", ""{name=B, above}] \\
    Fa
    & Ga
    \ar[l, "E_a"]
    \ar[from=A, to=B, shorten=3mm, Rightarrow, "\eta_a"]
  \end{tikzcd}
  =
  \begin{tikzpicture}[baseline=0.5cm, x=0.5cm, y=0.5cm]
  \coordinate (Ja) at (1,2);
  \coordinate (Ea) at (2,2);
  \coordinate (eta) at (1.5, 1);
  \coordinate (id) at (1.5, 0);
  \fill[red!20]
  (0,0) -- (3,0) -- (3,2) -- (Ea)
  to[out=-90, in=0] (eta)
  to[out=180, in=-90] (Ja) -- (0,2) -- cycle;
  \fill[red!50]
  (Ea) to[out=-90, in=0] (eta) to[out=180, in=-90] (Ja) -- cycle;
  \draw (Ja) to[out=-90, in=180] (eta) to[out=0, in=-90] (Ea);
  \node[above] at (Ja) {$J_a$};
  \node[above] at (Ea) {$E_a$};
  \node[below] at (id) {$\idmorph_{Fa}$};
  \node[below] at (eta) {$\eta_a$};
\end{tikzpicture}
  \]
  The multiplication $\mu^F$ of $F$ is induced by the counits of adjunctions.
  We define
  $\mu^F_{g,f} = J_a \varepsilon_b E_c \colon Fg \circ Ff \nattr F(g \circ f)$
  for each $f \colon a \to b$ and $g \colon b \to c$ of $\cats$.
  \[
  \begin{tikzcd}
    Ga
    \ar[r, "Gf", ""{name=At, above}]
    \ar[rrr, bend left=35, "G(g \circ f)", ""{name=Bt, above}]
    & Gb
    \ar[d, "E_b"']
    \ar[r, "\idfunc_{Gb}"{name=Bs}, ""{name=epst, above}]
    & Gb
    \ar[r, "Gg", ""{name=Ct, above}]
    & Gc
    \ar[d, "E_c"] \\
    Fa
    \ar[u, "J_a"]
    \ar[r, "Ff"', ""{name=As, below}]
    & Fb
    \ar[r, "\idfunc_{Fb}"', ""{name=epss, below}]
    & Fb
    \ar[u, "J_b"]
    \ar[r, "Fg"', ""{name=Cs, below}]
    & Fc
    \ar[from=As, to=At, shorten=4mm, equal]
    \ar[from=Bs, to=Bt, shorten=1mm, equal]
    \ar[from=Cs, to=Ct, shorten=4mm, equal]
    \ar[from=epss, to=epst, Rightarrow, shorten=4mm, "\varepsilon_b"]
  \end{tikzcd}
  =
  \begin{tikzpicture}[baseline=0.75cm, x=0.5cm, y=0.5cm]
  \coordinate (Jas) at (1,0);
  \coordinate (Gf) at (2,0);
  \coordinate (Ebs) at (3,0);
  \coordinate (Jbs) at (4,0);
  \coordinate (Gg) at (5,0);
  \coordinate (Ecs) at (6,0);
  \coordinate (Jat) at (2,3);
  \coordinate (Ggf) at (3.5,3);
  \coordinate (Ect) at (5,3);
  \coordinate (f_g) at (3.5, 2);
  \coordinate (eps) at (3.5,1);

  \fill[red!20]
  (0,0) -- (Jas) to[out=90, in=-90] (Jat) -- (0,3) --cycle;
  \fill[red!50]
  (Jas)
  to[out=90, in=-90] (Jat)
  -- (Ggf) -- (f_g)
  to[out=180, in=90] (Gf) --cycle;
  \fill[blue!50]
  (Gf)
  to[out=90, in=180] (f_g)
  to[out=0, in=90] (Gg) -- (Jbs)
  to[out=90, in=0] (eps)
  to[out=180, in=90] (Ebs) --cycle;
  \fill[blue!20]
  (Ebs) to[out=90, in=180] (eps) to[out=0, in=90] (Jbs) --cycle;
  \fill[green!50]
  (Ecs) to[out=90, in=-90] (Ect) -- (Ggf) -- (f_g)
  to[out=0, in=90] (Gg) --cycle;
  \fill[green!20]
  (7,0) -- (Ecs) to[out=90, in=-90] (Ect) -- (7,3) --cycle;
  
  \draw (Jas) to[out=90, in=-90] (Jat);
  \draw (Ecs) to[out=90, in=-90] (Ect);
  \draw (Ebs) to[out=90, in=180] (eps) to[out=0, in=90] (Jbs);
  \node[above] at (eps) {$\varepsilon_b$};
  \draw (Gf) to[out=90, in=180] (f_g);
  \draw (Gg) to[out=90, in=0] (f_g);
  \draw (f_g) -- (Ggf);
  \node[above] at (Jat) {$J_a$};
  \node[above] at (Ggf) {$G(g \circ f)$};
  \node[above] at (Ect) {$E_c$};
  \node[below] at (Jas) {$J_a$};
  \node[below] at (Gf) {$Gf$};
  \node[below] at (Ebs) {$E_b$};
  \node[below] at (Jbs) {$J_b$};
  \node[below] at (Gg) {$Gg$};
  \node[below] at (Ecs) {$E_c$};
\end{tikzpicture}
  \]
  
  We claim that $(F, \eta^F, \mu^F)$ is a lax functor.
  To show this claim,
  we have to show that the axioms of lax functors hold.
  The following equations of string diagrams imply the axioms hold.
  \[ \begin{tikzpicture}[baseline=1.25cm, x=0.5cm, y=0.5cm]
  \coordinate (Jas) at (1,0);
  \coordinate (Gf) at (2,0);
  \coordinate (Ebs) at (3,0);
  \coordinate (Jbs) at (4,0);
  \coordinate (Gg) at (5,0);
  \coordinate (Ecs) at (6,0);
  \coordinate (Jcs) at (7,0);
  \coordinate (Gh) at (8,0);
  \coordinate (Eds) at (9,0);
  
  \coordinate (Jat) at (3,5);
  \coordinate (Ghgf) at (5,5);
  \coordinate (Edt) at (7,5);
  \coordinate (f_g) at (3.5, 2);
  \coordinate (gf_h) at (5, 4);

  \coordinate (beps) at (3.5, 1);
  \coordinate (ceps) at (5, 3);

  \node[below] at (Jas) {$J_a$};
  \node[below] at (Gf) {$Gf$};
  \node[below] at (Ebs) {$E_b$};
  \node[below] at (Jbs) {$J_b$};
  \node[below] at (Gg) {$Gg$};
  \node[below] at (Ecs) {$E_c$};
  \node[below] at (Jcs) {$J_c$};
  \node[below] at (Gh) {$Gh$};
  \node[below] at (Eds) {$E_d$};
  
  \node[above] at (Jat) {$J_a$};
  \node[above] at (Ghgf) {$G(h \circ g \circ f)$};
  \node[above] at (Edt) {$E_d$};

  \fill[red!20]
  (0,5) -- (Jat)
  to[out=-90, in=90] (Jas) -- (0,0) --cycle;
  \fill[red!50]
  (Jat)
  to[out=-90, in=90] (Jas) -- (Gf)
  to[out=90, in=180] (f_g)
  to[out=90, in=180] (gf_h)
  to[out=90, in=-90] (Ghgf) -- cycle;
  \fill[blue!50]
  (Gf)
  to[out=90, in=180] (f_g)
  to[out=0, in=90] (Gg) -- (Jbs)
  to[out=90, in=0] (beps)
  to[out=180, in=90] (Ebs) --cycle;
  \fill[blue!20]
  (Jbs)
  to[out=90, in=0] (beps)
  to[out=180, in=90] (Ebs) --cycle;
  \fill[green!50]
  (Gg)
  to[out=90, in=0] (f_g)
  to[out=90, in=180] (gf_h)
  to[out=0, in=90] (Gh) -- (Jcs)
  to[out=90, in=0] (ceps)
  to[out=180, in=90] (Ecs)
  --cycle;
  \fill[green!20]
  (Jcs)
  to[out=90, in=0] (ceps)
  to[out=180, in=90] (Ecs)
  --cycle;
  \fill[yellow!50]
  (Gh)
  to[out=90, in=0] (gf_h) -- (Ghgf) -- (Edt)
  to[out=-90, in=90] (Eds) -- cycle;
  \fill[yellow!20]
  (10,5) -- (Edt)
  to[out=-90, in=90] (Eds) -- (10,0) --cycle;

  \draw (Ebs) to[out=90, in=180] (beps) to[out=0, in=90] (Jbs);
  \draw (Ecs) to[out=90, in=180] (ceps) to[out=0, in=90] (Jcs);
  \draw (Jas) to[out=90, in=-90] (Jat);
  \draw (Eds) to[out=90, in=-90] (Edt);
  \draw (Gf) to[out=90, in=180] (f_g);
  \draw (Gg) to[out=90, in=0] (f_g);
  \draw (f_g) to[out=90, in=180] (gf_h);
  \draw (Gh) to[out=90, in=0] (gf_h);
  \draw (gf_h) -- (Ghgf);

  \node[above] at (beps) {$\varepsilon_b$};
  \node[above] at (ceps) {$\varepsilon_c$};
\end{tikzpicture}
=
\begin{tikzpicture}[baseline=1.25cm, x=0.5cm, y=0.5cm]
  \coordinate (Jas) at (1,0);
  \coordinate (Gf) at (2,0);
  \coordinate (Ebs) at (3,0);
  \coordinate (Jbs) at (4,0);
  \coordinate (Gg) at (5,0);
  \coordinate (Ecs) at (6,0);
  \coordinate (Jcs) at (7,0);
  \coordinate (Gh) at (8,0);
  \coordinate (Eds) at (9,0);
  
  \coordinate (Jat) at (3,5);
  \coordinate (Ghgf) at (5,5);
  \coordinate (Edt) at (7,5);
  \coordinate (g_h) at (6.5, 2);
  \coordinate (f_hg) at (5, 4);

  \coordinate (beps) at (5, 3);
  \coordinate (ceps) at (6.5, 1);

  \node[below] at (Jas) {$J_a$};
  \node[below] at (Gf) {$Gf$};
  \node[below] at (Ebs) {$E_b$};
  \node[below] at (Jbs) {$J_b$};
  \node[below] at (Gg) {$Gg$};
  \node[below] at (Ecs) {$E_c$};
  \node[below] at (Jcs) {$J_c$};
  \node[below] at (Gh) {$Gh$};
  \node[below] at (Eds) {$E_d$};
  
  \node[above] at (Jat) {$J_a$};
  \node[above] at (Ghgf) {$G(h \circ g \circ f)$};
  \node[above] at (Edt) {$E_d$};

  \fill[red!20]
  (0,5) -- (Jat)
  to[out=-90, in=90] (Jas) -- (0,0) -- cycle;
  \fill[red!50]
  (Jat)
  to[out=-90, in=90] (Jas) -- (Gf)
  to[out=90, in=180] (f_hg) -- (Ghgf) -- cycle;
  \fill[blue!50]
  (Gf)
  to[out=90, in=180] (f_hg)
  to[out=0, in=90] (g_h)
  to[out=0, in=90] (Gg) -- (Jbs)
  to[out=90, in=0] (beps)
  to[out=180, in=90] (Ebs) --cycle;
  \fill[blue!20]
  (Jbs) to[out=90, in=0] (beps)
  to[out=180, in=90] (Ebs) --cycle;
  \fill[green!50]
  (Gg)
  to[out=90, in=180] (g_h)
  to[out=0, in=90] (Gh) -- (Jcs)
  to[out=90, in=0] (ceps)
  to[out=180, in=90] (Ecs) --cycle;
  \fill[green!20]
  (Jcs)
  to[out=90, in=0] (ceps)
  to[out=180, in=90] (Ecs) --cycle;
  \fill[yellow!50]
  (Gh)
  to[out=90, in=0] (g_h)
  to[out=90, in=0] (f_hg)
  -- (Ghgf) -- (Edt)
  to[out=-90, in=90] (Eds) -- cycle;
  \fill[yellow!20]
  (10,5) -- (Edt)
  to[out=-90, in=90] (Eds) -- (10,0) -- cycle;

  \draw (Ebs) to[out=90, in=180] (beps) to[out=0, in=90] (Jbs);
  \draw (Ecs) to[out=90, in=180] (ceps) to[out=0, in=90] (Jcs);
  \draw (Jas) to[out=90, in=-90] (Jat);
  \draw (Eds) to[out=90, in=-90] (Edt);
  \draw (Gf) to[out=90, in=180] (gf_h);
  \draw (Gg) to[out=90, in=180] (g_h);
  \draw (g_h) to[out=90, in=0] (f_hg);
  \draw (Gh) to[out=90, in=0] (g_h);
  \draw (f_hg) -- (Ghgf);

  \node[above] at (beps) {$\varepsilon_b$};
  \node[above] at (ceps) {$\varepsilon_c$};
\end{tikzpicture} \]
  \[ \begin{tikzpicture}[baseline=0.75cm, x=0.5cm, y=0.5cm]
  \coordinate (Jat) at (1,3);
  \coordinate (Gft) at (3,3);
  \coordinate (Ebt) at (5,3);
  \node[above] at (Jat) {$J_a$};
  \node[above] at (Gft) {$Gf$};
  \node[above] at (Ebt) {$E_b$};
  \coordinate (Jas) at (3,0);
  \coordinate (Gfs) at (4,0);
  \coordinate (Ebs) at (5,0);
  \node[below] at (Jas) {$J_a$};
  \node[below] at (Gfs) {$Gf$};
  \node[below] at (Ebs) {$E_b$};
  \coordinate (Jaeps) at (2.5, 1.5);
  \coordinate (Jaeta) at (1.5, 0.5);

  \fill[red!20]
  (0,0) -- (Jas)
  to[out=90, in=0] (Jaeps)
  to[out=180, in=0] (Jaeta)
  to[out=180, in=-90] (Jat)
  -- (0,3) -- cycle;
  \fill[red!50]
  (Jas)
  to[out=90, in=0] (Jaeps)
  to[out=180, in=0] (Jaeta)
  to[out=180, in=-90] (Jat) -- (Gft)
  to[out=-90, in=90] (Gfs) --cycle;
  \fill[blue!50]
  (Gfs) to[out=90, in=-90] (Gft) -- (Ebt) -- (Ebs) --cycle;
  \fill[blue!20]
  (6,0) -- (Ebs) -- (Ebt) -- (6,3) -- cycle;

  \node[below] at (Jaeta) {$\eta_a$};
  \node[above] at (Jaeps) {$\varepsilon_a$};
  \draw
  (Jas) to[out=90, in=0]
  (Jaeps) to[out=180, in=0]
  (Jaeta) to[out=180, in=-90]
  (Jat);
  \draw (Gfs) to[out=90, in=-90] (Gft);
  \draw (Ebs) to (Ebt);
\end{tikzpicture}
=
\begin{tikzpicture}[baseline=0.75cm, x=0.5cm, y=0.5cm]
  \coordinate (Jat) at (1,3);
  \coordinate (Gft) at (2,3);
  \coordinate (Ebt) at (3,3);
  \node[above] at (Jat) {$J_a$};
  \node[above] at (Gft) {$Gf$};
  \node[above] at (Ebt) {$E_b$};
  \coordinate (Jas) at (1,0);
  \coordinate (Gfs) at (2,0);
  \coordinate (Ebs) at (3,0);
  \node[below] at (Jas) {$J_a$};
  \node[below] at (Gfs) {$Gf$};
  \node[below] at (Ebs) {$E_b$};
  
  \fill[red!20] (0,0) -- (Jas) -- (Jat) -- (0,3) -- cycle;
  \fill[red!50] (Jas) -- (Jat) -- (Gft) -- (Gfs) -- cycle;
  \fill[blue!50] (Gft) -- (Gfs) -- (Ebs) -- (Ebt) -- cycle;
  \fill[blue!20] (4,0) -- (Ebs) -- (Ebt) -- (4,3) -- cycle;
  \draw (Jas) to (Jat);;
  \draw (Gfs) to (Gft);
  \draw (Ebs) to (Ebt);
\end{tikzpicture}
=
\begin{tikzpicture}[baseline=0.75cm, x=0.5cm, y=0.5cm]
  \coordinate (Jat) at (1,3);
  \coordinate (Gft) at (3,3);
  \coordinate (Ebt) at (5,3);
  \node[above] at (Jat) {$J_a$};
  \node[above] at (Gft) {$Gf$};
  \node[above] at (Ebt) {$E_b$};
  \coordinate (Jas) at (1,0);
  \coordinate (Gfs) at (2,0);
  \coordinate (Ebs) at (3,0);
  \node[below] at (Jas) {$J_a$};
  \node[below] at (Gfs) {$Gf$};
  \node[below] at (Ebs) {$E_b$};
  \coordinate (Ebeps) at (3.5, 1.5);
  \coordinate (Ebeta) at (4.5, 0.5);

  \fill[red!20]
  (0,0) -- (Jas) -- (Jat) -- (0,3) --cycle;
  \fill[red!50]
  (Gfs)
  to[out=90, in=-90] (Gft) -- (Jat) -- (Jas) -- cycle;
  \fill[blue!50]
  (Gfs) to[out=90, in=-90] (Gft) -- (Ebt)
  to[out=-90, in=0] (Ebeta)
  to[out=180, in=0] (Ebeps)
  to[out=180, in=90] (Ebs) -- cycle;
  \fill[blue!20]
  (6,3) -- (Ebt)
  to[out=-90, in=0] (Ebeta)
  to[out=180, in=0] (Ebeps)
  to[out=180, in=90] (Ebs) -- (6,0) --cycle;

  \node[below] at (Ebeta) {$\eta_b$};
  \node[above] at (Ebeps) {$\varepsilon_b$};
  \draw (Jas) to (Jat);
  \draw (Gfs) to[out=90, in=-90] (Gft);
  \draw
  (Ebs) to[out=90, in=180]
  (Ebeps) to[out=0, in=180]
  (Ebeta) to[out=0, in=-90]
  (Ebt);
\end{tikzpicture} \]
\end{proof}

\begin{corollary}\label{cor:lax-functor-from-adjunctions-and-a-functor}
  Given a functor $G \colon \cats^{\opposite} \to \cat$,
  a category $\catc$ and
  adjunctions
  $(J_a \colon \catc \to Ga, E_a \colon Ga \to \catc, \eta_a, \varepsilon_a)$
  for each $a \in \ob{\cats}$,
  there exists an $\cats$-graded monad $T \colon \cats^{\opposite} \to \Endo{\catc}$
  such that $T_f = E_b Gf J_a$
  for each $f\colon b \to a$ in $\cats$.
\end{corollary}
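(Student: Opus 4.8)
The plan is to read this off directly from the preceding theorem by instantiating it at the opposite category. First I would apply the theorem with the grading category taken to be $\cats^{\opposite}$ in place of $\cats$, with the given functor $G \colon \cats^{\opposite} \to \cat$ playing the role of the functor there, and with the object map $F \colon \ob{\cats^{\opposite}} \to \ob{\cat}$ chosen to be the constant map sending every object to $\catc$ (recall $\ob{\cats^{\opposite}} = \ob{\cats}$, and $\cats^{\opposite}$ is small since $\cats$ is). Since $Fa = \catc$ for every $a$, the supplied data $(J_a \colon \catc \to Ga,\ E_a \colon Ga \to \catc,\ \eta_a,\ \varepsilon_a)$ are exactly adjunctions of the form $(J_a \colon Fa \to Ga,\ E_a \colon Ga \to Fa,\ \eta_a,\ \varepsilon_a)$ required as input by the theorem. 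The theorem therefore produces a lax functor $F = (F, \eta^F, \mu^F) \colon \cats^{\opposite} \laxto \cat$ whose value on a morphism $f$ is $Ff = E_b \circ Gf \circ J_a$, where a morphism $f \colon b \to a$ of $\cats$ is regarded as a morphism $a \to b$ of $\cats^{\opposite}$ (so its $\cats^{\opposite}$-domain $a$ supplies $J_a$ and its $\cats^{\opposite}$-codomain $b$ supplies $E_b$, and $Gf \colon Ga \to Gb$).

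Next I would check that this lax functor corestricts along the inclusion $\Endo{\catc} \hookrightarrow \cat$. By the choice of $F$, every $0$-cell $Fa$ is literally $\catc$; every $1$-cell $Ff = E_b \circ Gf \circ J_a$ is an endofunctor on $\catc$; and the structure $2$-cells $\eta^F_a \colon \idfunc_{\catc} \nattr F\idmorph_a$ and $\mu^F_{g,f} \colon Fg \circ Ff \nattr F(g \circ f)$ are natural transformations between such endofunctors. Since $\Endo{\catc}$ is, by definition, the full $2$-subcategory of $\cat$ on the single $0$-cell $\catc$, it contains all of these $1$- and $2$-cells, so $F$ factors uniquely through $\Endo{\catc}$, giving a lax functor $T \colon \cats^{\opposite} \laxto \Endo{\catc}$. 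By the definition of category-graded monad, $T$ is exactly an $\cats$-graded monad, and by construction $T_f = Ff = E_b\, Gf\, J_a$ for each $f \colon b \to a$ in $\cats$, which is the claim. The lax-functor (equivalently, monad) axioms for $T$ are inherited verbatim from those verified in the theorem, as they are the same string-diagram identities now read inside $\Endo{\catc}$.

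There is no real obstacle here: the entire content of the corollary is the bookkeeping of replacing $\cats$ by $\cats^{\opposite}$ and the observation that, because we chose $F$ constant at $\catc$, the resulting lax functor automatically lands in the full $2$-subcategory $\Endo{\catc}$. The only point worth a sentence in the write-up is precisely this corestriction, which is immediate from the fullness of $\Endo{\catc}$.
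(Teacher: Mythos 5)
Your proposal is correct and is exactly the argument the paper intends: the corollary is the previous theorem instantiated at $\cats^{\opposite}$ with the object map constant at $\catc$, plus the (immediate) observation that the resulting lax functor lands in the full $2$-subcategory $\Endo{\catc}$ and hence is an $\cats$-graded monad by definition. The paper gives no separate proof, and your variance bookkeeping ($f \colon b \to a$ in $\cats$ read as $a \to b$ in $\cats^{\opposite}$, yielding $T_f = E_b\,Gf\,J_a$) matches the statement.
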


\section{Category-Graded Algebraic Theories}\label{category-graded-algebraic-theory}
We explain category-graded extensions of algebraic theories.
In this section, we fix a small category $\cats$ and
a category $\catc$ with countable products.

\subsection{Category-Graded Terms}
In a category-graded algebraic theory,
each term is graded by a morphism in a grading category $\cats$.
This is analogous to parameterised and (monoid-)graded algebraic theories
\cite{Atkey09,Kura20}.

\begin{definition}[Signature]
  An \textit{($\cats$-graded) signature} $\Sigma$ is a set of symbols.
  For each $\sigma \in \Sigma$,
  countable or finite sets $P$ and $A$,
  and a morphism $f$ in $\cats$ are assigned.
  $\sigma \in \Sigma$ is called an \textit{operation}.
  $P$, $A$ and $f$ are called
  a \textit{parameter}, \textit{arity} and \textit{grade}
  of $\sigma$, respectively.
  We write
  $ \sigma \colon \oprtyp{P}{A}{f}$
  for an operation $\sigma$ whose
  parameter, arity and grade are
  $P$, $A$, $f$, respectively.
\end{definition}

\begin{definition}[$\Sigma$-term]
  Let $X$ be a set.
  The set of \textit{$\Sigma$-terms} $\Term{\Sigma}{f}{X}$
  for each $f \colon b \to a$ in $\cats$
  is defined recursively as follows.
  \[
  \infer{
    \pure(a, x) \in \Term{\Sigma}{\idmorph_a}{X}
  }{
    a \in \ob{\cats}
    &
    x \in X
  }
  \]
  \[
  \infer{
    \doop(\sigma, p, {\{ t_i \}}_{i \in A} ) \in \Term{\Sigma}{f ; g}{X}
  }{
    p \in P
    &
    \sigma \colon \oprtyp{P}{A}{f \colon c \to b}
    &
    {\{ t_i \}}_{i \in A} \subset \Term{\Sigma}{g \colon b \to a}{X}
  }
  \]
\end{definition}
We sometimes write
$\doop(\sigma, p, \lambda i. t_i )$
instead of
$\doop(\sigma, p, {\{ t_i \}}_{i \in A} )$.
Intuitively, $\doop(\sigma, p, \lambda i . t_i)$ is the term
that performs the operation $\sigma$ with parameter $p$,
binds the result to $i$,
and invokes the continuation $t_i$.
Note that, when $A$ is the arity of $\sigma$,
the term $\doop(\sigma, p, \lambda i . t_i)$ is a term that takes $A$-many terms $\{t_i\}_{i \in A}$ as arguments.

\begin{definition}[$\Sigma$-model]
  Let $\Sigma$ be a signature.
  A \textit{$\Sigma$-model}
  $I = (I, \interp{I}{\blank})$ at $a \in \cats$
  is a pair of a map
  $I \colon \prod_{b \in \cats} \cats(b, a) \to \catc$
  and an interpretation
  $\interp{I}{\sigma}_{k} \colon P \times {I(k)}^A \to I(f ; k)$
  for each operation $\sigma \colon \oprtyp{P}{A}{f \colon c \to b} \in \Sigma$
  and morphism $k \colon b \to a$ in $\cats$.

  A \textit{homomorphism} $\alpha \colon I \to J$
  between two $\Sigma$-models $I$ and $J$ at $a$ is
  a family of morphisms ${\{ \alpha_k \colon I(k) \to J(k) \}}_{k \colon b \to a}$
  such that
  for every operation $\sigma \colon \oprtyp{P}{A}{f \colon b \to c}$
  and morphism $k \colon c \to a$,
  the following diagram commutes:
  \[ \begin{tikzcd}
    P \times {I(k)}^A
    \ar[r, "P \times \alpha_k^A"]
    \ar[d, "\interp{I}{\sigma}_k"']
    & P \times {J(k)}^A
    \ar[d, "\interp{J}{\sigma}_k"] \\
    I(f ; k)
    \ar[r, "\alpha_{f ; k}"']
    & J(f ; k)
  \end{tikzcd} \]
\end{definition}
The map $I : \prod_b \cats(b,a) \to \catc$ assigns a ``carrier set'' $I(k)$ to each $k : b \to a$.
Given a $\Sigma$-model $I$,
we can interpret $\Sigma$-terms by extending the interpretation of $I$.

\begin{definition}[Interpretation of $\Sigma$-terms]
  Let $\Sigma$ be a signature and $I$ be a $\Sigma$-model at $a \in \ob{\cats}$.
  For each $\Sigma$-term $t \in \Term{\Sigma}{f \colon b \to a}{X}$,
  the element $\interp{I}{t}$, called its \textit{interpretation}, of
  the set $\prod_{a' \in \cats, k \in \cats(a,a')} \catc({I(k)}^X, I(f;k))$
  is defined recursively as
  \[ \begin{split}
    \interp{I}{\pure(a, x)} & \defeq {\{ \pi_x \colon {I(k)}^X \to I(k) \}}_{k \colon a \to a'}, \\
    \interp{I}{\doop(\sigma, p, \lambda i. t_i)}
    & \defeq {\{ (p \times {\langle \interp{I}{t_i}_k \rangle}_{i \in A});
    \interp{I}{\sigma}_{g; k}
    \colon {I(k)}^X \to I(f; g; k) \}}_{k \colon a \to a'}
  \end{split} \]
  where
  $(\sigma \colon \oprtyp{P}{A}{f : c \to b}) \in \Sigma$
  and $t_i \in \Term{\Sigma}{g \colon b \to a}{X}$.
\end{definition}

Intuitively,
grading morphisms are sequences of sorts of effects
that will be invoked by executing terms.

\begin{example}
  Category graded algebraic theories are useful
  to deal with ``order-sensitive'' operations.
  To illustrate this, we provide an example that contains operations
  for mutable state and sending and receiving data.
  In this example,
  grading morphisms represent orders of sending and receiving effect and types of data,
  analogously to session types.
  Let $\cats$ be a category
  whose objects are base types and
  morphisms are generated by
  $\alpha \xto{\recv^{\alpha}_{\beta}} \beta$,
  $\alpha \xto{\send_{\alpha}} \alpha$,
  $\alpha \xto{\tau^{\alpha}_{\beta}} \beta$,
  $\tau^{\beta}_{\gamma} \circ \tau^{\alpha}_{\beta} = \tau^{\alpha}_{\gamma}$,
  and $\tau^{\alpha}_{\alpha} = \idmorph_\alpha$.
  Let
  $\recvint_{\alpha} \colon \oprtyp{\unit}{\unit}{\alpha \xto{\recv} \num}$,
  $\sendint \colon \oprtyp{\unit}{\unit}{\num \xto{\send} \num}$,
  $\lookupint \colon \oprtyp{\unit}{\num}{\num \xto{\idmorph_{\num}} \num}$,
  $\updateint_{\alpha} \colon \oprtyp{\num}{\unit}{\alpha \xto{\tau} \num}$,
  and
  $\Sigma$ be
  ${\{ \recvint_{\alpha}, \updateint_{\alpha} \}}_{\alpha} \cup \{ \sendint, \lookupint \}$.
  We have the following $\Sigma$-terms:
  \[ \begin{split}
    t \defeq &
    \doop(\updateint, 2, \lambda \_ .
    \doop(\sendint, \unitval, \lambda \_ .
    \doop(\recvint_{\num}, \unitval, \lambda \_ . \\
    & \doop(\lookupint, \unitval, \lambda n .
    \pure(\num, n)))))
    \in \Term{\Sigma}{ \tau ; \send_{\num} ; \recv^{\num}_{\num} }{\num},
  \end{split} \]
  \[ \begin{split}
    s \defeq &
    \doop(\recvint_{\unit}, \unitval, \lambda \_ .
    \doop(\lookupint, \unitval, \lambda n .
    \doop(\updateint, n+1, \lambda \_ . \\
    & \doop(\sendint_{\num}, \unitval, \lambda \_ .
    \pure(\num, \unitval)))))
    \in \Term{\Sigma}{\recv^{\unit}_{\num} ; \send_{\num}}{\unit}.
  \end{split} \]
  The term $t$ is graded by
  the morphism
  $\unit \xto{\tau} \num \xto{\send_{\num}} \num \xto{\recv^{\num}_{\num}} \num$
  in $\cats$.
  This means that the term $t$ executes internal effect $\tau$,
  sends data of type $\num$,
  and then receives data of type $\num$.
  The term $s$ is graded by
  the morphism
  $\unit \xto{\recv^{\unit}_{\num}} \num \xto{\send_{\num}} \num$.
  This means that the term $s$
  receives data of type $\num$ and then
  sends data of type $\num$.
  We can know from grading morphisms of $t$ and $s$ that they can interact with each other.
\end{example}

\subsection{Equations}
Next, we introduce equations to represent
the equational theory on terms.
The equation is defined as a pair of terms as in the non-graded case.
However, the pairs of terms must have the same grading morphism.

\begin{definition}[Equations and category-graded algebraic theory]
  A graded family of \textit{equations} for $\Sigma$ is a family of sets
  $E = (E_f)_{f}$
  where $E_f$ is a set of pairs of terms in $\Term{\Sigma}{f}{X}$.
  We write $t = s$ for a pair $(t, s) \in E_f$.
% \end{definition}
% 
% \begin{definition}[Category-graded algebraic theory]
  An \textit{$\cats$-graded algebraic theory} is a pair
  $\gat{T} = (\Sigma, E)$
  of $\cats$-graded signature $\Sigma$
  and equations $E$ for $\Sigma$.
\end{definition}

\begin{definition}[Model for category-graded algebraic theory]
  Let $\gat{T} = (\Sigma, E)$ be an $\cats$-graded algebraic theory
  and $a$ be an object of $\cats$.
  A model for $\gat{T}$ at $a$ is
  a $\Sigma$-model $I$ at $a$ that satisfies
  $\interp{I}{t} = \interp{I}{s}$
  for each morphism $f \colon c \to b$ in $\cats$ and equation $t = s \in E_f$.
  We denote the category of models for $\gat{T}$ at $a$ by
  $\GATMod{\gat{T}}{a}{\catc}$.
\end{definition}

\subsection{Free Models and Adjunctions}
We explain free models of a category-graded algebraic theory
and its universal property.

\begin{definition}
  Let $\gat{T} = (\Sigma, E)$ be an $\cats$-graded algebraic theory.
  We define a functor
  $\free{\gat{T}}{a} \colon \sets \to \GATMod{\gat{T}}{a}{\sets}$
  by
  $(\free{\gat{T}}{a} X) (k) = \Term{\Sigma}{k}{X}/\sim$
  for $k \colon b \to a$ in $\cats$
  and
  $\interp{\free{\gat{T}}{a}X}{\sigma}_k (p, {\{ [t_i] \}}_{i \in A})
  = [\doop(\sigma, p, \lambda i. t_i)]$
  for each $X \in \sets$,
  $k \colon b \to a$ in $\cats$
  and $\sigma \colon \oprtyp{P}{A}{f}$
  where
  $\sim$ is the equivalence relation induced by the equations $E$ and
  $[t]$ is the equivalence class of $t$.
  We also define a map $\eta_X \colon X \to (\free{\gat{T}}{a}{X})(\idmorph_a)$
  by $\eta_X(x) = [\pure(a,x)] \in \Term{\Sigma}{\idmorph_a}{X} / \sim$.
\end{definition}

We can show that the model $\free{\gat{T}}{a}{X}$ with $\eta_X$
has the universal property of a free model.

\begin{proposition}
  Let $\gat{T} = (\Sigma, E)$ be an $\cats$-graded algebraic theory.
  Given a model $A$ in $\sets$ for $\gat{T}$ at $a$
  and a map $\phi \colon X \to A \idmorph_a$,
  there exists a unique homomorphism
  $\freeex{\phi} \colon \free{\gat{T}}{a}X \to A$
  such that
  $\freeex{\phi}_{\idmorph_a} \circ \eta_X = \phi$.
  \[
  \begin{tikzcd}
    X
    \ar[r, "\eta_X"]
    \ar[rd, "\phi"']
    & (\free{\gat{T}}{a}X)(\idmorph_a)
    \ar[d, "\freeex{\phi}_{\idmorph_a}"] \\
    & A \idmorph_a
  \end{tikzcd}
  \quad
  \begin{tikzcd}
    \free{\gat{T}}{a}X
    \ar[d, dashed, "\freeex{\phi}"] \\
    A
  \end{tikzcd}
  \]
\end{proposition}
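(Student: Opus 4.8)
The plan is to construct $\freeex{\phi}$ by structural recursion on $\Sigma$-terms, show that it factors through the congruence $\sim$, check that it is a homomorphism extending $\phi$, and then establish uniqueness by a second induction on terms. Concretely, for $k\colon b\to a$ in $\cats$ and a representative $t\in\Term{\Sigma}{k}{X}$ I would set $\freeex{\phi}_k([t])\defeq\interp{A}{t}_{\idmorph_a}(\phi)$, using that the interpretation $\interp{A}{t}$ has, at the index $\idmorph_a\in\cats(a,a)$, a component $\interp{A}{t}_{\idmorph_a}\colon A(\idmorph_a)^X\to A(k;\idmorph_a)=A(k)$ into which $\phi$, viewed as an element of $A(\idmorph_a)^X$, can be fed. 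Unfolding the recursive definition of $\interp{A}{\blank}$, this amounts to the evident recursion
\[ \freeex{\phi}_{\idmorph_a}([\pure(a,x)])=\phi(x), \qquad \freeex{\phi}_{f;g}([\doop(\sigma,p,\lambda i.t_i)])=\interp{A}{\sigma}_{g}\bigl(p,(\freeex{\phi}_{g}([t_i]))_{i}\bigr), \]
where $\sigma$ has grade $f$ and the arguments $t_i$ have grade $g$.

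The substantive point is that $\freeex{\phi}_k$ is well defined on $\sim$-classes. I would prove, by induction on the derivation of $t\sim s$, that $\interp{A}{t}_{\idmorph_a}=\interp{A}{s}_{\idmorph_a}$. The base case — $(t,s)$ an instance of an equation in $E$ — is exactly the hypothesis that $A$ is a model of $\gat{T}$, giving $\interp{A}{t}=\interp{A}{s}$ and hence equality at $\idmorph_a$ (if $\sim$ is generated by closing the equations under substitution of terms for variables, one invokes here the standard lemma computing $\interp{A}{\blank}$ of a substituted term). The reflexivity, symmetry and transitivity cases are immediate, and the congruence case, where $t=\doop(\sigma,p,\lambda i.t_i)$ and $s=\doop(\sigma,p,\lambda i.s_i)$ with $t_i\sim s_i$, follows because by the recursive definition $\interp{A}{t}_{\idmorph_a}(\phi)=\interp{A}{\sigma}_g\bigl(p,(\interp{A}{t_i}_{\idmorph_a}(\phi))_i\bigr)$ depends on the arguments only through the values $\interp{A}{t_i}_{\idmorph_a}(\phi)$. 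Granting well-definedness, the triangle $\freeex{\phi}_{\idmorph_a}\circ\eta_X=\phi$ is immediate from $\eta_X(x)=[\pure(a,x)]$ and the first clause. For the homomorphism diagram, in $\free{\gat{T}}{a}X$ one has $\interp{\free{\gat{T}}{a}X}{\sigma}_k(p,\{[t_i]\}_{i})=[\doop(\sigma,p,\lambda i.t_i)]$ by definition of the free model; applying $\freeex{\phi}_{f;k}$ and using the second clause yields $\interp{A}{\sigma}_k\bigl(p,(\freeex{\phi}_k([t_i]))_i\bigr)$, which is precisely the result of going the other way around the square, so the diagram commutes.

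For uniqueness, let $\psi\colon\free{\gat{T}}{a}X\to A$ be any homomorphism with $\psi_{\idmorph_a}\circ\eta_X=\phi$. I would show $\psi_k([t])=\freeex{\phi}_k([t])$ for every $t\in\Term{\Sigma}{k}{X}$ by induction on $t$. If $t=\pure(a,x)$, then $\psi_{\idmorph_a}([\pure(a,x)])=\psi_{\idmorph_a}(\eta_X(x))=\phi(x)=\freeex{\phi}_{\idmorph_a}([\pure(a,x)])$. If $t=\doop(\sigma,p,\lambda i.t_i)$ with $\sigma$ of grade $f$ and $t_i$ of grade $g$, then $[\doop(\sigma,p,\lambda i.t_i)]=\interp{\free{\gat{T}}{a}X}{\sigma}_g(p,\{[t_i]\}_{i})$, so the homomorphism square for $\psi$ gives $\psi_{f;g}([t])=\interp{A}{\sigma}_g\bigl(p,(\psi_g([t_i]))_i\bigr)$, which by the induction hypothesis equals $\interp{A}{\sigma}_g\bigl(p,(\freeex{\phi}_g([t_i]))_i\bigr)=\freeex{\phi}_{f;g}([t])$.

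The main obstacle will be the base case of the well-definedness argument: it forces one to fix a precise definition of the congruence $\sim$ generated by $E$ — in particular whether $\sim$ is closed under grading-respecting substitution of terms for the variables in $X$ — and, if so, to prove the accompanying substitution lemma relating $\interp{A}{\blank}$ of a substituted term to the interpretations of its constituents. Everything else is a direct unwinding of the definitions of $\free{\gat{T}}{a}X$, of the interpretation of $\Sigma$-terms in $A$, and of homomorphisms; the argument is the category-graded refinement of the classical construction of free algebras for an algebraic theory, with the grading morphisms threaded through so that the composites $f;g$, $k;\idmorph_a$ and $f;k$ appearing above all typecheck.
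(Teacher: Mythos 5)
Your proof takes essentially the same route as the paper: define the map by structural recursion on $\Sigma$-terms (your $\interp{A}{t}_{\idmorph_a}(\phi)$ unfolds to exactly the paper's $\hat{\phi}$), use the hypothesis that $A$ is a model of $\gat{T}$ to descend to $\sim$-classes, and check the unit triangle and the homomorphism squares. You additionally carry out the uniqueness induction and flag the substitution-closure subtlety in the definition of $\sim$, both of which the paper's very terse proof leaves implicit; this is a more complete write-up of the same argument, not a different one.
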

\begin{proof}
  For each $f \colon b \to a$,
  we define a map
  $ \hat{\phi}_f \colon \Term{\Sigma}{f}{X} \to Af$
  from the set of $\Sigma$-terms to $Af$
  recursively by:
  $\hat{\phi}_{\idmorph_a}(\pure(a, x)) = \phi(x)$ and
  $\hat{\phi}_f( \doop(\sigma, p, \lambda i. t_i) ) = \interp{A}{\sigma}(p, \lambda i. \hat{\phi}(t_i))$.
  Since $A$ is a model for $\gat{T}$, all equations in $E$ holds in $A$.
  Therefore, the map $\freeex{\phi}([t]) \defeq [\hat{\phi}(t)]$ is well-defined.
  We can show $\freeex{\phi}_{\idmorph_a} \circ \eta_X = \phi$ by definition.
  % we define maps
  % $\freeex{\phi}_f \colon (\free{\gat{T}}{a}X)(f) \to Af$
  % recursively by:
  % $\freeex{\phi}_{\idmorph_a}([\pure(a, x)]) = \phi(x)$ and
  % $\freeex{\phi}_f( [\doop(\sigma, p, \lambda i. t_i)] ) =
  % \interp{A}{\sigma}(p, \lambda i. \freeex{\phi}([t_i]))$.
  % We can show $\freeex{\phi}_{\idmorph_a} \circ \eta_X = \phi$ by definition.
\end{proof}

The forgetful functor for models
$\forget{\gat{T}}{a} \colon \GATMod{\gat{T}}{a}{\sets} \to \sets$ is
the evaluation at $\idmorph_a$, that is
$\forget{\gat{T}}{a} (A) = A_{\idmorph_a}$.
By the universality of the free construction,
we have isomorphisms
$
\GATMod{\gat{T}}{a}{\sets} (\free{\gat{T}}{a}X, A)
\cong
\catc(X, \forget{\gat{T}}{a}A)
$
for each $a \in \ob{\cats}$.
Thus, we have adjunctions $\free{\gat{T}}{a} \dashv \forget{\gat{T}}{a}$.
\[
\begin{tikzcd}[column sep=8em]
  \catc
  \ar[r, bend left=20, "\free{\gat{T}}{a}", ""{name=A, below}]
  & \GATMod{\gat{T}}{a}{\sets}
  \ar[l, bend left=20, "\forget{\gat{T}}{a}", ""{name=B,above}]
  %\ar[from=A, to=B, symbol=\dashv]
  \arrow[phantom, from=A, to=B, "\dashv" rotate=-90]
\end{tikzcd}.
\]

We can define
$\GATMod{\gat{T}}{k}{\sets}
\colon \GATMod{\gat{T}}{a}{\sets} \to \GATMod{\gat{T}}{b}{\sets}$
for a morphism $k \colon b \to a$ in $\cats$
to make $\GATMod{\gat{T}}{\blank}{\sets}$
a functor $\cats^{\op} \to \cat$.
For a model $A$ in $\GATMod{\gat{T}}{a}{\sets}$,
a map
$\GATMod{\gat{T}}{k}{\sets}A \colon \prod_c \cats(c,b) \to \sets$
is defined as
$(\GATMod{\gat{T}}{k}{\sets}A) (f) = A(f ; k)$.
Interpretation of operations
$\interp{ \GATMod{\gat{T}}{k}{\sets}A }{\blank}$
is defined as
$\interp{ \GATMod{\gat{T}}{k}{\sets} A}{\sigma}_f
= \interp{A}{\sigma}_{f ; k}$.
It is easy to check that
$\GATMod{\gat{T}}{k}{\sets}$ is a functor
from $\GATMod{\gat{T}}{a}{\sets}$ to $\GATMod{\gat{T}}{b}{\sets}$.

More generally, we can define a category of models $\GATMod{\gat{T}}{a}{\catc}$ for a category $\catc$ in the same way as $\catc = \sets$,
and we can apply the same argument as above if the left adjoint of forgetful functor exists.

Applying Corollary \ref{cor:lax-functor-from-adjunctions-and-a-functor},
we obtain an $\cats$-graded monad $\cats^{\op} \to \Endo{\catc}$.
We denote the category-graded monad by $T^{\gat{T}}$.
The unit and multiplication of $T^{\gat{T}}$ are depicted as follows.
\[
\eta^{T^{\gat{T}}}_a = \begin{tikzpicture}[baseline=0.5cm, x=0.5cm, y=0.5cm]
  \coordinate (Fa) at (1,2);
  \coordinate (Ua) at (2,2);
  \coordinate (eta) at (1.5, 1);
  \coordinate (id) at (1.5, 0);

  \fill[black!20]
  (0,0) -- (3,0) -- (3,2) -- (Ua)
  to[out=-90, in=0] (eta)
  to[out=180, in=-90] (Fa) -- (0,2) -- cycle;
  \fill[red!50]
  (Ua) to[out=-90, in=0] (eta) to[out=180, in=-90] (Fa) -- cycle;

  \draw (Fa) to[out=-90, in=180] (eta) to[out=0, in=-90] (Ua);
  \node[above] at (Fa) {$\free{\gat{T}}{a}$};
  \node[above] at (Ua) {$\forget{\gat{T}}{a}$};
  \node[below] at (id) {$\idfunc_{\catc}$};
\end{tikzpicture},
\quad
\mu^{T^{\gat{T}}}_{g,f} = \begin{tikzpicture}[baseline=0.75cm, x=0.5cm, y=0.5cm]
  \coordinate (Jas) at (1,0);
  \coordinate (Mf) at (2,0);
  \coordinate (Ebs) at (3,0);
  \coordinate (Jbs) at (4,0);
  \coordinate (Mg) at (5,0);
  \coordinate (Ecs) at (6,0);
  \coordinate (Jat) at (2,3);
  \coordinate (Mgf) at (3.5,3);
  \coordinate (Ect) at (5,3);
  \coordinate (f_g) at (3.5, 2);
  \coordinate (eps) at (3.5,1);

  \fill[black!20]
  (0,0) -- (Jas) to[out=90, in=-90] (Jat) -- (0,3) --cycle;
  \fill[red!50]
  (Jas)
  to[out=90, in=-90] (Jat)
  -- (Mgf) -- (f_g)
  to[out=180, in=90] (Mf) --cycle;
  \fill[blue!50]
  (Mf)
  to[out=90, in=180] (f_g)
  to[out=0, in=90] (Mg) -- (Jbs)
  to[out=90, in=0] (eps)
  to[out=180, in=90] (Ebs) --cycle;
  \fill[black!20]
  (Ebs) to[out=90, in=180] (eps) to[out=0, in=90] (Jbs) --cycle;
  \fill[green!50]
  (Ecs) to[out=90, in=-90] (Ect) -- (Mgf) -- (f_g)
  to[out=0, in=90] (Mg) --cycle;
  \fill[black!20]
  (7,0) -- (Ecs) to[out=90, in=-90] (Ect) -- (7,3) --cycle;
  
  \draw (Jas) to[out=90, in=-90] (Jat);
  \draw (Ecs) to[out=90, in=-90] (Ect);
  \draw (Ebs) to[out=90, in=180] (eps) to[out=0, in=90] (Jbs);
  \draw (Mf) to[out=90, in=180] (f_g);
  \draw (Mg) to[out=90, in=0] (f_g);
  \draw (f_g) -- (Mgf);

  \node[above] at (Jat) {$\free{\gat{T}}{a}$};
  \node[above=1em] at (Mgf) {$\GATMod{\gat{T}}{f;g}{\catc}$};
  \node[above] at (Ect) {$\forget{\gat{T}}{c}$};
  \node[below] at (Jas) {$\free{\gat{T}}{a}$};
  \node[below=1em] at (Mf) {$\GATMod{\gat{T}}{f}{\catc}$};
  \node[below] at (Ebs) {$\forget{\gat{T}}{b}$};
  \node[below] at (Jbs) {$\free{\gat{T}}{b}$};
  \node[below=1em] at (Mg) {$\GATMod{\gat{T}}{g}{\catc}$};
  \node[below] at (Ecs) {$\forget{\gat{T}}{c}$};
\end{tikzpicture}.
\]

\section{A Category-Graded Effect System}\label{category-graded-effect-system}
In this section,
we introduce an effect system with category-graded operations
based on category-graded algebraic theories.
We call the effect system \cateff{}.
We also construct handlers of category-graded algebraic operations.
We fix small categories $\cats$ and $\cats'$,
and $\cats$, $\cats'$-signatures $\Sigma$, $\Sigma'$.

\subsection{Language}
\textbf{Syntax}.
Our effect system \cateff{} is based on fine-grained call-by-value calculus \cite{LVT03}.
The syntax is divided into two parts, values and computations.
\[
\text{\textbf{Values}}\quad
V, W ::=
x \mid \unitval
\mid \inl{V} \mid \inr{V}
\mid \pair{V}{W}
\mid \lambda^f x : A . M
\]
\[
\begin{split}
  \text{\textbf{Computations}} \quad
  M, N ::=
  & \ret_a V
  \mid \letin{x}{M}{N}
  \mid V W
  \mid \opr(V) \\
  & \mid \proj{V}{x}{y}{M}
  \mid \match{V}{x}{M_1}{y}{M_2}
\end{split}
\]
where $a$ is an object in $\cats$, and $f$ is a morphism in $\cats$.
Values are usual except for lambda abstraction.
The lambda abstraction $\lambda^f x : A. M$ means that
this function has effects represented by the morphism $f$.
We sometimes write $V_{\Sigma}$ and $M_{\Sigma}$ to specify its signature.

\textbf{Types}.
Types are defined by the following BNF:
\[
  P, Q ::= \unit
  \mid P \times Q
  \mid P + Q, \quad
  A, B ::= \unit
  \mid A \times B
  \mid A + B
  \mid A \to B; a \xto{f} b.
\]
where $f \colon a \to b$ is a morphism in $\cats$.
The key idea is that
a function type $A \to B ; a \xto{f} b$ indicates that
a function of this type consumes an argument of type $A$ and returns a result of type $B$
with effects represented by $f$.
We call $P$ a \textit{primitive type}.
We assume that
parameters and arities of operations in the signatures $\Sigma$ and $\Sigma'$
are primitive types.

\textbf{Typing rules}.
There are
value judgements of the form $\Gamma \vdash V : A$
and computation judgements of the form $\Gamma \vdash^{\Sigma}_{f} M : A$
where $\Gamma$ is a list of distinct variables with types
and $f$ is a morphism in $\cats$.
The judgement $\Gamma \vdash^{\Sigma}_f M : A$ means that
the computation $M$ returns a result of type $A$ under type environment $\Gamma$
and causes effects represented by $f$.
We omit $\Sigma$ in the judgement $\Gamma \vdash^{\Sigma}_f M : A$
if it is clear from the context.
The typing rules for terms in \cateff{} are presented in \Cref{typing-rules}.

\begin{figure}[htp]
  \centering
  \textbf{Values}
  \[
  \infer[\textsc{Tv-Unit}]{
    \Gamma \vdash \unitval : \unit
  }{
  }
  \quad
  \infer[\textsc{Tv-Var}]{
    \Gamma \vdash x : A
  }{
    x:A \in \Gamma
  }
  \quad
  \infer[\textsc{Tv-Abs}]{
    \Gamma \vdash \lambda^f x. M : A \to B ; f
  }{
    \Gamma, x : A \vdash_{f} M : B
  }
  \]
  \[
    \infer[\textsc{Tv-Pair}]{
    \Gamma \vdash \pair{V}{W} : A \times B
  }{
    \Gamma \vdash V : A
    &
    \Gamma \vdash W : B
  }
  \]
  \[
  \infer[\textsc{Tv-InjL}]{
    \Gamma \vdash \inl V : A + B
  }{
    \Gamma \vdash V : A
  }
  \quad
  \infer[\textsc{Tv-InjR}]{
    \Gamma \vdash \inr V : A + B
  }{
    \Gamma \vdash V : B
  }
  \]

  \textbf{Computations}
  \[
  \infer[\textsc{Tc-Val}]{
    \Gamma \vdash_{\idmorph_a} \ret_{a} V : A
  }{
    \Gamma \vdash V : A
  }
  \quad
  \infer[\textsc{Tc-Op}]{
    \Gamma \vdash_f \op(V) : Q
  }{
    \Gamma \vdash V : P
    &
    \op \colon \oprtyp{P}{Q}{a \xto{f} b \in \Sigma}
  }
  \]
  \[
  \infer[\textsc{Tc-Let}]{
    \Gamma \vdash_{f ; g} \letin{x}{M}{N} : B
  }{
    \Gamma \vdash_{f : a \to b} M : A
    &
    \Gamma, x : A \vdash_{g : b \to c} N : B
  }
  \quad
  \infer[\textsc{Tc-App}]{
    \Gamma \vdash_f VW : B
  }{
    \Gamma \vdash V : A \to B ; f
    &
    \Gamma \vdash W : A
  }
  \]
  \[
  \infer[\textsc{Tc-Proj}]{
    \Gamma \vdash_f
    \proj{V}{x}{y}{M} : B
  }{
    \Gamma \vdash
    V : A_1 \times A_2
    &
    \Gamma, x : A_1, y : A_2 \vdash_f
    M : B
  }
  \]
  \[
  \infer[\textsc{Tc-Match}]{
    \Gamma \vdash_f
    \match{V}{x}{M_1}{y}{M_2} : B
  }{
    \Gamma \vdash V : A_1 + A_2
    &
    \Gamma, x : A_1 \vdash_f M_1 : B
    &
    \Gamma, y : A_2 \vdash_f M_2 : B
  }
  \]
  \caption{Typing rules.}
  \label{typing-rules}
\end{figure}

\subsection{Handlers}
Handlers for ordinary algebraic theories \cite{PlPr09}
are homomorphisms from a free model for a theory to another one.
We can also construct handlers for category-graded algebraic theories
in a similar way to the ordinary handlers.
Let $G \colon \cats \to \cats'$ be a functor
and $\gat{T} = (\Sigma, E)$, $\gat{T}' = (\Sigma', E')$
be $\cats$-, and $\cats'$-graded algebraic theories, respectively.
For an object $a$ of $\cats$ and sets $X$ and $Y$,
a handler from $\free{\gat{T}}{a}X$ to $(\free{\gat{T}'}{Ga}Y)(G \blank)$ is obtained by
the universality of the free model $\free{\gat{T}}{a}X$.
To obtain the handler,
$(\free{\gat{T}'}{Ga}Y)(G \blank)$ must be a model for $\gat{T}$ at $a$.
So we need the following data:
\begin{itemize}
\item a map $\phi \colon X \to (\free{\gat{T}'}{Ga}Y)(G \idmorph_a)$, and
\item interpretations
  $\interp{(\free{\gat{T}'}{Ga}Y)G}{\sigma}_k \colon
  P \times {(\free{\gat{T}'}{Ga}Y)(Gk)}^Q \to (\free{\gat{T}'}{Ga}Y)(G(f ; k))$
  of operations in $\Sigma$
  for every $\sigma \colon \oprtyp{P}{Q}{c \xto{f} b} \in \Sigma$ and $k \colon b \to a$.
\end{itemize}
satisfying all equations in $E$, that is
the interpretations of terms in $\Term{\Sigma}{f}{X}$ induced by
$\interp{(\free{\gat{T}'}{Ga}Y)G}{\sigma}$
satisfies
$\interp{(\free{\gat{T}'}{Ga}Y)G}{t}=
\interp{(\free{\gat{T}'}{Ga}Y)G}{s}$
for every $t = s \in E$.

Together with the above data,
$(\free{\gat{T}'}{Ga}Y)(G \blank)$ becomes a model for $\gat{T}$ at $a$,
and there is a homomorphism
$\freeex{\phi} \colon \free{\gat{T}}{a}{X} \to (\free{\gat{T}'}{Ga}Y)G$
such that
$\freeex{\phi}_{\idmorph_a} \circ \eta_X = \phi$
by the universal property of free model.
% \[
% \begin{tikzcd}
%   X
%   \ar[r, "\eta_X"]
%   \ar[rd, "\phi"']
%   & (\free{\gat{T}}{a}X) \idmorph_a
%   \ar[d, "\freeex{\phi}_{\idmorph_a}"] \\
%   & (\free{\gat{T}'}{Ga}Y) (G \idmorph_a)
% \end{tikzcd}
% \quad
% \begin{tikzcd}
%   \free{\gat{T}}{a}X
%   \ar[d, dashed, "\freeex{\phi"] \\
%   (\free{\gat{T}'}{Ga}Y)G
% \end{tikzcd}
% \]
We can calculate the homomorphism
$\freeex{\phi} = {\{ \freeex{\phi}_l \}}_{l \colon b \to a}$ as
$\freeex{\phi}_{\idmorph_a}([\pure(a, v)]) = \phi(v)$ and
$\freeex{\phi}_{l \colon b \to a}([\doop(\sigma, p, \lambda i. t_i)])
= \interp{(\free{\gat{T}'}{Ga}Y)G}{\sigma}_{k}(p, \lambda i. \freeex{\phi}_{k}([t_i]))$
% \[
% \begin{array}{rcl}
%   (\free{\gat{T}}{a} X) \idmorph_a
%   & \xrightarrow{\freeex{\phi}_{\idmorph_a}}
%   & (\free{\gat{T}'}{Ga} Y) \idmorph_{Ga} \\
%   \pure(a, v)
%   & \mapsto
%   & \phi(v)
% \end{array},
% \quad
% \begin{array}{rcl}
%   (\free{\gat{T}}{a} X) l
%   & \xrightarrow{\freeex{\phi}_{l \colon b \to a}}
%   & (\free{\gat{T}'}{Ga} Y)(Gl) \\
%   \doop(\sigma, p, \lambda i. t_i)
%   & \mapsto
%   & \interp{(\free{\gat{T}'}{Ga}Y)G}{\sigma}_{k}(p, \lambda i. \freeex{\phi}_{k}(t_i))
% \end{array}
% \]
where $t_i \in (\free{\gat{T}'}{Ga} Y)(Gk)$.

We add syntax and typing rules for handlers as follows.
Note that the following constructions of handlers
don't care about equations of algebraic theories.
Programmers must ensure on their responsibilities that
handlers they are constructing respect proper equations of effects.

\textbf{Additional syntax}.
To construct handlers for \cateff{},
we need data that corresponds to
$\phi \colon X \to  (\free{\gat{T}'}{Ga}Y)(G \idmorph_a)$ and
$\interp{(\free{\gat{T}'}{Ga}Y)G}{\sigma}_k
\colon P \times {(\free{\gat{T}'}{Ga}Y)(Gk)}^Q \to (\free{\gat{T}'}{Ga}Y)(G(f ; k))$
as argued above.
Thus, we extend the syntax of \cateff{} as follows:
\[
\begin{array}{lrcl}
  \text{\textbf{Computations}} & M_{\Sigma'} & ::= &
  \dots \mid \handle{M_{\Sigma}}{H_{\Sigma \Rightarrow \Sigma'}}  \\
  \text{\textbf{Handlers}} &
  H_{\Sigma \Rightarrow \Sigma'} & ::= &
  \{ \ret_b x \mapsto M_{\Sigma'} \} \\
  & & &
  \cup {\{ \op(p), r \mapsto_k {(M_{\Sigma'})}^k_{\op} \}}^{k \colon c \to b}_{
    \op \colon \oprtyp{P}{Q}{d \xto{g} c} \in \Sigma}
\end{array}
\]

\textbf{Additional typing rules}.
We add a new judgement $\Gamma \vdash^G_b H : R \handlerto R'$
where $R$ and $R'$ are primitive types.
This means that
the handler $H$ handles operations in computation of type $R$ and
then produces a computation of type $R'$.
Let $\Delta$ be an environment $x_1 : P_1, \dots, x_n : P_n$.
\[
\infer[\textsc{Tc-Handle}]{
  \Gamma \vdash^{\Sigma'}_{G(f)} \handle{M}{H} : R'
}{
  \Delta \vdash^{\Sigma}_{f \colon a \to b} M : R
  &
  \Gamma \vdash^G_b H : R \handlerto R'
  &
  \Delta \subseteq \Gamma
}
\]
\[
\infer[\textsc{Th-Handler}]{
  \Gamma \vdash^G_b
  \{ \ret_b x \mapsto M \} \cup {\{ \op(p), r \mapsto_k M^k_{\op} \}}^k_{\op \in \Sigma}
  : R \handlerto R'
}{
  \begin{array}{c}
    \Gamma, x : R \vdash^{\Sigma'}_{\idmorph_{Gb}} M : R' \\
   {\left\{
    \Gamma, p : P, r : Q \to R' ; Gk
    \vdash^{\Sigma'}_{G( g ; k)}
    M^k_{\op} : R'
    \right\}}^{k \colon c \to b}_{\op \colon \oprtyp{P}{Q}{d \xto{g} c} \in \Sigma}
  \end{array}
}
\]
Note that the environment $\Delta$ contains only variables typed by primitive types.
If it contained a type $A \to B; f$,
the morphism $f$ is in $\cats$
while the morphism must be in another grading category $\cats'$
after a computation graded by $f$ is handled.
This is impossible in general.

We omit the morphism $k$ in $\op(p), r \mapsto_k M^k_{\op}$
and write simply $\op(p), r \mapsto M_{\op}$
when $M_{\op} = M^k_{\op}$ for all $k$.
This convention is used in Example \ref{ex:mutable-type}.

\section{Operational Semantics} \label{operational-semantics}
To define the operational semantics of \cateff{}, we need some auxiliary notions.
\begin{definition}[Evaluation context]
  Evaluation contexts $\ctx$ and $\ctxf$ are defined by the following BNF:
  \[
  \ctx_{\Sigma} ::= [ \, ]
  \mid \letin{x}{\ctx_{\Sigma}}{M_{\Sigma}}
  \]
  \[
  \ctxf_{\Sigma'} ::= [ \, ]
  \mid \letin{x}{\ctxf_{\Sigma'}}{M_{\Sigma'}}
  \mid \handle{\ctxf_{\Sigma}}{H_{\Sigma \Rightarrow \Sigma'}}
  \]
  We write
  $\Gamma \vdash \ctxf : A \ctxto B; G(b') \xto{f} a$
  if $\Gamma, x : A \vdash_f \ctxf[\ret_{b'} x] : B$ is derived
  where $G$ is a functor that corresponds to handlers in $\ctxf$
  and $x$ does not appear in $\ctxf$ as a free variable.
\end{definition}

\begin{figure}
  \centering
  \[
  \begin{array}{lrcl}
    \textsc{S-App} & (\lambda^f x . M) V & \rto & M[V/x] \\
    \textsc{S-Let} & \letin{x}{\ret_a V}{M} & \rto & M[V/x] \\
    \textsc{S-Proj} & \proj{\langle V_1, V_2 \rangle}{x}{y}{M} & \rto & M[V_1/x, V_2/y] \\
    \textsc{S-MatchLeft} & \match{(\inl V)}{x}{M_1}{y}{M_2} & \rto & M_1[V/x] \\
    \textsc{S-MatchRight} & \match{(\inr V)}{x}{M_1}{y}{M_2} & \rto & M_2[V/y] \\
    \textsc{S-HandleRet} & \handle{(\ret_a V)}{H} & \rto & M[V/x]\\
    \textsc{S-HandleOp} & \handle{\ctx[\op(V)]}{H} & \rto & N \\
    \textsc{S-Lift} & \ctxf[M] & \rto & \ctxf[M'] \quad \text{if $M \rto M'$}
  \end{array}
  \]
  where
  $H = \{ \ret_a x \mapsto M \} \cup {\{ \op(p), r \mapsto_k M^k_{\op} \}}^k_{\op}$
  in \textsc{S-HandleRet} and \textsc{S-HandleOp},
  and
  $N = M^{k\colon c \to b}_{\op}[V/p, \lambda^{Gk} y . \handle{\ctx[\ret_b y]}{H} /r]$
  in \textsc{S-HandleOp}.
  \caption{Small-step operational semantics}
  \label{small-step-operational-semantics}
\end{figure}

\Cref{small-step-operational-semantics} gives
the small-step operational semantics for \cateff{}.
It is based on \cite{HL18}.
The rules of operational semantics are usual except for \textsc{S-HandleOp}.
In \textsc{S-HandleOp}, the grading morphism plays an important role.
Let $\ctx$ be
$\letin{x_n}{ (\dots (\letin{x_1}{[\ ]}{M_1}) \dots) }{M_n}$
and $f_i$ be the grading morphism of $M_i$ for each $i = 1, \dots , n$.
Consider a term $\handle{ \ctx[\op(V)] }{ H }$.
The handler $H$ handles $\op(V)$ with the term $M^{f_1 ; \dots ; f_n}_{\op}$ in $H$.
For example, see Example \ref{ex:handler}.

The goal of the rest of this section is to show the progress lemma and the preservation lemma for \cateff{}.
We start by proving the following lemmas by induction on derivations.
\begin{lemma}[Substitution]
  If $\Gamma, x_1 : A_1, \dots , x_n : A_n \vdash_f M : B$ and
  $\Gamma \vdash V_i : A_i$ for each $i = 1 , \dots, n$,
  then $\Gamma \vdash_f M[V_1 / x_1 , \dots, V_n / x_n] : B$.
\end{lemma}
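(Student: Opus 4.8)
The plan is to prove the Substitution Lemma by a straightforward induction on the derivation of the typing judgement $\Gamma, x_1 : A_1, \dots, x_n : A_n \vdash_f M : B$ (more precisely, a simultaneous induction over value and computation judgements, since the syntax is mutually recursive). Throughout, I abbreviate the simultaneous substitution $[V_1/x_1, \dots, V_n/x_n]$ by $[\vec V/\vec x]$, and I write $\Gamma' = \Gamma, x_1:A_1, \dots, x_n:A_n$. The statement to prove in the value case is: if $\Gamma' \vdash W : A$ and $\Gamma \vdash V_i : A_i$ for all $i$, then $\Gamma \vdash W[\vec V/\vec x] : A$; the computation case is as stated in the lemma.

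First I would dispatch the value rules. For \textsc{Tv-Var}, either the variable is one of the $x_i$, in which case $W[\vec V/\vec x] = V_i$ and the conclusion is exactly the hypothesis $\Gamma \vdash V_i : A_i$, or it is a variable in $\Gamma$, in which case the substitution does nothing and we reuse the premise. \textsc{Tv-Unit} is immediate since $\unitval$ has no free variables. For \textsc{Tv-Pair}, \textsc{Tv-InjL}, \textsc{Tv-InjR} the substitution commutes with the term former and we apply the induction hypothesis to the subterm(s) and reassemble. The one case needing a little care is \textsc{Tv-Abs}: here $W = \lambda^f y : A'. M$ with premise $\Gamma', y : A' \vdash_f M : B'$; by the usual variable convention we may assume $y$ is fresh (not among the $x_i$ and not free in any $V_i$), so $W[\vec V/\vec x] = \lambda^f y : A'. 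M[\vec V/\vec x]$, and we apply the computation induction hypothesis with the extended context $\Gamma, y : A'$ — this uses weakening of the premises $\Gamma \vdash V_i : A_i$ to $\Gamma, y : A' \vdash V_i : A_i$, which is an easy auxiliary fact (admissibility of weakening, itself a trivial induction) that I would either cite as standard or fold in.

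For the computation rules the pattern is the same. \textsc{Tc-Val} reduces to the value case; \textsc{Tc-Op} reduces to the value case for its argument, with the grading morphism $f$ and the operation's type unchanged. \textsc{Tc-Let} splits $M$ as $\letin{y}{M_1}{M_2}$ graded $f_1 ; f_2$: after renaming $y$ fresh, substitution distributes over both subterms, we apply the induction hypothesis to $M_1$ (context $\Gamma'$, grade $f_1$) and to $M_2$ (context $\Gamma', y:A$, grade $f_2$, using weakening of the $V_i$ again), and reapply \textsc{Tc-Let}; the resulting grade $f_1 ; f_2$ matches. \textsc{Tc-App}, \textsc{Tc-Proj}, \textsc{Tc-Match} are analogous, with \textsc{Tc-Proj} and \textsc{Tc-Match} requiring the same fresh-variable/weakening bookkeeping for the newly bound variables. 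The handler former $\handle{M}{H}$ is handled the same way, pushing the substitution into $M$ and into each clause body of $H$ (again renaming the clause-bound variables fresh), and invoking the induction hypothesis on each; since the $V_i$ are typed over $\Gamma$ and the \textsc{Tc-Handle} rule only constrains $\Delta \subseteq \Gamma$, nothing about the grading $G(f)$ or the handler judgement changes.

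The only genuine obstacle — and it is a mild one — is the bookkeeping around bound variables in the binding forms (\textsc{Tv-Abs}, \textsc{Tc-Let}, \textsc{Tc-Proj}, \textsc{Tc-Match}, and the handler clauses): one must invoke $\alpha$-renaming to keep the bound variables disjoint from $\{x_1,\dots,x_n\}$ and from the free variables of the $V_i$ so that the substitution commutes with the binder, and one needs admissibility of weakening to extend the typing of each $V_i$ to the enlarged context. Both are entirely routine, so I would state them once and apply them silently. No new difficulty arises from the grading data: grading morphisms, the object annotations $a$, and the functor $G$ are simply carried through unchanged, since substitution of values never touches them.
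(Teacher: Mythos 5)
Your proposal is correct and matches the paper's approach: the paper proves this lemma by induction on the typing derivation (it states only that and omits the case analysis), and your case-by-case elaboration, including the standard $\alpha$-renaming and weakening bookkeeping for binders, is exactly the routine argument intended.
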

% \begin{proof}
%   By induction on a derivation of
%   $\Gamma, x_1 : A_1, \dots , x_n : A_n \vdash_f M : B$.
% \end{proof}

\begin{lemma}\label{lem:context-substitution}
  If $\Gamma \vdash \ctxf : A \ctxto B; b \xto{f} a$,
  $\Gamma \vdash_{g' \colon c' \to b'} M : A$,
  and $G(b') = b$,
  then $\Gamma \vdash_{G(g') ; f} \ctxf[M] : B$
\end{lemma}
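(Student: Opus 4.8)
I will argue by structural induction on the evaluation context $\ctxf$, at each step inverting the typing derivation of $\ctxf[\ret_{b'}x]$ and then reassembling a derivation of $\ctxf[M]$ whose grade is obtained from that of $\ctxf[\ret_{b'}x]$ by replacing the identity contribution of the hole with $G(g')$. The essential thing to keep track of is how the functor $G$ in the judgement $\Gamma \vdash \ctxf : A \ctxto B; b \xto{f} a$ is built along the structure of $\ctxf$: for $\ctxf = [\,]$ it is an identity functor, for $\ctxf = \letin{x}{\ctxf'}{N}$ it coincides with the functor $G'$ of $\ctxf'$, and for $\ctxf = \handle{\ctxf'}{H}$ with $H$ typed over a functor $G_0$ it is $G_0$ postcomposed with the functor $G'$ of $\ctxf'$. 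In each case the induction hypothesis will be invoked for $\ctxf'$ with its own functor, and functoriality of $G_0$ together with associativity of composition in $\cats$ will line the grades up.

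\textbf{Cases.} For $\ctxf = [\,]$, the hypothesis unfolds to $\Gamma, x : A \vdash_f \ret_{b'}x : B$; inverting \textsc{Tc-Val} forces $A = B$ and $f = \idmorph_{b'}$, so $a = b = b'$, $G$ is the identity, $G(g');f = g'$, and the goal is exactly the hypothesis $\Gamma \vdash_{g'} M : A$. For $\ctxf = \letin{x}{\ctxf'}{N}$, inverting \textsc{Tc-Let} splits $f = f_1;f_2$ with $\Gamma, x : A \vdash_{f_1} \ctxf'[\ret_{b'}x] : A'$ and $\Gamma, x : A' \vdash_{f_2} N : B$ (the hole variable being fresh for $N$); the first judgement is an instance of the context judgement for $\ctxf'$ (functor $G' = G$), so the induction hypothesis gives $\Gamma \vdash_{G(g');f_1} \ctxf'[M] : A'$, and re-applying \textsc{Tc-Let} with $(G(g');f_1);f_2 = G(g');(f_1;f_2) = G(g');f$ yields the claim. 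For $\ctxf = \handle{\ctxf'}{H}$, inverting \textsc{Tc-Handle} gives $f = G_0(f')$ with $\Delta \vdash_{f'} \ctxf'[\ret_{b'}x] : R$, $\Delta \subseteq \Gamma$ (and $x : A \in \Delta$, so $A$ is primitive), and $B = R'$; the inner judgement is an instance of the context judgement for $\ctxf'$ with its functor $G'$, so the induction hypothesis produces $\Delta \vdash_{G'(g');f'} \ctxf'[M] : R$, and re-applying \textsc{Tc-Handle} gives grade $G_0(G'(g');f') = G_0(G'(g'));G_0(f') = (G_0 G')(g');f = G(g');f$ by functoriality of $G_0$, with $\ctxf[M] = \handle{\ctxf'[M]}{H}$ and $B = R'$.

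\textbf{Main obstacle.} The delicate point is the handler case: there the inner computation is typed not in $\Gamma$ but in the primitive-typed sub-environment $\Delta \subseteq \Gamma$ demanded by \textsc{Tc-Handle}, so to invoke the induction hypothesis for $\ctxf'$ one needs $M$ to be typed over $\Delta$ rather than merely over $\Gamma$. One therefore has to observe that the innermost hole of $\ctxf$ is offered exactly the environment $\Delta$ (restricted to primitive types), so that the free variables of the plugged computation $M$ in fact lie in $\Delta$; in the intended applications of the lemma this is automatic, since $M$ arises by decomposing an already well-typed term $\ctxf[M]$ and the decomposition hands back $M$ typed in precisely that environment. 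The remaining, purely administrative, work is to carry the composite functor correctly through the induction so that the identity $G_0(G'(g')) = (G_0 G')(g')$ applies; with these two points settled, every other step is a routine inversion and re-application of the typing rules of \cateff{}.
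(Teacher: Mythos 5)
Your proof is correct and follows the route the paper intends (the paper states only that these lemmas are proved by induction on derivations and omits the details): a structural induction on $\ctxf$, inverting \textsc{Tc-Val}, \textsc{Tc-Let} and \textsc{Tc-Handle}, and recomposing the grades via associativity in $\cats$ and functoriality of the handler functors, with the composite functor $G$ threaded through exactly as you describe. Your remark on the handler case is also apt: for arbitrary $\Gamma$ the statement implicitly requires $M$ to be typeable over the primitive-typed sub-environment $\Delta$ demanded by \textsc{Tc-Handle}, which is automatic in the only place the lemma is invoked (the preservation argument for closed terms, where $\Gamma$ is empty).
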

% \begin{proof}
%   By induction on a derivation of
%   $\Gamma, x : A \vdash_{f} \ctxf[\ret_{b'} x] : B$.
% \end{proof}

\begin{lemma}\label{lem:context-type}
  If $\Gamma \vdash_h \ctxf[M] : B$,
  then we have
  $\Gamma \vdash_{g'} M : A$
  and $\Gamma \vdash \ctxf : A \ctxto B; f$
  satisfying $h = G(g'); f$.
\end{lemma}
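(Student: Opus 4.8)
The plan is to proceed by induction on the structure of the evaluation context $\ctxf$, following the grammar $\ctxf_{\Sigma'} ::= [\,] \mid \letin{x}{\ctxf_{\Sigma'}}{M_{\Sigma'}} \mid \handle{\ctxf_{\Sigma}}{H_{\Sigma \Rightarrow \Sigma'}}$. The statement should be read as: from a derivation of $\Gamma \vdash_h \ctxf[M] : B$ we can read off an intermediate type $A$, an object $b'$ (the ``hole object''), a grade $g' \colon c' \to b'$ with $\Gamma \vdash_{g'} M : A$, and a context judgement $\Gamma \vdash \ctxf : A \ctxto B; G(b') \xto{f} a$, where $G$ is the composite of the functors attached to the handlers in $\ctxf$, such that $h = G(g') ; f$. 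So this lemma is essentially the inverse of \Cref{lem:context-substitution}: it decomposes a typing derivation for a filled context, whereas \Cref{lem:context-substitution} composes one.

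First, for the base case $\ctxf = [\,]$: here $\ctxf[M] = M$, the attached functor $G$ is the identity, and $f = \idmorph_a$, so we take $A = B$, $g' = h$, and the context judgement $\Gamma \vdash [\,] : B \ctxto B; b \xto{\idmorph_b} b$ holds trivially (its defining condition is $\Gamma, x : B \vdash_{\idmorph_b} \ret_b x : B$, which is \textsc{Tc-Val}). Second, for $\ctxf = \letin{x}{\ctxf'}{N}$: a derivation of $\Gamma \vdash_h \letin{x}{\ctxf'[M]}{N} : B$ must end in \textsc{Tc-Let}, so $h = h_1 ; h_2$ with $\Gamma \vdash_{h_1} \ctxf'[M] : A'$ and $\Gamma, x : A' \vdash_{h_2} N : B$. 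Apply the induction hypothesis to $\ctxf'$ and $\Gamma \vdash_{h_1} \ctxf'[M] : A'$, obtaining $A$, $g'$, $f'$ with $\Gamma \vdash_{g'} M : A$, $\Gamma \vdash \ctxf' : A \ctxto A'; G(b') \xto{f'} (\text{dom } h_2)$, and $h_1 = G(g') ; f'$. Then set $f \defeq f' ; h_2$; the context judgement for $\ctxf$ follows from its defining condition together with \textsc{Tc-Let}, and $h = h_1 ; h_2 = G(g') ; f' ; h_2 = G(g') ; f$ as required.

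Third, the handler case $\ctxf = \handle{\ctxf''}{H}$, which I expect to be the main obstacle. A derivation of $\Gamma \vdash_h \handle{\ctxf''[M]}{H} : B$ must end in \textsc{Tc-Handle}, so there is a functor $G''$ (the one named in $H$) and a grade $f'' \colon a'' \to b''$ with $\Delta \vdash^{\Sigma}_{f''} \ctxf''[M] : R$, $\Gamma \vdash^{G''}_{b''} H : R \handlerto B$, $\Delta \subseteq \Gamma$, and $h = G''(f'')$. Apply the induction hypothesis to $\ctxf''$ under the environment $\Delta$: this yields $A$, $g'$, $f'$ with $\Delta \vdash_{g'} M : A$, $\Delta \vdash \ctxf'' : A \ctxto R; G_0(b') \xto{f'} a''$, and $f'' = G_0(g') ; f'$, where $G_0$ is the composite functor of the handlers inside $\ctxf''$. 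Now the functor attached to the whole context $\ctxf$ is $G = G'' \circ G_0$; set $f \defeq G''(f')$. Weakening $\Delta \vdash_{g'} M : A$ to $\Gamma$ (legitimate since $\Delta \subseteq \Gamma$) gives $\Gamma \vdash_{g'} M : A$, and the context judgement $\Gamma \vdash \ctxf : A \ctxto B; G(b') \xto{f} a$ follows by combining the inner context judgement with \textsc{Tc-Handle}. Finally, functoriality of $G''$ gives $h = G''(f'') = G''(G_0(g') ; f') = G''(G_0(g')) ; G''(f') = G(g') ; f$, which closes the case. The delicate points to get right are the bookkeeping of which environment ($\Delta$ vs.\ $\Gamma$) the hole is typed in, and the careful tracking of how the handler functors compose — these are exactly the subtleties that the \textsc{Tc-Handle} and \textsc{Th-Handler} rules were designed to make consistent, so no genuine difficulty should arise, only care.
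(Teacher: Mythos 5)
Your proof is correct and follows essentially the approach the paper intends (the paper only states that these lemmas are proved by induction on derivations, which for $\ctxf[M]$ amounts to your structural induction on $\ctxf$). The one subtlety you flag in the handler case --- that the hole must be typed in a primitive-typed sub-environment $\Delta$ so that \textsc{Tc-Handle} applies to $\ctxf''[\ret_{b'}x]$ --- is inherent to the paper's own definitions of evaluation contexts and is handled adequately by your bookkeeping.
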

% \begin{proof}
%   By induction on a derivation of
%   $\Gamma \vdash_h \ctxf[M] : B$.
% \end{proof}

\begin{lemma}[Progress]\label{lem:progress}
  If $\vdash_{f \colon b \to a} M : A$
  then one of the following holds.
  \begin{enumerate}
  \item $f = \idmorph_a$ and $M = \ret_a V$ for some value term $V$,
  \item $M = \ctx[op(V)]$ for some $\ctx$, $\op$ and $V$, or
  \item there exists a computation term $N$ such that $M \rto N$.
  \end{enumerate}
\end{lemma}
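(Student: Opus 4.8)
The plan is to proceed by induction on the derivation of the typing judgement $\vdash_{f \colon b \to a} M : A$, following the standard progress-lemma pattern for fine-grained call-by-value calculi with handlers, while paying attention to how the grading morphism constrains the shape of a closed value computation. I would go through each typing rule that can conclude a closed computation judgement.

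First the straightforward cases. For \textsc{Tc-Val}, $M = \ret_a V$ and $f = \idmorph_a$, so alternative (1) holds directly. For \textsc{Tc-Op}, $M = \opr(V)$, which is literally $\ctx[\opr(V)]$ with $\ctx = [\,]$, so alternative (2) holds. For \textsc{Tc-App}, the value $V \colon A \to B; f$ is closed, so by a canonical-forms observation it must be a lambda abstraction $\lambda^f x . M'$, and then \textsc{S-App} fires, giving alternative (3). For \textsc{Tc-Proj} and \textsc{Tc-Match}, the scrutinee is a closed value of product (resp. sum) type, hence a pair (resp. an injection) by canonical forms, so \textsc{S-Proj} (resp. \textsc{S-MatchLeft}/\textsc{S-MatchRight}) applies. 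I would state the canonical-forms facts as a small auxiliary observation, proved by inspection of the value typing rules.

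The two interesting cases are \textsc{Tc-Let} and \textsc{Tc-Handle}. For \textsc{Tc-Let}, $M = \letin{x}{M_1}{N}$ with $\vdash_{f_1} M_1 : A_1$; apply the induction hypothesis to $M_1$. If $M_1 \rto M_1'$, then \textsc{S-Lift} with the context $\letin{x}{[\,]}{N}$ gives alternative (3). If $M_1 = \ret V$, then \textsc{S-Let} fires. If $M_1 = \ctx_1[\opr(V)]$, then $M = \ctx[\opr(V)]$ with $\ctx = \letin{x}{\ctx_1}{N}$, giving alternative (2). For \textsc{Tc-Handle}, $M = \handle{M_1}{H}$ with $\vdash_{f_1 \colon a' \to b'} M_1 : R$; apply the induction hypothesis to $M_1$. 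If $M_1 \rto M_1'$, use \textsc{S-Lift}. If $M_1 = \ret V$, use \textsc{S-HandleRet}. If $M_1 = \ctx_1[\opr(V)]$, then \textsc{S-HandleOp} applies, where one reads off the grading morphism $f_1 \colon a' \to b'$ and the handler clause $M^{f_1}_{\opr}$ indexed by it — here \Cref{lem:context-type} is what licenses extracting the grade of the operation-prefixed subterm so that the correct handler clause is selected. This selection is exactly the subtle point, so I would invoke \Cref{lem:context-type} explicitly to justify that \textsc{S-HandleOp} is applicable with a well-defined right-hand side.

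The main obstacle I expect is not any single hard argument but rather the bookkeeping of grading morphisms in the handler case: ensuring that when $M_1 = \ctx_1[\opr(V)]$ the decomposition supplied by \Cref{lem:context-type} yields precisely the morphism that indexes a clause of $H$, and that \textsc{S-HandleOp}'s substitution $M^{k}_{\opr}[V/p,\, \lambda^{Gk} y. \handle{\ctx_1[\ret y]}{H}/r]$ is well-formed. I would also need the canonical-forms lemma for values under the empty context; this is routine since closed values of function, product, and sum type must be lambda abstractions, pairs, and injections respectively. Everything else is a direct case analysis, and no step requires more than unfolding definitions plus one appeal to the induction hypothesis.
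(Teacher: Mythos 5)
Your proof is correct and follows exactly the route the paper sets up: induction on the typing derivation with canonical forms for closed values, using the context-decomposition lemma (\Cref{lem:context-type}) to justify selecting the grade-indexed handler clause in the \textsc{Tc-Handle}/\textsc{S-HandleOp} case. The paper omits the details of its proof of \Cref{lem:progress}, but your case analysis matches the auxiliary lemmas it states immediately beforehand, so no further comparison is needed.
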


\begin{lemma}[Preservation]\label{lem:preservation}
  If $\vdash_{f \colon b \to a} M : A$
  and $M \rto N$ then
  $\vdash_f N : B$ is derivable.
\end{lemma}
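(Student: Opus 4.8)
The plan is to prove the Preservation Lemma by case analysis on the reduction rule used to derive $M \rto N$, in each case inverting the typing derivation of $M$ to read off the types of its subterms, and then re-assembling a typing derivation for $N$ with the same grading morphism $f$. For the pure $\beta$-style rules \textsc{S-App}, \textsc{S-Let}, \textsc{S-Proj}, \textsc{S-MatchLeft}, \textsc{S-MatchRight}, the argument is routine: invert the relevant typing rule (e.g.\ for \textsc{S-Let}, the derivation of $\vdash_{f} \letin{x}{\ret_a V}{M} : B$ must end in \textsc{Tc-Let}, giving $f = \idmorph_a ; g = g$, with $\vdash_{\idmorph_a}\ret_a V : A$ hence $\vdash V : A$ by \textsc{Tc-Val}, and $x:A \vdash_{g} M : B$), then apply the Substitution Lemma to conclude $\vdash_{g} M[V/x] : B$, noting $g = f$ up to the identity law $\idmorph_a ; g = g$ in $\cats$. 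The cases \textsc{S-App}, \textsc{S-Proj}, \textsc{S-MatchLeft}, \textsc{S-MatchRight} are handled the same way, each time using Substitution and, for \textsc{S-App}, also inverting \textsc{Tv-Abs} to match the grading morphism $f$ on the function type with the grade of the body.

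The case \textsc{S-Lift}, $\ctxf[M] \rto \ctxf[M']$ with $M \rto M'$, is where the context lemmas do the work. I would first apply \Cref{lem:context-type} to decompose $\vdash_h \ctxf[M] : B$ as $\vdash_{g'} M : A$ together with $\Gamma \vdash \ctxf : A \ctxto B; f$ with $h = G(g'); f$; then invoke the induction hypothesis on $M \rto M'$ to get $\vdash_{g'} M' : A$ (crucially, with the \emph{same} grade $g'$, which is exactly what the statement of Preservation guarantees); then reassemble via \Cref{lem:context-substitution} to obtain $\vdash_{G(g'); f} \ctxf[M'] : B$, i.e.\ $\vdash_h \ctxf[M'] : B$. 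The only subtlety is that $M \rto M'$ must be an instance of the \emph{non-lift} rules when we recurse — but since \textsc{S-Lift} only fires on a strictly smaller context decomposition (or we can take $\ctxf$ to be the outermost non-trivial context), the induction is well-founded; alternatively one phrases the induction on the derivation of $M \rto N$ directly.

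The genuinely delicate case is \textsc{S-HandleOp}:
\[
\handle{\ctx[\op(V)]}{H} \;\rto\; M^{k}_{\op}[V/p,\; \lambda^{Gk} y.\, \handle{\ctx[\ret_b y]}{H}/r],
\]
where $H = \{\ret_b x \mapsto M\} \cup \{\op(p), r \mapsto_k M^k_{\op}\}$. I would proceed as follows. Invert \textsc{Tc-Handle}: $\Delta \vdash^{\Sigma}_{f' : a \to b} \ctx[\op(V)] : R$, $\Gamma \vdash^G_b H : R \handlerto R'$, and the whole term has grade $G(f')$. Next apply \Cref{lem:context-type} inside $\cats$ (not $\cats'$, since $\ctx$ contains no handlers, so the $G$ there is the identity) to split $\ctx[\op(V)]$ as $\Delta \vdash_{g_1} \op(V) : P'$ and $\Delta \vdash \ctx : P' \ctxto R; \, c \xto{g_2} b$ with $f' = g_1 ; g_2$; inverting \textsc{Tc-Op} gives $\op : \oprtyp{P}{Q}{d \xto{g} c}$, $\Delta \vdash V : P$, and $g_1 = g$, so $c$ here is the codomain of $g$ and $P' = Q$. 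Then invert \textsc{Th-Handler} to extract $\Gamma, p:P, r : Q \to R'; G(g_2) \vdash^{\Sigma'}_{G(g; g_2)} M^{g_2}_{\op} : R'$ (instantiating the handler clause at $k := g_2 : c \to b$). It remains to type the substituted continuation: by \Cref{lem:context-substitution}, $\Delta, y : Q \vdash_{g_2} \ctx[\ret_b y] : R$, hence by \textsc{Tc-Handle} again $\Gamma, y : Q \vdash^{\Sigma'}_{G(g_2)} \handle{\ctx[\ret_b y]}{H} : R'$, hence by \textsc{Tv-Abs} $\Gamma \vdash \lambda^{G(g_2)} y.\,\handle{\ctx[\ret_b y]}{H} : Q \to R'; G(g_2)$, which matches the declared type of $r$ since $Gk = G(g_2)$. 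Now the Substitution Lemma yields $\Gamma \vdash^{\Sigma'}_{G(g; g_2)} N : R'$, and $G(g; g_2) = G(g_1; g_2) = G(f')$ by functoriality of $G$, which is exactly the grade required by \textsc{S-HandleOp}. The case \textsc{S-HandleRet} is the easy companion: invert to get $\vdash V : R$ and $\Gamma, x:R \vdash^{\Sigma'}_{\idmorph_{Gb}} M : R'$, then Substitution gives grade $\idmorph_{Gb} = G(\idmorph_b)$, matching $G(f')$ with $f' = \idmorph_b$.

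The main obstacle is precisely the bookkeeping of grading morphisms across the functor $G$ in the \textsc{S-HandleOp} case: one must check that every composition $g; k$ appearing in the handler clause lines up with the decomposition $f' = g_1; g_2$ of the context, and that $G$ applied to this composite in $\cats$ equals the grade $G(f')$ advertised on the left-hand side — this is where functoriality of $G$ and the associativity/identity laws of $\cats$ and $\cats'$ are silently used. Everything else is a mechanical appeal to the Substitution Lemma and \Cref{lem:context-substitution,lem:context-type}.
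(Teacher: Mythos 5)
Your proof is correct and follows exactly the route the paper intends: the paper omits the proof of Preservation, but the Substitution Lemma and Lemmas \ref{lem:context-substitution} and \ref{lem:context-type} that it states immediately beforehand are precisely the tools you deploy, with the case analysis on the reduction rule and the grade bookkeeping through $G$ in \textsc{S-HandleOp} handled as one would expect. No gaps; your treatment of the handler case, including the re-typing of the captured continuation via \textsc{Tc-Handle} and \textsc{Tv-Abs} and the appeal to functoriality of $G$, is the substantive content the paper leaves implicit.
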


Together with the progress lemma(Lemma \ref{lem:progress}) and preservation lemma(Lemma \ref{lem:preservation}),
we have a safety theorem.
The safety theorem says that
if a computation term is well-typed then the term comes from value ($\ret_a V$) or is about to perform an operation.
\begin{theorem}[Safety]
  If $\vdash_{f \colon b \to a} M : A$ is a terminating term
  then
  there exists a value term $V$ such that $M = \ret_a V$,
  or $M$ calls an operation, that is $M = \ctx[\op(V)]$ for some $\ctx$, $\op \in \Sigma$ and $V$.
\end{theorem}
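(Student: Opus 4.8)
The plan is to derive the Safety Theorem purely from the Progress and Preservation lemmas, by a standard well-typedness-is-preserved-under-reduction argument, so the only real work is combining the two lemmas with the definition of a terminating term. First I would unpack the hypothesis: $M$ is terminating, meaning there is a finite reduction sequence $M = M_0 \rto M_1 \rto \dots \rto M_m$ with $M_m$ not reducible (this is the operative reading of ``terminating term'' in this setting; I would state it explicitly at the start of the proof). By repeated application of Preservation (Lemma~\ref{lem:preservation}) along this sequence, each $M_i$ satisfies $\vdash_{f \colon b \to a} M_i : A$ with the \emph{same} grading morphism $f$ and the same type $A$; in particular the final term $M_m$ is well typed with $\vdash_{f} M_m : A$.

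Next I would apply Progress (Lemma~\ref{lem:progress}) to $M_m$. Progress gives three cases. The third case — that there is $N$ with $M_m \rto N$ — is excluded because $M_m$ is not reducible by assumption. So either $f = \idmorph_a$ and $M_m = \ret_a V$ for some value $V$, or $M_m = \ctx[\op(V)]$ for some evaluation context $\ctx$, operation $\op \in \Sigma$, and value $V$. These are precisely the two alternatives in the statement of the theorem, applied to $M_m$.

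The remaining point is that the theorem is phrased about $M$ itself, not merely about its normal form $M_m$. Here I would observe that the two possible shapes of a normal form are exactly the stuck configurations, and that ``$M$ is a terminating term'' is being used to mean that $M$ \emph{is} in such a normal form (equivalently, the reduction sequence has length $0$); if instead the intended reading allows nontrivial reduction, then the statement should be read as a claim about the value or operation-call that $M$ eventually reaches, and the argument above already delivers that. In either reading the conclusion follows immediately from the combination of iterated Preservation and a single application of Progress, with the reducibility clause of Progress ruled out by normality.

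The main obstacle is essentially bookkeeping rather than mathematics: one must be careful that Preservation genuinely keeps the grading morphism $f$ fixed (it does, by the statement of Lemma~\ref{lem:preservation}), since otherwise the ``$f = \idmorph_a$'' clause in Progress could not be transported back to a statement about the original grade; and one must pin down the precise meaning of ``terminating'' so that the case analysis from Progress lines up exactly with the two disjuncts in the theorem. Once those two points are fixed, no computation is needed — the proof is just ``iterate Preservation, then invoke Progress, then discard the reducible case.''
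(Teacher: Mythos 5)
Your proposal is correct and matches the paper's intent exactly: the paper gives no explicit proof but introduces the theorem with ``Together with the progress lemma and preservation lemma, we have a safety theorem,'' i.e.\ precisely your argument of iterating Preservation along the reduction sequence and then applying Progress to the irreducible term, discarding the reducible case. Your added remark pinning down the reading of ``terminating term'' (normal form versus the term eventually reached) is a fair observation about an ambiguity in the statement and does not affect the validity of the argument.
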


\begin{example}[Handler] \label{ex:handler}
  We present an example of small-step evaluation with handlers.
  Let $\cats$ be a category such that
  $\ob{\cats} = \{c, d, e\}$
  and morphisms are identities and $g \colon c \to d$ and $h \colon d \to e$,
  and $\cats'$ be a category such that
  $\ob{\cats'} = \{ \bullet \}$ and
  the identity is the only morphism.
  There is a unique functor $G \colon \cats \to \cats'$,
  which sends all objects in $\cats$ to $\bullet$ in $\cats'$.
  Let
  $\cats$-signature $\Sigma$ be
  $\{ \op_1 \colon \oprtyp{P}{A}{g}, \op_2 \colon \oprtyp{Q}{B}{h} \}$
  and
  $\cats'$-signature $\Sigma'$ be
  $\emptyset$.
  Given terms
  \[ \begin{array}{c}
  N = \letin{x}{\op_1(V)}{
    \letin{y}{\op_2(W)}{
      \ret_e \pair{x}{y}
    }
  }, \\
  M_{\op_1}^h = rV', \quad
  M_{\op_2}^{\idmorph_e} = rW', \\
  H = \{ \ret_e z \mapsto \ret_{\bullet} z; \
  \op_1(p), r \mapsto_h M_{\op_1}^h; \
  \op_2(p), r \mapsto_{\idmorph_e} M_{\op_2}^{\idmorph_e} \}
  \end{array} \]
  such that
  $\vdash V : P$, $\vdash W : Q$,
  $\vdash V' : A$, $\vdash W' : B$,
  $\vdash^{\Sigma}_{g; h} N : A \times B$
  and
  $\vdash^G_e H : A \times B \handlerto A \times B$.
  By the handler $H$,
  $\op_1$ and $\op_2$ are implemented as constant operations
  that always return values $V'$ and $W'$ for any arguments, respectively.
  By \textsc{Tc-Handle}, we have a judgement
  $\vdash^{\Sigma'}_{\idmorph_{\bullet}} \handle{N}{H} : A \times B$.
  Therefore, the term $\handle{N}{H}$ is evaluated to the form
  $\ret_{\bullet} U$ by the safety theorem.
  Indeed, $\handle{N}{H}$ is evaluated as follows:
  \[
  \begin{split}
    & \handle{N}{H} \\
    % & \rto M^h_{\op_1}[V / p, \lambda^{Gh} v. \handle{
    %     \letin{x}{\ret_d v}{
    %       \letin{y}{\op_2(W)}{
    %         \ret_e \pair{x}{y}
    %       }
    %     }
    %   }{H} / r] \\
    % & = (
    & \rto (
    \lambda^{Gh} v. \handle{
      (\letin{x}{\ret_d v}{
        \letin{y}{\op_2(W)}{
          \ret_e \pair{x}{y}
        }
      })
    }{H} ) V' \\
%    & \rto
%    \handle{
%      \letin{x}{\ret_d V'}{
%        \letin{y}{\op_2(W)}{
%          \ret_e \pair{x}{y}
%        }
%      }
%    }{H} \\
    & \rto^{*}
    \handle{
      (\letin{y}{\op_2(W)}{
        \ret_e \pair{V'}{y}
      })
    }{H} \\
    & \rto M^h_{\op_2}[W / p, \lambda^{G\idmorph_e} w.
      \handle{
        \letin{y}{\ret_e w}{
          \ret_e \pair{V'}{y}
        }
      }{H} / r] \\
    & =
    (\lambda^{G\idmorph_e} w.
    \handle{
      \letin{y}{\ret_e w}{
        \ret_e \pair{a}{y}
      }
    }{H}) W'
%    & \rto
%    \handle{
%      \letin{y}{\ret_e W'}{
%        \ret_e \pair{V'}{y}
%      }
%    }{H} \\
%    & \rto
%    \handle{
%      \ret_e \pair{V'}{W'}
%    }{H} \\
    \rto^{*}
    \ret_{\bullet} \pair{V'}{W'}.
  \end{split}
  \]
  We have
  $\vdash^{\Sigma'}_{\idmorph_{\bullet}} \ret_{\bullet} \pair{V'}{W'} : A \times B$.
\end{example}

\begin{example}[Mutable store of mutable type with a plan]\label{ex:mutable-type}
  We can make a program with a mutable store of mutable type with a mutation plan.
  Let $A$ and $B$ be primitive types,
  $V$ and $W$ are value terms such that $\vdash V : A$ and $\vdash W : B$.
  Let $\cats$ be a category such that
  $\ob{\cats} = \{\unit, A, B \}$ and
  morphisms are generated by
  $f^{\alpha}_{\beta} \colon \alpha \to \beta$ for
  $\alpha, \beta \in \ob{\cats}$
  and $\cats'$ be a category such that
  $\ob{\cats'} = \{\unit +  (A + B) \}$ and
  morphism is only identity.
  There is a unique functor $G \colon \cats \to \cats'$,
  which sends all objects in $\cats$ to $\unit +  (A + B)$ in $\cats'$.
  Intuitively, objects are possible types of mutable store
  and morphisms are plans of mutations of types of the store.
  Let $\Sigma$ be an $\cats$-signature
  ${\{ \update{\alpha}{\beta} \}}_{\alpha, \beta \in \ob{\cats}} \cup {\{ \lookup{\alpha} \}}_{\alpha \in \ob{\cats}}$
  where the type of $\update{\alpha}{\beta} \in \Sigma$ is $\oprtyp{\beta}{\unit}{f^{\alpha}_{\beta} \colon \alpha \to \beta}$
  and the type of $\lookup{\alpha}$ is $\oprtyp{\unit}{\alpha}{\idmorph_{\alpha}}$ for each $\alpha, \beta \in \ob{\cats}$.
  Let $\Sigma'$ be an $\cats'$-signature
  $\{ \update{}{}, \lookup{}{} \}$
  where the type of $\update{}{}$ is $\oprtyp{\unit + (A + B)}{\unit}{\idmorph}$
  and the type of $\lookup{}{}$ is $\oprtyp{\unit}{\unit + (A + B)}{\idmorph}$.

  We can implement the behavior of mutable types by the following handlers:
  $H_{\unit} =
  \{ \ret_{\unit} x \mapsto \ret_{\unit + (A + B)} \inl x\} \cup H_{\op}$,
  $H_{A} =
  \{ \ret_{A} x \mapsto \ret_{\unit + (A + B)} \inr (\inl x) \} \cup H_{\op}$ and
  $H_{B} =
  \{ \ret_{B} x \mapsto \ret_{\unit + (A + B)} \inr (\inr x) \} \cup H_{\op}$ where
  \[
  \begin{split}
    H_{\op} = \{
    & \update{\unit}{\unit}(\_), r \mapsto
    \letin{\_}{\update{}{}(\inl \unitval)}{r \unitval} \\
    & \vdots \\
    % & \update{\unit}{A}(a), r \mapsto
    % \letin{\_}{\update{}{}( \inr (\inl a) )}{r \unitval} \\
    % & \update{\unit}{B}(b), r \mapsto
    % \letin{\_}{\update{}{}( \inr (\inr b)}{r \unitval} \\
    % & \update{A}{\unit}(\_), r \mapsto
    % \letin{\_}{\update{}{}(\inl \unitval)}{r \unitval} \\
    % & \update{A}{A}(a), r \mapsto
    % \letin{\_}{\update{}{}( \inr (\inl a) )}{r \unitval} \\
    % & \update{A}{B}(b), r \mapsto
    % \letin{\_}{\update{}{}( \inr (\inr b)}{r \unitval} \\
    % & \update{B}{\unit}(\_), r \mapsto
    % \letin{\_}{\update{}{}(\inl \unitval)}{r \unitval} \\
    % & \update{B}{A}(a), r \mapsto
    % \letin{\_}{\update{}{}( \inr (\inl a) )}{r \unitval} \\
    & \update{B}{B}(b), r \mapsto
    \letin{\_}{\update{}{}( \inr (\inr b))}{r \unitval} \\
    & \lookup{\unit}{}(\_), r \mapsto
    \lt x \bind \lookup{}(\unitval) \lin \\
    & \hspace{2.3cm} \mc x \{
    \inl y . r \unitval;
    \inr y . \mc y \{ \inl a. r \unitval; \inr b . r \unitval \} \} \\
    & \lookup{A}{}(\_), r \mapsto
    \lt x \bind \lookup{}(\unitval) \lin \\
    & \hspace{2.3cm} \mc x \{
    \inl y . r V;
    \inr y . \mc y \{ \inl a. r a; \inr b . r V \} \} \\
    & \lookup{B}{}(\_), r \mapsto
    \lt x \bind \lookup{}(\unitval) \lin \\
    & \hspace{2.3cm} \mc x \{
    \inl y . r W;
    \inr y . \mc y \{ \inl a. r W; \inr b . r b \} \} \\
  \end{split}
  \]
  We have judgements
  $\vdash^G_{\alpha} H_{\alpha} : \alpha \handlerto \unit + (A+B)$
  for all $\alpha \in \ob(\cats)$.
  % $\vdash^G_{\unit} H_{\unit} : \unit \handlerto \unit + (A+B)$,
  % $\vdash^G_{A} H_{A} : A \handlerto \unit + (A+B)$ and
  % $\vdash^G_{B} H_{B} : B \handlerto \unit + (A+B)$.
  Let us consider the term
  \[
  N = \lt \_ \bind \update{1}{A}(V) \lin
    \lt \_ \bind \update{A}{B}{W} \lin
    \lt x \bind \lookup{B}(\unitval) \lin
    \ret_B x
  \]
%  \[
%  N = \letin{\_}{\update{1}{A}(V)}{
%    \letin{\_}{\update{A}{B}{W}}{
%      \letin{x}{\lookup{B}(\unitval)}{\ret_B x}
%    }
%  }
%  \]
  We have a judgement $\vdash^{\Sigma}_{ f^{\unit}_A ; f^A_B } N : B$.
  The grading morphism $f^{\unit}_A ; f^A_B : \unit \to A \to B$
  implies that
  the program $N$ stores a value of type $A$
  and then stores a value of type $B$.
  Handling operations in $\Sigma$ by $H_{B}$,
  a mutable store of mutable type with a plan
  is transformed to a mutable store of fixed type $\unit + (A + B)$.
  We have
  $\vdash^{\Sigma'}_{\idmorph_{\unit + (A + B)}} \handle{N}{H_B} : \unit + (A + B)$.
\end{example}

\section{Denotational Semantics} \label{denotational-semantics}
In this section, we give denotational semantics for \cateff{}.
The denotational semantics is based on \cite{BP14}.
Let $\Sigma$, $\Sigma'$ be $\cats$- and $\cats'$-signatures, respectively.
For the sake of simplicity, we work in $\sets$.
To interpret computation terms that return a value in $X$,
we use the sets
$\Term{\Sigma}{f \colon b \to a}{X} = U_b \GATMod{\Sigma}{f}{\sets} F_a X$
defined in \cref{category-graded-algebraic-theory}.
We don't care about equations and so consider free models without any equations.
Each computational term is interpreted by a term of a category-graded algebraic theory.
We can think of elements of $\Term{\Sigma}{f \colon b \to a}{X}$
as trees whose leaves are labelled by elements of $X$,
and internal nodes are labelled by operations and their arguments.
For example,
$\pure(a, v) \in \Term{\Sigma}{\idmorph_a \colon a \to a}{X}$
is a tree with only one node labelled by $v$.
Note that, for any tree $t \in \Term{\Sigma}{f}{X}$,
we obtain $f$ by composing morphisms of label $\op$ of nodes
in a path from the root to a leaf.

\begin{definition}
  We define the interpretation of types $\denote{A} \in \sets$ as follows:
  \[ \begin{array}{cc}
  \denote{\unit} = \{ \star \}, \quad
  \denote{A \to B; b \xto{f} a} = \Term{\Sigma}{f}{\denote{B}}^{\denote{A}}, \\
  \denote{A + B} = \denote{A} \sqcup \denote{B}, \quad
  \denote{A \times B} = \denote{A} \times \denote{B}.
  \end{array} \]
  For a context $\Gamma = x_1 : A_1, \dots, x_n : A_n$,
  we define
  $\denote{\Gamma} = \denote{A_1} \times \dots \times \denote{A_n}$.
\end{definition}

We interpret
value $\Gamma \vdash V : A$,
computation $\Gamma \vdash_f M : A$
and handler $\Gamma \vdash^G_a H : R \handlerto R'$
as maps
$\denote{\Gamma \vdash V : A} \colon \denote{\Gamma} \to \denote{A}$,
$\denote{\Gamma \vdash_f M : A} \colon \denote{\Gamma} \to \Term{\Sigma}{f}{\denote{A}}$
and
$\denote{\Gamma \vdash^G_a H : R \handlerto R'}
\colon
\denote{\Gamma} \to {\Term{\Sigma'}{Gf}{\denote{R'}}}^{\Term{\Sigma}{f}{\denote{R}}}$,
respectively.
We write the lift of a morphism
$\phi \colon X \to \Term{\Sigma}{g}{Y}$ as
$\phi^{\dagger_f} \defeq \mu_{f,g} (\Term{\Sigma}{f}{\phi})$
and often simply write $\phi^{\dagger}$ when $f$ is clear from the context.

\begin{definition}
  Let $\Gamma$ be a typing context.
  Given $s \in \denote{\Gamma}$,
  we define the interpretation of terms as follows:
  \[ \begin{array}{c}
    \denote{\Gamma \vdash \unitval : \unit}s = \star, \quad
    \denote{\Gamma \vdash \pair{V_1}{V_2} : A \times B}s
    = \pair{\denote{V_1}s}{\denote{V_2}s}, \\
    \denote{\Gamma \vdash \inl{V} : A + B}s
    = \inj_1 {\denote{V}s}, \quad
    \denote{\Gamma \vdash \inr{V} : A + B}s
    = \inj_2 {\denote{V}s}, \\
    \denote{\Gamma, x : A \vdash x : A}s = \pi_x(s), \quad
    \denote{\Gamma \vdash \lambda^f x. M : A \to B; f}s
    = \denote{M}(s, \blank),
  \end{array} \]
  \[ \begin{array}{c}
    \denote{\Gamma \vdash_{\idmorph_a} \ret_a V : A}s
    = \pure(a, \denote{V}s), \quad
    \denote{\Gamma \vdash_f VW : B}s
    = (\denote{V}s)(\denote{W}s),
    \\
    \denote{\Gamma \vdash_f \op(V) : A}s
    = \doop(\op, \denote{V}s, {\{\pure(a, x)\}}_{x \in \denote{A}}),
    \\
    \denote{\Gamma \vdash_{f ; g} \letin{x}{M}{N} : B}s
    = {(\denote{N}(s, \blank))}^{\dagger} (\denote{M}s),
    \\
    \denote{\Gamma \vdash_f \proj{V}{x}{y}{M} : B}s
    = \denote{M}(s, \pi_1 (\denote{V}s), \pi_2 (\denote{V}s)),
    \\
    \denote{\Gamma \vdash_f \match{V}{x}{y}{M_1}{M_2} : B}s
    = [\denote{M_1}(s, \blank), \denote{M_2}(s, \blank)](\denote{V}s),
    \\
    \denote{\Gamma \vdash^{\Sigma'}_{Gf} \handle{M}{H} : R'}s
    = (\denote{H}s)(\denote{M}s),
  \end{array} \]
  \[
  \denote{\Gamma \vdash^G_a H : R \handlerto R'}s
  =
  \left\{
  \begin{array}{lcl}
    \pure(a, v) & \mapsto & \denote{M}(s, v) \\
    \doop(\op, p, {\{t_i\}}_i) & \mapsto & \denote{M^k_{\op}}(s, p, {\{\denote{H}(s, t_i)\}}_i)
  \end{array}
  \right.
  \]
  where
  $H = \{ \ret_a x \mapsto M \} \cup {\{\op(p), r \mapsto M^k_{\op} \}}^k_{\op \in \Sigma}$.
\end{definition}

\section{Soundness and Adequacy} \label{soundness-and-adequacy}
We show soundness and adequacy.
These theorems assert that
operational semantics and denotational semantics are compatible with each other.

\begin{theorem}[Soundness]
  If $\vdash_f M : A$ and $M \rto M'$
  then $\denote{\vdash_f M : A} = \denote{\vdash_f M' : A}$.
\end{theorem}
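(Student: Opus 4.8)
The plan is to induct on the derivation of $M \rto M'$, i.e.\ to do a case analysis on the reduction rule applied, unfolding the denotational clauses of the previous section on both sides of each case; throughout we silently use the preservation lemma (Lemma~\ref{lem:preservation}) so that $\vdash_f M' : A$ makes sense. Two auxiliary results carry most of the weight. The first is a \emph{denotational substitution lemma}: if $\Gamma, x_1 : A_1, \dots, x_n : A_n \vdash_f M : B$ and $\Gamma \vdash V_i : A_i$ for each $i$, then $\denote{\Gamma \vdash_f M[V_1/x_1, \dots, V_n/x_n] : B}s = \denote{\Gamma, x_1 : A_1, \dots, x_n : A_n \vdash_f M : B}(s, \denote{V_1}s, \dots, \denote{V_n}s)$ for every $s \in \denote{\Gamma}$, proved by induction on $M$; it is needed because every $\beta$-style rule produces a substitution. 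The second is a \emph{context-compatibility lemma}: if $\denote{\Gamma \vdash_{g'} M : A} = \denote{\Gamma \vdash_{g'} M' : A}$ as maps and $\ctxf$ is an evaluation context whose hole matches this interface (cf.\ Lemma~\ref{lem:context-substitution}), then $\denote{\ctxf[M]} = \denote{\ctxf[M']}$; this follows by structural induction on $\ctxf$, since the denotational clauses for $\letin{x}{[\,]}{N}$ and $\handle{[\,]}{H}$ have the shapes $t \mapsto (\denote{N}(s, \blank))^{\dagger}(t)$ and $t \mapsto (\denote{H}s)(t)$, which depend on the hole only through its denotation.

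With these in hand, the rules \textsc{S-App}, \textsc{S-Let}, \textsc{S-Proj}, \textsc{S-MatchLeft}, \textsc{S-MatchRight}, and \textsc{S-HandleRet} are routine: expand both sides by the denotational clauses and apply the substitution lemma. For example, \textsc{S-HandleRet} gives $\denote{\handle{\ret_a V}{H}}s = (\denote{H}s)(\pure(a, \denote{V}s)) = \denote{M}(s, \denote{V}s) = \denote{M[V/x]}s$. The rule \textsc{S-Lift}, $\ctxf[M] \rto \ctxf[M']$ with $M \rto M'$, is immediate from the induction hypothesis $\denote{M} = \denote{M'}$ and the context-compatibility lemma.

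The case that needs real work is \textsc{S-HandleOp}: $\handle{\ctx[\op(V)]}{H} \rto M^{k}_{\op}[V/p,\ \lambda^{Gk} y. \handle{\ctx[\ret_b y]}{H}/r]$, where $\ctx = \letin{x_n}{(\dots \letin{x_1}{[\,]}{M_1} \dots)}{M_n}$ and $k = f_1 ; \dots ; f_n$ is the composite of the grades of $M_1, \dots, M_n$. Here I would first prove an \emph{evaluation-context decomposition lemma} for pure (let-only) contexts: if $\op \colon \oprtyp{P}{Q}{g}$ and $\Gamma \vdash_{g ; k} \ctx[\op(V)] : R$, then
\[ \denote{\ctx[\op(V)]}s = \doop(\op,\ \denote{V}s,\ {\{ \denote{\ctx[\ret_b y]}s \}}_{y \in \denote{Q}}). \]
This is proved by induction on $\ctx$, unwinding the definition of $(\blank)^{\dagger}$ (which is built from the multiplication $\mu$ of the category-graded monad): the root $\op$-node coming from $\denote{\op(V)}s$ is not touched by the liftings induced by the surrounding $\lt$-bindings, and each leaf labelled $y$ is replaced by the image of $\pure(b, y)$ under those liftings, which is exactly $\denote{\ctx[\ret_b y]}s$. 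Applying the handler's operation clause to this tree yields $(\denote{H}s)(\denote{\ctx[\op(V)]}s) = \denote{M^{k}_{\op}}(s,\ \denote{V}s,\ {\{ (\denote{H}s)(\denote{\ctx[\ret_b y]}s) \}}_{y})$, and since $(\denote{H}s)(\denote{\ctx[\ret_b y]}s) = \denote{\handle{\ctx[\ret_b y]}{H}}s$ and $\denote{\lambda^{Gk} y. \handle{\ctx[\ret_b y]}{H}}s = (y \mapsto \denote{\handle{\ctx[\ret_b y]}{H}}s)$, this equals $\denote{M^{k}_{\op}}(s,\ \denote{V}s,\ \denote{\lambda^{Gk} y. \handle{\ctx[\ret_b y]}{H}}s)$, which by the substitution lemma applied to the variables $p$ and $r$ is exactly $\denote{N}s$ for the reduct $N$.

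The main obstacle is precisely this decomposition lemma and the grading bookkeeping around it: one must check that the index $k = f_1 ; \dots ; f_n$ selected on the operational side is the same $k$ that appears in $\denote{M^{k}_{\op}}$ when $\denote{H}$ descends through the tree $\denote{\ctx[\op(V)]}s$, and verify the commutation with $\mu$ at every nesting level of $\lt$. Everything else reduces to direct computation with the clauses of the denotational semantics.
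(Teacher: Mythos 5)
The paper omits the proof of soundness entirely, and your proposal supplies exactly the standard argument its semantics (following Bauer--Pretnar, the paper's stated basis) presupposes: induction on the reduction rule, a denotational substitution lemma, compatibility with evaluation contexts, and the tree-decomposition $\denote{\ctx[\op(V)]}s = \doop(\op,\denote{V}s,\{\denote{\ctx[\ret_b y]}s\}_y)$ for the \textsc{S-HandleOp} case. Your identification of the grading bookkeeping around $k = f_1;\dots;f_n$ as the delicate point is also correct, and it is discharged by Lemma~\ref{lem:context-type}, which guarantees that the clause $M^k_{\op}$ selected operationally agrees with the index at which $\denote{H}$ descends through the term tree.
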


To show adequacy(Theorem \ref{thm:adequacy}),
we define relations $\apx_A$ for values and $\apx^f_A$ for computations
as done in \cite{BP14}.

\begin{definition}
  For $v \in \denote{A}$ and a closed value term $\vdash V : A$,
  we define $v \apx_A V$ as follows:
  \begin{itemize}
  \item $v \apx_{\unit} V$ if $v = \unit$ and $V = \unitval$.
  \item $v \apx_{A \times B} V$
    if $v = \pair{v_1}{v_2}$, $V = \pair{V_1}{V_2}$,
    $v_1 \apx_A V_1$ and $v_2 \apx_B V_2$.
  \item $v \apx_{A + B} V$ if either
    $v = \inj_1 v_1$, $V = \inl V_1$ and $v_1 \apx_A V_1$, or
    $v = \inj_2 v_2$, $V = \inr V_2$ and $v_2 \apx_B V_2$.
  \item $v \apx_{A \to B;f} V$
    if $v(w) \apx^f_B VW$
    for each $w \in \denote{A}$ and closed value $\vdash W : A$ satisfying $w \apx_A W$.
  \end{itemize}
  Simultaneously,
  for $c \in \Term{\Sigma}{f}{\denote{A}}$
  and a closed computation term $\vdash_f M : A$,
  $c \apx^f_A M$ holds if
  \begin{enumerate}
  \item $f = \idmorph_a$, $c = \e(a, v)$, $M \rto^* \ret_a V$ and $v \apx_A V$, or
  \item $c = \doop(\sigma, v, {\{t_x\}}_{x \in \denote{C}})$,
    $M \rto^* \ctx[\op(V)]$,
    $v \apx_P V$,
    and if $w \apx_C W$ then $t_w \apx^k_A \ctx[\ret_{b} W]$.
  \end{enumerate}
\end{definition}

\begin{lemma}\label{lem:apx-rev}
  If $c \apx^f_A M'$ and $M \rto M'$ then $c \apx^f_A M$.
\end{lemma}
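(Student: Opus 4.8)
The plan is to unfold the definition of the computation relation $\apx^f_A$ and to observe that it is stated entirely in terms of the syntactic shape of $c$ together with the multi-step reductions $M \rto^{*} (-)$ that $M$ admits: the term $M$ never occurs to the right of a $\apx$ except through such reduction sequences, and every recursive use of $\apx$ refers to the immediate subdata of $c$ (namely $v$ and the subtrees $t_x$) or to contexts and values ($\ctx[\ret_b W]$) extracted from $M$'s reductions. Consequently this is a direct case analysis on the form of $c \in \Term{\Sigma}{f}{\denote{A}}$, which is either $\pure(a, v)$ or $\doop(\sigma, v, {\{t_x\}}_x)$; no induction and no appeal to the simultaneous recursion defining $\apx_A$ is needed, since the relevant arguments of $\apx$ are left untouched when we prepend the step $M \rto M'$.

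In the case $c = \pure(a, v)$, from $c \apx^f_A M'$ we get $f = \idmorph_a$ and a closed value $\vdash V : A$ with $M' \rto^{*} \ret_a V$ and $v \apx_A V$; prepending $M \rto M'$ gives $M \rto^{*} \ret_a V$, and the remaining conditions $f = \idmorph_a$ and $v \apx_A V$ do not mention $M$, so $c \apx^f_A M$ by clause (1). In the case $c = \doop(\sigma, v, {\{t_x\}}_{x \in \denote{C}})$ (with $\sigma$ of parameter $P$ and arity $C$, and grade $k$), from $c \apx^f_A M'$ we obtain an evaluation context $\ctx$ and a closed value $\vdash V : P$ with $M' \rto^{*} \ctx[\op(V)]$, together with $v \apx_P V$ and the property that $t_w \apx^{k}_A \ctx[\ret_b W]$ whenever $w \apx_C W$; again $M \rto M' \rto^{*} \ctx[\op(V)]$ yields $M \rto^{*} \ctx[\op(V)]$, while the other conditions refer only to $\ctx$, $v$, and the $t_w$, which are unchanged, so $c \apx^f_A M$ by clause (2). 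Uniformly, both cases rest on the inclusion $\{ L : M' \rto^{*} L \} \subseteq \{ L : M \rto^{*} L \}$, so determinism of $\rto$ is not even required.

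The main obstacle: there is essentially none of substance — this is the routine backward-closure (anti-reduction) lemma for the logical relation, and the only thing worth stating explicitly in the write-up is that, unlike the fundamental lemma behind adequacy, it proceeds by a plain case split on $c$ rather than by the simultaneous induction on types, precisely because prefixing a reduction step to $M$ affects neither the witnessing contexts and values produced by $M$'s reductions nor the immediate subterms of $c$ at which $\apx$ is re-invoked.
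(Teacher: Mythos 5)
Your proof is correct and matches the paper's own argument: both proceed by the same two-case analysis on the clauses defining $c \apx^f_A M'$, observing that $M$ enters the definition only through the reduction sequences $M \rto^{*}(-)$, so prepending the step $M \rto M'$ preserves each clause while all remaining conditions are untouched. The extra remarks about not needing induction or determinism are accurate but not essential.
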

\begin{proof}
  By assumption $c \apx^f_A M'$, we have two cases:
  \begin{enumerate}
  \item $f = \idmorph_a$, $c = \e(a, v)$, $M' \rto^* \ret_a V$ and $v \apx_A V$.
    In this case, we have $M \rto^* \ret_a V$.
  \item $c = \doop(\sigma, v, {\{t_x\}}_{x \in \denote{C}})$,
    $M' \rto^* \ctx[\op(V)]$,
    $v \apx_P V$,
    and if $w \apx_C W$ then $t_w \apx^k_A \ctx[\ret_{b} W]$.
    In this case, we have $M \rto^* \ctx[\op(V)]$.
  \end{enumerate}
  In both cases, we can conclude $c \apx^f_A M$ as required.
\end{proof}

\begin{lemma}\label{lem:denotation-apx-term}
  Let $\Gamma$ be a typing context $x_1 : A_1, \dots , x_n : A_n$, and
  $\vdash W_i : A_i$ be a closed value term
  and $w_i$ be an element of $\denote{A_i}$
  with $w_i \apx_{A_i} W_i$
  for each $1 \le i \le n$.
  Then followings hold.
  \begin{enumerate}
  \item If $\Gamma \vdash V : A$
    then $\denote{V}(w_1, \dots, w_n) \apx_A V[W_1/x_1, \dots, W_n/x_n]$.
  \item If $\Gamma \vdash_f M : A$
    then $\denote{M}(w_1, \dots, w_n) \apx^f_A M[W_1/x_1, \dots, W_n/x_n]$.
  \end{enumerate}
\end{lemma}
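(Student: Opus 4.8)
The plan is to prove parts~1 and~2 simultaneously by induction on the typing derivation of $V$, respectively of $M$, reading each statement as universally quantified over all closing substitutions $[W_1/x_1,\dots,W_n/x_n]$ with $w_i \apx_{A_i} W_i$. The value cases are routine: for \textsc{Tv-Var} the claim is the hypothesis $w_i \apx_{A_i} W_i$; for \textsc{Tv-Unit}, \textsc{Tv-Pair}, \textsc{Tv-InjL}, \textsc{Tv-InjR} one unfolds $\denote{\blank}$ and the relevant clause of $\apx$ and applies the induction hypothesis componentwise; for \textsc{Tv-Abs} one unfolds $\apx_{A \to B; f}$, applies the induction hypothesis for the body $M$ in the extended context $\Gamma, x : A$, and then transports the resulting $\apx^f_B$ backwards along the reduction $(\lambda^f x. M')W \rto M'[W/x]$ by \textsc{S-App} and \Cref{lem:apx-rev} (which I would first extend to $\rto^*$ by induction on the length of the reduction). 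The easy computation cases \textsc{Tc-Val}, \textsc{Tc-Op}, \textsc{Tc-App}, \textsc{Tc-Proj}, \textsc{Tc-Match} follow the same recipe: unfold the denotation, apply the induction hypothesis to the immediate subterms, and, where a reduction intervenes (\textsc{S-Proj}, \textsc{S-MatchLeft}, \textsc{S-MatchRight}), pull $\apx$ back with \Cref{lem:apx-rev}; \textsc{Tc-App} appeals directly to the function clause of $\apx$, and \textsc{Tc-Op} directly to clause~2 with the empty evaluation context $[\,]$ (using clause~1 for its continuations).

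The substantive cases are \textsc{Tc-Let} and \textsc{Tc-Handle}, and for these I would first isolate two auxiliary lemmas describing how $\apx$ interacts with the graded-monad bind $(\blank)^{\dagger}$ and with the denotation of a handler. The bind lemma: if $c \apx^f_A M$ and $k \colon \denote{A} \to \Term{\Sigma}{g}{\denote{B}}$ satisfies $k(v) \apx^g_B N[V/x]$ whenever $v \apx_A V$, then $k^{\dagger_f}(c) \apx^{f;g}_B \letin{x}{M}{N}$. The handler lemma: if $c \apx^f_R M$ and $h$ is the denotation of a handler $H$ whose return clause and operation clauses are, in the evident sense, $\apx$-related to the syntactic clauses of $H$ --- which is exactly what the main induction hypothesis delivers for the premises of \textsc{Th-Handler} --- then $h(c) \apx^{Gf}_{R'} \handle{M}{H}$. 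Each auxiliary lemma is proved by induction on the tree $c$, which is well founded because $\Term{\Sigma}{f}{\blank}$ is an inductive definition. In the base case $c = \pure(a, v)$ (so $f = \idmorph_a$) one uses $M \rto^* \ret_a V$, the unit law of the graded monad --- by which $k^{\dagger_{\idmorph_a}}$, respectively $h$, collapses to the let/return continuation --- and \Cref{lem:apx-rev} applied along $\letin{x}{M}{N} \rto^* N[V/x]$, respectively $\handle{M}{H} \rto^* \handle{\ret_a V}{H} \rto M_{\mathrm{ret}}[V/x]$, lifting each step of $M \rto^* \ret_a V$ through the surrounding evaluation context by \textsc{S-Lift}. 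In the step case $c = \doop(\sigma, v, \{t_x\}_x)$ one uses that $(\blank)^{\dagger}$, respectively the handler denotation, recurses through operation nodes, that $\letin{x}{\ctx[\op(V)]}{N}$, respectively $\handle{\ctx[\op(V)]}{H}$, again matches clause~2 of $\apx$ (after one \textsc{S-HandleOp} step in the handler case), and the induction hypothesis on each proper subtree $t_x$; in the handler case one additionally repackages the family $x \mapsto h(t_x)$ as a value $\apx_{Q \to R'; Gk}$-related to $\lambda^{Gk} y. \handle{\ctx[\ret_b y]}{H}$, again via \textsc{S-App} and \Cref{lem:apx-rev}. With these lemmas in hand, \textsc{Tc-Let} and \textsc{Tc-Handle} follow by applying the main induction hypothesis to the sub-derivations (for $M$, for $N$, and for the handler clauses, the latter supplying precisely the hypothesis of the handler lemma) and feeding the results into the corresponding auxiliary lemma, so there is no circularity between the two layers of induction.

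I expect the handler case --- equivalently, the handler auxiliary lemma --- to be the main obstacle, for three reasons. First, the grading bookkeeping: a subtree $t_x$ carries a grade $k$, the handler denotation recurses with $\denote{M^k_{\op}}$ at grade $G(g; k)$, and \textsc{S-HandleOp} wraps the residual continuation as $\lambda^{Gk} y.(\blank)$, so every composite in $\cats$ and its $G$-image must match the indices occurring in the definition of $\apx^f_R$ and in \textsc{Th-Handler}. Second, in the \textsc{S-HandleOp} reduct one must identify the correct new evaluation context --- namely $\letin{x}{\ctx}{\blank}$ sitting inside a further $\handle{\blank}{H}$ --- so that clause~2 of $\apx$ still applies after the reduction. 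Third, and most delicately, the continuation variable $r$ --- denoted by $x \mapsto h(t_x)$ and substituted syntactically by $\lambda^{Gk} y. \handle{\ctx[\ret_b y]}{H}$ --- must be shown $\apx_{Q \to R'; Gk}$-related, and this is precisely the place where the tree-induction hypothesis on the $t_x$ gets consumed. The let case is a strictly simpler instance of the same pattern, so once the handler lemma is set up with care the remaining work is routine.
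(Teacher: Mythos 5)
Your proposal is correct and follows exactly the argument the paper leaves implicit: the paper states this lemma without proof (deferring to the style of its cited reference for $\apx$), and the intended proof is precisely your simultaneous induction on typing derivations with auxiliary tree-inductions for the bind and handler cases, pulling the relation back along reductions via \Cref{lem:apx-rev}. Your identification of the handler case as the delicate point --- the grade bookkeeping through $G$, the reshaped evaluation context after \textsc{S-HandleOp}, and relating the continuation $x \mapsto \denote{H}(s,t_x)$ to $\lambda^{Gk} y.\handle{\ctx[\ret_b y]}{H}$ at type $Q \to R'; Gk$ --- is accurate, and your layering of the two inductions correctly avoids circularity.
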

% \begin{proof}
%   See \cref{proof-of-lemma-for-adequacy}.
% \end{proof}

\begin{theorem}[Adequacy]\label{thm:adequacy}
  If $\vdash_{\idmorph_a} M : \unit$ and
  $\denote{\vdash_{\idmorph_a} M : \unit} = \pure(a, \star)$
  then
  $M \rto^{*} \ret_a \unitval$.
\end{theorem}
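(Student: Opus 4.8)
The plan is to obtain adequacy as a direct corollary of the fundamental lemma of the logical relation, Lemma~\ref{lem:denotation-apx-term}. First I would instantiate that lemma at the empty typing context. Since $\vdash_{\idmorph_a} M : \unit$ is closed, clause~(2) of Lemma~\ref{lem:denotation-apx-term} with $n = 0$ involves no value terms $W_i$ and no elements $w_i$, and the substitution on $M$ is trivial; it therefore yields $\denote{\vdash_{\idmorph_a} M : \unit} \apx^{\idmorph_a}_{\unit} M$.

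Next I would feed in the hypothesis $\denote{\vdash_{\idmorph_a} M : \unit} = \pure(a, \star)$, so that $\pure(a, \star) \apx^{\idmorph_a}_{\unit} M$, and then unfold the definition of $\apx^{\idmorph_a}_{\unit}$. Of its two clauses, clause~(2) requires the denotation to have the form $\doop(\sigma, v, \{t_x\}_x)$, which $\pure(a,\star)$ does not; hence clause~(1) applies, giving a closed value $\vdash V : \unit$ with $M \rto^{*} \ret_a V$ and $\star \apx_{\unit} V$. Unfolding $\apx_{\unit}$ forces $V = \unitval$, whence $M \rto^{*} \ret_a \unitval$, as desired.

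The genuine work sits not in this theorem but in Lemma~\ref{lem:denotation-apx-term}, proved by simultaneous induction on the typing derivations of values and computations, relying on the substitution lemma and on Lemma~\ref{lem:apx-rev} (backward closure of $\apx$ along $\rto$). The straightforward cases are the value constructors and \textsc{Tc-Val}; the \textsc{Tv-Abs}/\textsc{Tc-App} pair threads the definition of $\apx$ at function types through $\beta$-reduction; and \textsc{Tc-Let} needs that $\apx$ is compatible with the lifting $(\blank)^{\dagger}$, i.e.\ that a computation related to $M$ together with a continuation related pointwise to $\lambda x.N$ produce something related to $\letin{x}{M}{N}$. I expect the \textsc{Tc-Handle} case to be the main obstacle: there one must match the recursive clauses of $\denote{H}$ against the rules \textsc{S-HandleRet} and \textsc{S-HandleOp}, checking in particular that the handler clause $M^{f_1;\dots;f_n}_{\op}$ picked out by \textsc{S-HandleOp} from the composite grade of the evaluation context is exactly the one at which $\denote{H}$ recursively descends into the argument subtrees of $\doop(\op,\ldots)$.
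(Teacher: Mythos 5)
Your proof is correct and follows exactly the same route as the paper: instantiate Lemma~\ref{lem:denotation-apx-term} at the empty context to get $\denote{M} \apx^{\idmorph_a}_{\unit} M$, rewrite by the hypothesis, and unfold the definitions of $\apx^{\idmorph_a}_{\unit}$ and $\apx_{\unit}$ to conclude. Your added remarks correctly locate the real work in the fundamental lemma (whose proof the paper omits), but the adequacy theorem itself is dispatched just as you describe.
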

\begin{proof}
  By Lemma \ref{lem:denotation-apx-term},
  we have $\denote{M} \apx^{\idmorph_a}_{\unit} M$.
  Thus, we have $\pure(a,\star) \apx^{\idmorph_a}_{\unit} M$ by assumption.
  By the definition of $\apx^{\idmorph_a}_{\unit}$,
  we obtain $M \rto^* \ret_{a} V$
  and $\star \apx_{\unit} V$.
  Therefore, by definition of $\apx_{\unit}$,
  we conclude $V = \unitval$ and $M \rto^* \ret_{a} \unitval$.
\end{proof}

\section{Future Work}
\textbf{User-defined grading category and graded operations.}
The grading categories of \cateff{}
are considered to be built-in in this paper.
To provide user-defined grading operations,
the syntax to write grading categories and graded operations is needed.
This is future work.

\textbf{Category-graded Lawvere theories.}
Graded Lawvere theories
which correspond to graded algebraic theories
were developed in \cite{Kura20}.
Category-graded extensions of Lawvere theories are future work.

\textbf{2-category-graded monads.}
Graded monads are graded by partially ordered monoids in general.
We must consider \textit{2-category-graded monads},
whose grading category is 2-category,
to generalise partially ordered monoid-graded monads.
This situation is beyond \cite{Street72}.
Thus, the Eilenberg-Moore and Kleisli constructions on these monads are not trivial.

\section*{Acknowledgement}
I would like to thank the people of computer science group at RIMS,
  especially Masahito Hasegawa and Soichiro Fujii for discussions and comments,
  and the anonymous reviewers for comments.
  This work was supported by
  JST, the establishment of university fellowships towards the creation of science technology innovation, Grant Number JPMJFS2123
  and JST ERATO Grant Number JPMJER1603.
%%
%% Bibliography
%%

\bibliographystyle{entics}
\bibliography{cat-graded}

\appendix

\section{Generalised Units and Generalised Counits}\label{generalised-units}
\textit{Generalised units}
were introduced to unify category-graded monads and parameterised monads
in \cite{OWE20}.
In this section,
we introduce \textit{generalised counits of adjunctions}
that correspond to generalised units of monads
and investigate the role of generalised units in \cateff{}.

\subsection{Generalised Units of Category-Graded Monads}
First, we review the definition of
categroy-graded monads with generalised units.
\begin{definition}[Category-graded monads with generalised unit \cite{OWE20}]
  A \textit{category-graded monads with generalised unit} is
  a category-graded monad $(T \colon \cats^{\op} \laxto \Endo{\catc}, \eta, \mu)$
  together with the following data:
  \begin{itemize}
  \item A \textit{wide subcategory} $\catr$ of $\cats$, that is
    a subcategory $\catr \subseteq \cats$ satisfying $\ob{\catr} = \ob{\cats}$.
    We denote the inclusion functor by $\iota : \catr \hookrightarrow \cats$
    and usually omit $\iota$ when no confusion arises.
  \item For each morphism $f$ in $\catr$ and object $C$ in $\catc$,
    a morphism ${(\gu_{f})}_C \colon C \to T_f C$,
    satisfying the following commutative diagrams.
  \end{itemize}
  \[
  \begin{tikzcd}
    C
    \ar[r, "{(\gu_f)}_C"]
    \ar[d, "{(\gu_{g \circ f})}_C"']
    & T_f C
    \ar[d, "T_f {(\gu_g)}_C"] \\
    T_{g \circ f} C
    &
    T_f T_g C
    \ar[l, "\mu_{f,g}"]
  \end{tikzcd}
  \quad
  \begin{tikzcd}
    C
    \ar[r, "{(\eta_a)}_C"]
    \ar[rd, "{(\gu_{\idmorph_a})}_C"']
    & T_{\idmorph_a} C
    \ar[d, equal] \\
    & T_{\idmorph_a} C
  \end{tikzcd}
  \]
\end{definition}

Let $G \colon \cats^{\opposite} \to \cat$ be a functor
and $(J_a \colon \catc \to Ga, Ea \colon Ga \to \catc, \eta_a, \varepsilon_a)$
be adjunctions for each $a \in \ob{\cats}$ such that
$T_f = E_b Gf Ja$ for every $f \colon a \to b$ in $\cats$.
We depict the generalised units by string diagrams as follows.
\[
\begin{tikzcd}
  \catc
  \ar[r, "J_a"]
  \ar[d, "\idfunc_{\catc}"', ""{name=id, right}]
  & Ga
  \ar[d, "Gf", ""{name=Tf, left}] \\
  \catc
  & Gb
  \ar[l, "E_b"]
  \ar[from=id, to=Tf, Rightarrow, "\gu_{f}"]
\end{tikzcd}
=
\begin{tikzpicture}[baseline=0.5cm, x=0.5cm, y=0.5cm]
\coordinate (etaf) at (2.000000, 1.000000);
\coordinate (eb) at (3.000000, 2.000000);
\coordinate (gf) at (2.000000, 2.000000);
\coordinate (ja) at (1.000000, 2.000000);
\coordinate (rt) at (4.000000, 2.000000);
\coordinate (rb) at (4.000000, 0.000000);
\coordinate (lt) at (0.000000, 2.000000);
\coordinate (lb) at (0.000000, 0.000000);
\fill[black!20] (eb) to[out=-90, in=0] (etaf)  to[out=180, in=-90] (ja) -- (lt) -- (lb) -- (rb) -- (rt)-- cycle;
\fill[blue!50] (gf) -- (etaf)  to[out=0, in=-90] (eb)-- cycle;
\fill[red!50] (ja) to[out=-90, in=180] (etaf)  -- (gf)-- cycle;
\draw (gf) -- (etaf);
\draw (eb) to[out=-90, in=0] (etaf);
\draw (ja) to[out=-90, in=180] (etaf);
\node[above] at (ja) {$J_a$};
\node[above] at (eb) {$E_b$};
\node[above] at (gf) {$Gf$};
\node[below] at (etaf) {$\gu_f$};
\end{tikzpicture}
\]
Then, the above two rules are depicted as follows.
\[
\begin{tikzpicture}[baseline=0.5cm, x=0.5cm, y=0.5cm]
\coordinate (rb) at (4.000000, 0.000000);
\coordinate (rt) at (4.000000, 3.000000);
\coordinate (auxb) at (3.500000, 1.000000);
\coordinate (ug) at (3.000000, 0.500000);
\coordinate (auxa) at (0.500000, 1.000000);
\coordinate (eps) at (2.000000, 1.500000);
\coordinate (uf) at (1.000000, 0.500000);
\coordinate (comp) at (2.000000, 2.250000);
\coordinate (Ec) at (3.000000, 3.000000);
\coordinate (Ggf) at (2.000000, 3.000000);
\coordinate (Ja) at (1.000000, 3.000000);
\coordinate (lt) at (0.000000, 3.000000);
\coordinate (lb) at (0.000000, 0.000000);
\fill[black!20] (eps) to[out=180, in=0] (uf)  to[out=180, in=-90] (auxa) to[out=90, in=-90] (Ja) -- (lt) -- (lb) -- (rb) -- (rt) -- (Ec) to[out=-90, in=90] (auxb) to[out=-90, in=0] (ug) to[out=180, in=0] (eps)-- cycle;
\fill[green!50] (Ggf) -- (comp)  to[out=0, in=90] (ug) to[out=0, in=-90] (auxb) to[out=90, in=-90] (Ec)-- cycle;
\fill[blue!50] (comp) to[out=180, in=90] (uf)  to[out=0, in=180] (eps) to[out=0, in=180] (ug) to[out=90, in=0] (comp)-- cycle;
\fill[red!50] (Ggf) -- (comp)  to[out=180, in=90] (uf) to[out=180, in=-90] (auxa) to[out=90, in=-90] (Ja)-- cycle;
\draw (comp) to[out=0, in=90] (ug);
\draw (auxb) to[out=90, in=-90] (Ec);
\draw (ug) to[out=0, in=-90] (auxb);
\draw (ug) to[out=180, in=0] (eps);
\draw (comp) to[out=180, in=90] (uf);
\draw (uf) to[out=0, in=180] (eps);
\draw (auxa) to[out=90, in=-90] (Ja);
\draw (uf) to[out=180, in=-90] (auxa);
\draw (Ggf) -- (comp);

\node[above=1em] at (Ggf) {$G(g \circ f)$};
\node[above] at (Ja) {$J_a$};
\node[above] at (Ec) {$E_c$};
\end{tikzpicture}
=
\begin{tikzpicture}[baseline=0.5cm, x=0.5cm, y=0.5cm]
\coordinate (rb) at (4.000000, 0.000000);
\coordinate (rt) at (4.000000, 3.000000);
\coordinate (Ggf) at (2.000000, 3.000000);
\coordinate (Ec) at (3.000000, 3.000000);
\coordinate (ugf) at (2.000000, 1.000000);
\coordinate (Ja) at (1.000000, 3.000000);
\coordinate (lt) at (0.000000, 3.000000);
\coordinate (lb) at (0.000000, 0.000000);
\fill[black!20] (ugf) to[out=180, in=-90] (Ja)  -- (lt) -- (lb) -- (rb) -- (rt) -- (Ec) to[out=-90, in=0] (ugf)-- cycle;
\fill[green!50] (Ggf) -- (ugf)  to[out=0, in=-90] (Ec)-- cycle;
\fill[red!50] (Ggf) -- (ugf)  to[out=180, in=-90] (Ja)-- cycle;
\draw (ugf) to[out=0, in=-90] (Ec);
\draw (Ggf) -- (ugf);
\draw (ugf) to[out=180, in=-90] (Ja);

\node[above=1em] at (Ggf) {$G(g \circ f)$};
\node[above] at (Ja) {$J_a$};
\node[above] at (Ec) {$E_c$};
\end{tikzpicture}
\hspace{1cm}
\begin{tikzpicture}[baseline=0.5cm, x=0.5cm, y=0.5cm]
\coordinate (rb) at (4.000000, 0.000000);
\coordinate (rt) at (4.000000, 2.000000);
\coordinate (u) at (2.000000, 1.000000);
\coordinate (Ea) at (3.000000, 2.000000);
\coordinate (id) at (2.000000, 2.000000);
\coordinate (Ja) at (1.000000, 2.000000);
\coordinate (lt) at (0.000000, 2.000000);
\coordinate (lb) at (0.000000, 0.000000);
\fill[black!20] (u) to[out=180, in=-90] (Ja)  -- (lt) -- (lb) -- (rb) -- (rt) -- (Ea) to[out=-90, in=0] (u)-- cycle;
\fill[red!50] (id) -- (u)  to[out=0, in=-90] (Ea)-- cycle;
\fill[red!50] (id) -- (u)  to[out=180, in=-90] (Ja)-- cycle;
\draw (u) to[out=0, in=-90] (Ea);
\draw (id) -- (u);
\draw (u) to[out=180, in=-90] (Ja);

\node[above=1em] at (id) {$\idmorph_{G_a}$};
\node[above] at (Ja) {$J_a$};
\node[above] at (Ea) {$E_a$};
\node[below] at (u) {$\gu_{\idmorph_a}$};
\end{tikzpicture}
=
\begin{tikzpicture}[baseline=0.5cm, x=0.5cm, y=0.5cm]
\coordinate (rb) at (4.000000, 0.000000);
\coordinate (rt) at (4.000000, 2.000000);
\coordinate (u) at (2.000000, 1.000000);
\coordinate (Ea) at (3.000000, 2.000000);
\coordinate (id) at (2.000000, 2.000000);
\coordinate (Ja) at (1.000000, 2.000000);
\coordinate (lt) at (0.000000, 2.000000);
\coordinate (lb) at (0.000000, 0.000000);
\fill[red!50] (Ea) to[out=-90, in=0] (u)  to[out=180, in=-90] (Ja)-- cycle;
\fill[black!20] (u) to[out=180, in=-90] (Ja)  -- (lt) -- (lb) -- (rb) -- (rt) -- (Ea) to[out=-90, in=0] (u)-- cycle;
\draw (u) to[out=0, in=-90] (Ea);
\draw (u) to[out=180, in=-90] (Ja);

\node[above] at (Ja) {$J_a$};
\node[above] at (Ea) {$E_a$};
\node[below] at (u) {$\eta_{a}$};
\end{tikzpicture}
\]

A category-graded monad with generalised unit
whose wide subcategory is a discrete category
is an ordinary parameterised monad.
Every parameterised monad can be seen
a category-graded monad with generalised units
using \textit{pair completion}, see \cite{OWE20} and \Cref{correspondence-of-p-monads-to-cat-graded-monads-with-gu}.

\subsection{Generalised Counits of Adjunctions}
We introduce \textit{generalised counits} of adjunctions
and show that generalised units of monads correspond to generalised counits of adjunctions.
\begin{definition}[Generalised counits of adjunctions]
Let $G \colon \cats^{\op} \to \cat$ be a functor and
$(J_a \colon \cats \to Ga, E_a \colon Ga \to \cats, \eta_a, \varepsilon_a)$ be adjunctions.
A \textit{generalised counit} of those adjunctions
consists of the following data.
\begin{itemize}
\item A wide subcategory $\catr$ of $\cats$.
\item For each morphism $f \colon b \to a$ in $\catr$,
  a natural transformation $\gcu_f \colon J_b E_a \nattr Gf$,
  satisfying the following commutative diagrams.
\end{itemize}
\[
\begin{tikzcd}
  J_c E_a A
  \ar[r, "J_c {(\eta_b)}_{E_a A}"]
  \ar[d, "{(\gcu_{f \circ g})}_A"]
  & J_c E_b J_b E_a A
  \ar[r, "J_c E_b {(\gcu_g)}_A"]
  & J_c E_b Gg A
  \ar[d, "{(\gcu_f)}_{GgA}"] \\
  G(f \circ g) A
  \ar[rr, equal]
  &
  & GfGg A
\end{tikzcd}
\begin{tikzcd}
  J_a E_a A
  \ar[r, "{(\varepsilon_a)}_A"]
  \ar[rd, "{(\gcu_{\idmorph_a})}_A"']
  & A
  \ar[d, equal] \\
  & A
\end{tikzcd}
\]
\end{definition}
We depict generalised counits by string diagrams as follows:
\[
\begin{tikzcd}
  \catc
  \ar[d, "\idfunc_{\catc}"', ""{name=id, right}]
  & Ga
  \ar[l, "E_a"']
  \ar[d, "Gf", ""{name=Tf, left}] \\
  \catc
  \ar[r, "J_b"]
  & Gb
  \ar[from=id, to=Tf, Rightarrow, "\gcu_{f}"]
\end{tikzcd}
=
\begin{tikzpicture}[baseline=0.5cm, x=0.5cm, y=0.5cm]
\coordinate (jb) at (3.000000, 0.000000);
\coordinate (ea) at (1.000000, 0.000000);
\coordinate (epsf) at (2.000000, 1.000000);
\coordinate (gf) at (2.000000, 2.000000);
\coordinate (rt) at (4.000000, 2.000000);
\coordinate (rb) at (4.000000, 0.000000);
\coordinate (lt) at (0.000000, 2.000000);
\coordinate (lb) at (0.000000, 0.000000);
\fill[black!20] (ea) to[out=90, in=180] (epsf)  to[out=0, in=90] (jb)-- cycle;
\fill[blue!50] (jb) to[out=90, in=0] (epsf)  -- (gf) -- (rt) -- (rb)-- cycle;
\fill[red!50] (ea) to[out=90, in=180] (epsf)  -- (gf) -- (lt) -- (lb)-- cycle;
\draw (epsf) -- (gf);
\draw (jb) to[out=90, in=0] (epsf);
\draw (ea) to[out=90, in=180] (epsf);

\node[above] at (gf) {$Gf$};
\node[below] at (epsf) {$\gcu_f$};
\node[below] at (jb) {$J_b$};
\node[below] at (ea) {$E_a$};
\end{tikzpicture}
\]

Then the above two commutative diagrams are
\[ \begin{tikzpicture}[baseline=0.5cm, x=0.5cm, y=0.5cm]
\coordinate (rb) at (5.000000, 0.000000);
\coordinate (rt) at (5.000000, 3.000000);
\coordinate (etab) at (2.500000, 0.500000);
\coordinate (lb) at (0.000000, 0.000000);
\coordinate (lt) at (0.000000, 3.000000);
\coordinate (eg) at (3.500000, 1.000000);
\coordinate (ef) at (1.500000, 1.000000);
\coordinate (Jc) at (4.000000, 0.000000);
\coordinate (Ea) at (1.000000, 0.000000);
\coordinate (c) at (2.500000, 2.000000);
\coordinate (Ggf) at (2.500000, 3.000000);
\fill[green!50] (rt) -- (rb)  -- (Jc) to[out=90, in=0] (eg) to[out=90, in=0] (c) -- (Ggf)-- cycle;
\fill[blue!50] (etab) to[out=180, in=0] (ef)  to[out=90, in=180] (c) to[out=0, in=90] (eg) to[out=180, in=0] (etab)-- cycle;
\fill[red!50] (lt) -- (lb)  -- (Ea) to[out=90, in=180] (ef) to[out=90, in=180] (c) -- (Ggf)-- cycle;
\fill[black!20] (Ea) to[out=90, in=180] (ef)  to[out=0, in=180] (etab) to[out=0, in=180] (eg) to[out=0, in=90] (Jc)-- cycle;
\draw (c) -- (Ggf);
\draw (etab) to[out=0, in=180] (eg);
\draw (eg) to[out=90, in=0] (c);
\draw (ef) to[out=90, in=180] (c);
\draw (etab) to[out=180, in=0] (ef);
\draw (Ea) to[out=90, in=180] (ef);
\draw (Jc) to[out=90, in=0] (eg);

\node[above] at (Ggf) {$G(g\circ f)$};
\node[below] at (Ea) {$E_a$};
\node[below] at (Jc) {$J_c$};
\end{tikzpicture}
=
\begin{tikzpicture}[baseline=0.5cm, x=0.5cm, y=0.5cm]
\coordinate (rb) at (5.000000, 0.000000);
\coordinate (rt) at (5.000000, 3.000000);
\coordinate (lb) at (0.000000, 0.000000);
\coordinate (lt) at (0.000000, 3.000000);
\coordinate (Jc) at (4.000000, 0.000000);
\coordinate (Ea) at (1.000000, 0.000000);
\coordinate (comp) at (2.500000, 1.500000);
\coordinate (Ggf) at (2.500000, 3.000000);
\fill[black!20] (Ea) to[out=90, in=180] (comp)  to[out=0, in=90] (Jc)-- cycle;
\fill[green!50] (rt) -- (rb)  -- (Jc) to[out=90, in=0] (comp) -- (Ggf)-- cycle;
\fill[red!50] (lt) -- (lb)  -- (Ea) to[out=90, in=180] (comp) -- (Ggf)-- cycle;
\draw (Jc) to[out=90, in=0] (comp);
\draw (Ea) to[out=90, in=180] (comp);
\draw (comp) -- (Ggf);

\node[above] at (Ggf) {$G(g\circ f)$};
\node[below] at (Ea) {$E_a$};
\node[below] at (Jc) {$J_c$};
\end{tikzpicture} \qquad
\begin{tikzpicture}[baseline=0.5cm, x=0.5cm, y=0.5cm]
\coordinate (Ja) at (2.000000, 0.000000);
\coordinate (Ea) at (1.000000, 0.000000);
\coordinate (eps) at (1.500000, 1.000000);
\coordinate (id) at (1.500000, 2.000000);
\coordinate (rt) at (3.000000, 2.000000);
\coordinate (rb) at (3.000000, 0.000000);
\coordinate (lt) at (0.000000, 2.000000);
\coordinate (lb) at (0.000000, 0.000000);
\fill[black!20] (Ea) to[out=90, in=180] (eps)  to[out=0, in=90] (Ja)-- cycle;
\fill[red!50] (rt) -- (rb)  -- (Ja) to[out=90, in=0] (eps) -- (id)-- cycle;
\fill[red!50] (lt) -- (lb)  -- (Ea) to[out=90, in=180] (eps) -- (id)-- cycle;
\draw (Ja) to[out=90, in=0] (eps);
\draw (Ea) to[out=90, in=180] (eps);
\draw (eps) -- (id);

\node[above] at (id) {$\idmorph_{Ga}$};
\node[below] at (Ea) {$E_a$};
\node[below] at (Ja) {$J_a$};
\node[below] at (eps) {$\gcu_{\idmorph_a}$};
\end{tikzpicture}
=
\begin{tikzpicture}[baseline=0.5cm, x=0.5cm, y=0.5cm]
\coordinate (Ja) at (2.000000, 0.000000);
\coordinate (Ea) at (1.000000, 0.000000);
\coordinate (eps) at (1.500000, 1.000000);
\coordinate (id) at (1.500000, 2.000000);
\coordinate (rt) at (3.000000, 2.000000);
\coordinate (rb) at (3.000000, 0.000000);
\coordinate (lt) at (0.000000, 2.000000);
\coordinate (lb) at (0.000000, 0.000000);
\fill[red!50] (lt) -- (lb)  -- (Ea) to[out=90, in=180] (eps) to[out=0, in=90] (Ja) -- (rb) -- (rt)-- cycle;
\fill[black!20] (Ea) to[out=90, in=180] (eps)  to[out=0, in=90] (Ja)-- cycle;
\draw (Ja) to[out=90, in=0] (eps);
\draw (Ea) to[out=90, in=180] (eps);
\node[below] at (Ea) {$E_a$};
\node[below] at (Ja) {$J_a$};
\node[below] at (eps) {$\varepsilon_{a}$};
\end{tikzpicture}. \]
% The following string diagrams are equivalent to the above diagrams:
% \input{./string-diagrams/axiom-g-counit}

Next proposition is analogous to the usual unit law of monads.
\begin{proposition}
  The following equations hold.
  \[
\begin{tikzpicture}[baseline=0.5cm, x=0.5cm, y=0.5cm]
\coordinate (rt) at (6.000000, 4.000000);
\coordinate (rb) at (6.000000, 0.000000);
\coordinate (lb) at (0.000000, 0.000000);
\coordinate (lt) at (0.000000, 4.000000);
\coordinate (ebb) at (5.000000, 0.000000);
\coordinate (ebt) at (4.000000, 4.000000);
\coordinate (gf) at (3.000000, 4.000000);
\coordinate (jb) at (4.000000, 0.000000);
\coordinate (eps) at (2.750000, 2.500000);
\coordinate (etab) at (2.000000, 1.000000);
\coordinate (auxa) at (0.500000, 2.000000);
\coordinate (ja) at (2.000000, 4.000000);
\coordinate (epsf) at (1.500000, 2.000000);
\coordinate (etaa) at (1.000000, 1.000000);
\fill[black!20] (ebt) to[out=-90, in=90] (ebb)  -- (rb) -- (rt)-- cycle;
\fill[black!20] (lt) -- (lb)  -- (jb) to[out=90, in=0] (eps) to[out=180, in=0] (etab) to[out=180, in=0] (epsf) to[out=180, in=0] (etaa) to[out=180, in=-90] (auxa) to[out=90, in=-90] (ja)-- cycle;
\fill[blue!50] (jb) to[out=90, in=0] (eps)  to[out=180, in=0] (etab) to[out=180, in=0] (epsf) to[out=90, in=-90] (gf) -- (ebt) to[out=-90, in=90] (ebb)-- cycle;
\fill[red!50] (ja) to[out=-90, in=90] (auxa)  to[out=-90, in=180] (etaa) to[out=0, in=180] (epsf) to[out=90, in=-90] (gf)-- cycle;
\draw (ebt) to[out=-90, in=90] (ebb);
\draw (etab) to[out=180, in=0] (epsf);
\draw (eps) to[out=180, in=0] (etab);
\draw (jb) to[out=90, in=0] (eps);
\draw (epsf) to[out=90, in=-90] (gf);
\draw (etaa) to[out=0, in=180] (epsf);
\draw (auxa) to[out=-90, in=180] (etaa);
\draw (ja) to[out=-90, in=90] (auxa);

\node[above] at (ja) {$J_a$};
\node[above] at (ebt) {$E_b$};
\node[above] at (gf) {$Gf$};

\node[below] at (ebb) {$E_b$};
\node[below] at (jb) {$J_b$};
\end{tikzpicture}
=
\begin{tikzpicture}[baseline=0.5cm, x=0.5cm, y=0.5cm]
\coordinate (rt) at (4.000000, 4.000000);
\coordinate (rb) at (4.000000, 0.000000);
\coordinate (lb) at (0.000000, 0.000000);
\coordinate (lt) at (0.000000, 4.000000);
\coordinate (ebb) at (3.000000, 0.000000);
\coordinate (ebt) at (3.000000, 4.000000);
\coordinate (gf) at (2.000000, 4.000000);
\coordinate (jb) at (2.000000, 0.000000);
\coordinate (auxa) at (0.500000, 2.000000);
\coordinate (ja) at (1.000000, 4.000000);
\coordinate (epsf) at (1.500000, 2.000000);
\coordinate (etaa) at (1.000000, 1.000000);
\fill[red!50] (ja) to[out=-90, in=90] (auxa)  to[out=-90, in=180] (etaa) to[out=0, in=180] (epsf) to[out=90, in=-90] (gf)-- cycle;
\fill[blue!50] (ebt) -- (ebb)  -- (jb) to[out=90, in=0] (epsf) to[out=90, in=-90] (gf)-- cycle;
\fill[black!20] (lt) -- (lb)  -- (jb) to[out=90, in=0] (epsf) to[out=180, in=0] (etaa) to[out=180, in=-90] (auxa) to[out=90, in=-90] (ja)-- cycle;
\fill[black!20] (ebt) -- (ebb)  -- (rb) -- (rt)-- cycle;
\draw (epsf) to[out=90, in=-90] (gf);
\draw (etaa) to[out=0, in=180] (epsf);
\draw (auxa) to[out=-90, in=180] (etaa);
\draw (ja) to[out=-90, in=90] (auxa);
\draw (jb) to[out=90, in=0] (epsf);
\draw (ebt) -- (ebb);

\node[above] at (ja) {$J_a$};
\node[above] at (ebt) {$E_b$};
\node[above] at (gf) {$Gf$};
\node[below] at (ebb) {$E_b$};
\node[below] at (jb) {$J_b$};
\end{tikzpicture}
\hspace{0.5cm}
\begin{tikzpicture}[baseline=0.5cm, x=0.5cm, y=0.5cm]
\coordinate (auxb) at (3.500000, 2.000000);
\coordinate (eb) at (3.000000, 4.000000);
\coordinate (etab) at (3.000000, 1.000000);
\coordinate (epsf) at (2.500000, 2.000000);
\coordinate (ea) at (2.000000, 0.000000);
\coordinate (gf) at (2.000000, 4.000000);
\coordinate (jat) at (1.000000, 4.000000);
\coordinate (jab) at (1.000000, 0.000000);
\coordinate (rb) at (4.000000, 0.000000);
\coordinate (rt) at (4.000000, 4.000000);
\coordinate (lt) at (0.000000, 4.000000);
\coordinate (lb) at (0.000000, 0.000000);
\fill[blue!50] (eb) to[out=-90, in=90] (auxb)  to[out=-90, in=0] (etab) to[out=180, in=0] (epsf) to[out=90, in=-90] (gf)-- cycle;
\fill[black!20] (lb) -- (lt)  -- (jat) to[out=-90, in=90] (jab)-- cycle;
\fill[red!50] (jat) to[out=-90, in=90] (jab)  -- (ea) to[out=90, in=180] (epsf) to[out=90, in=-90] (gf)-- cycle;
\fill[black!20] (rb) -- (rt)  -- (eb) to[out=-90, in=90] (auxb) to[out=-90, in=0] (etab) to[out=180, in=0] (epsf) to[out=180, in=90] (ea)-- cycle;
\draw (ea) to[out=90, in=180] (epsf);
\draw (etab) to[out=180, in=0] (epsf);
\draw (auxb) to[out=-90, in=0] (etab);
\draw (eb) to[out=-90, in=90] (auxb);
\draw (epsf) to[out=90, in=-90] (gf);
\draw (jat) to[out=-90, in=90] (jab);

\node[above] at (jat) {$J_a$};
\node[above] at (gf) {$Gf$};
\node[above] at (eb) {$E_b$};

\node[below] at (jab) {$J_a$};
\node[below] at (ea) {$E_a$};
\end{tikzpicture}
=
\begin{tikzpicture}[baseline=0.5cm, x=0.5cm, y=0.5cm]
\coordinate (auxb) at (5.500000, 2.000000);
\coordinate (eb) at (4.000000, 4.000000);
\coordinate (etab) at (5.000000, 1.000000);
\coordinate (epsf) at (4.500000, 2.000000);
\coordinate (etaa) at (4.000000, 1.000000);
\coordinate (eps) at (3.250000, 2.500000);
\coordinate (ea) at (2.000000, 0.000000);
\coordinate (gf) at (3.000000, 4.000000);
\coordinate (jat) at (2.000000, 4.000000);
\coordinate (jab) at (1.000000, 0.000000);
\coordinate (rb) at (6.000000, 0.000000);
\coordinate (rt) at (6.000000, 4.000000);
\coordinate (lt) at (0.000000, 4.000000);
\coordinate (lb) at (0.000000, 0.000000);
\fill[black!20] (rb) -- (rt)  -- (eb) to[out=-90, in=90] (auxb) to[out=-90, in=0] (etab) to[out=180, in=0] (epsf) to[out=180, in=0] (etaa) to[out=180, in=0] (eps) to[out=180, in=90] (ea)-- cycle;
\fill[blue!50] (eb) to[out=-90, in=90] (auxb)  to[out=-90, in=0] (etab) to[out=180, in=0] (epsf) to[out=90, in=-90] (gf)-- cycle;
\fill[red!50] (jat) to[out=-90, in=90] (jab)  -- (ea) to[out=90, in=180] (eps) to[out=0, in=180] (etaa) to[out=0, in=180] (epsf) to[out=90, in=-90] (gf)-- cycle;
\fill[black!20] (lb) -- (lt)  -- (jat) to[out=-90, in=90] (jab)-- cycle;
\draw (etab) to[out=180, in=0] (epsf);
\draw (auxb) to[out=-90, in=0] (etab);
\draw (eb) to[out=-90, in=90] (auxb);
\draw (epsf) to[out=90, in=-90] (gf);
\draw (etaa) to[out=0, in=180] (epsf);
\draw (eps) to[out=0, in=180] (etaa);
\draw (ea) to[out=90, in=180] (eps);
\draw (jat) to[out=-90, in=90] (jab);

\node[above] at (jat) {$J_a$};
\node[above] at (gf) {$Gf$};
\node[above] at (eb) {$E_b$};

\node[below] at (jab) {$J_a$};
\node[below] at (ea) {$E_a$};
\end{tikzpicture}
\]
\end{proposition}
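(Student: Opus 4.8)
The plan is to verify both equations by pasting‑diagram (string‑diagram) calculation in the $2$-category $\cat$, using only: (i) the two axioms defining the generalised counit $\gcu$ (the hexagon and the normalisation $\gcu_{\idmorph_a} = \varepsilon_a$); (ii) the triangle identities of the adjunctions $J_a \dashv E_a$ and $J_b \dashv E_b$; and (iii) naturality of $\gcu_f$, $\eta_a$, $\varepsilon_a$ together with the interchange law. No property of $G$ beyond functoriality is needed, and the wide subcategory $\catr$ plays no role beyond ensuring $f$ lies in it. In spirit this is the exact analogue of the standard verification of the monad unit laws $\mu \circ T\eta = \idmorph = \mu \circ \eta T$, now run on the data $(\gcu_f, \eta_a, \varepsilon_a)$ that presents the graded multiplication and unit.

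For the first equation I would start from the left‑hand diagram and observe that the subdiagram built from the $b$-indexed unit $\eta_b$ and counit $\varepsilon_b$ forms a zigzag threaded along the $J_b$/$E_b$ pair; an interchange move makes this zigzag composable, and one application of the triangle identity $\varepsilon_b J_b \circ J_b \eta_b = \idmorph_{J_b}$ (or its partner $E_b \varepsilon_b \circ \eta_b E_b = \idmorph_{E_b}$, according to which leg is threaded) collapses it, leaving precisely the right‑hand diagram built only from $\gcu_f$ and $\eta_a$. For the second equation the same strategy applies, now on the $a$-side adjunction: here one first uses the hexagon axiom specialised to $g = \idmorph_a$ together with $\gcu_{\idmorph_a} = \varepsilon_a$ to rewrite the combination of $\gcu_f$ with the $a$-side unit/counit into a shape in which a single triangle identity for $J_a \dashv E_a$ becomes applicable, and then collapses that snake. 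In both cases the two sides reduce to the same normal form, which proves the equalities.

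The main obstacle is essentially diagrammatic bookkeeping: each left‑hand side carries several copies of $\eta_a,\eta_b,\varepsilon_a,\varepsilon_b$, and only after the correct interchange (and, for the second equation, the correct use of the hexagon) do a genuinely matching $\eta$–$\varepsilon$ pair become adjacent along one strand, so that a triangle identity — rather than a mismatched cancellation — applies. Tracking the wiring carefully enough to be certain that the loop being straightened really is a snake for one of the two adjunctions, and in the right orientation, is the delicate point; once that identification is secured, each equation reduces to a single triangle identity plus routine naturality.
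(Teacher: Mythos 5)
Your proposal is correct and takes essentially the same route as the paper's one-line proof: each equation is a single snake straightened by a triangle identity --- the $\varepsilon_b J_b \circ J_b\eta_b = \idmorph_{J_b}$ zigzag on the $J_b$ strand for the first, and the $E_a\varepsilon_a \circ \eta_a E_a = \idmorph_{E_a}$ zigzag on the $E_a$ strand feeding $\gcu_f$ for the second --- which is exactly what the paper means by ``deformation of strings.'' The only quibble is that your detour through the hexagon axiom at $g=\idmorph_a$ for the second equation is superfluous (and, once applied, would already finish the job): that specialised hexagon is itself nothing but the $E_a$-triangle identity whiskered with $\gcu_f$, so the direct snake collapse suffices.
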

\begin{proof}
  We can easily check by deformation of strings.
\end{proof}

\begin{theorem}
  Let $G \colon \cats^{\op} \to \cat$ be a functor and
  $(J_a \colon \cats \to Ga, E_a \colon Ga \to \cats, \eta_a, \varepsilon_a)$
  be adjunctions.
  There is a one to one correspondence between
  generalised units of the category-graded monad induced by $G$ and $(J_a, E_a, \eta_a, \varepsilon_a)$,
  and generalised counits of the adjunctions.
\end{theorem}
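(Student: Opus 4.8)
The plan is to exhibit the correspondence explicitly by whiskering with the units and counits of the adjunctions, and then to check, using the triangle identities together with naturality and string--diagram manipulation, that the two constructions land in the right place, respect the coherence axioms on each side, and are mutually inverse. Throughout I use the conventions of \Cref{cor:lax-functor-from-adjunctions-and-a-functor}: for $f \colon b \to a$ in $\cats$ we have $T_f = E_b\, Gf\, J_a$, the unit of the induced $\cats$-graded monad is $\eta$, and its multiplication is the whiskered $J\varepsilon E$. Given a generalised counit $(\catr, (\gcu_f)_{f\in\catr})$ of the adjunctions, I would define, for each $f \colon b \to a$ in $\catr$,
\[ \gu_f \defeq (E_b\, \gcu_f\, J_a) \circ (\eta_b\, E_a\, J_a) \circ \eta_a \colon \idfunc_{\catc} \nattr E_b\, Gf\, J_a = T_f , \]
keeping the same wide subcategory $\catr$; conversely, given a generalised unit $(\catr, (\gu_f)_{f\in\catr})$, I would define
\[ \gcu_f \defeq (Gf\, \varepsilon_a) \circ (\varepsilon_b\, Gf\, J_a\, E_a) \circ (J_b\, \gu_f\, E_a) \colon J_b E_a \nattr Gf . \]
Both are transparent as string diagrams: $\gu_f$ bends $\idfunc_{\catc}$ through $\eta_a$, then through $\eta_b$, and inserts $\gcu_f$ between $E_b$ and $J_a$; $\gcu_f$ caps the two dangling strings of $J_b\, \gu_f\, E_a$ with $\varepsilon_b$ and $\varepsilon_a$. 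Being composites of natural transformations and their whiskerings, $\gu_f$ and $\gcu_f$ are automatically natural, so no separate naturality check is needed.

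Next I would verify that each assignment produces valid data. For $\gu_f$: the compatibility with $\eta$ (the case $f = \idmorph_a$) is immediate, since $\gcu_{\idmorph_a} = \varepsilon_a$ by the second generalised--counit axiom and $(E_a\,\varepsilon_a)\circ(\eta_a\, E_a) = \idmorph_{E_a}$ is a triangle identity, whence $\gu_{\idmorph_a} = \eta_a$; the compatibility with $\mu$, after replacing $\mu_{f,g}$ by its formula, unwinds into a string identity provable from the composition axiom for $\gcu$ together with naturality and the triangle identities. The checks that $\gcu_f$ satisfies the two generalised--counit axioms are dual, the identity case again reducing to a triangle identity. I would then show the two assignments are mutually inverse by substituting one formula into the other and sliding nodes along strings: each round trip collapses after two applications of triangle identities — for instance, starting from a generalised counit, forming $\gu_f$, and forming the resulting $\gcu'_f$, one uses naturality of $\varepsilon_b$ to move $\gcu_f$ past $\varepsilon_b\, J_b\, E_a\, J_a\, E_a$, then the triangle identity $(\varepsilon_b J_b)\circ(J_b\eta_b) = \idmorph_{J_b}$ and the triangle identity of the $a$-adjunction to recover $\gcu_f$, and symmetrically for the other direction. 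Since the underlying wide subcategory is carried along unchanged, this gives the claimed bijection.

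I expect the main obstacle to be the coherence step in the previous paragraph, namely that the $\mu_{f,g}$-axiom for $\gu$ corresponds precisely to the composition axiom for $\gcu$. Because the multiplication of the induced monad is itself assembled from the counits $\varepsilon$, this axiom, once $\mu$ is expanded, becomes a sizeable string--diagram equation in which the positions of the various units and counits must be tracked carefully; the Proposition immediately preceding the theorem — the unit laws for $\gcu$ relative to $\eta$ — is exactly the auxiliary identity that makes this tractable, and once it is in hand the computation is a routine diagram chase. All the remaining verifications are similarly mechanical manipulations of string diagrams.
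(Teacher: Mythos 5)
Your construction is exactly the paper's: the two formulas you give for $\gu_f$ and $\gcu_f$ are precisely the two string diagrams the paper exhibits as the mutually inverse assignments, with the wide subcategory carried along unchanged. The paper leaves the axiom checks and the round-trip cancellations implicit, and your sketch of them (triangle identities plus interchange/naturality, with the identity cases reducing to $\gcu_{\idmorph_a}=\varepsilon_a$ and a triangle identity) is correct, so there is nothing to flag.
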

\begin{proof}
  We can define a map from generalised counits to generalised units and its inverse by:
  \[
  \begin{tikzpicture}[baseline=0.5cm, x=0.5cm, y=0.5cm]
\coordinate (jb) at (3.000000, 0.000000);
\coordinate (ea) at (1.000000, 0.000000);
\coordinate (epsf) at (2.000000, 1.000000);
\coordinate (gf) at (2.000000, 2.000000);
\coordinate (rt) at (4.000000, 2.000000);
\coordinate (rb) at (4.000000, 0.000000);
\coordinate (lt) at (0.000000, 2.000000);
\coordinate (lb) at (0.000000, 0.000000);
\fill[black!20] (ea) to[out=90, in=180] (epsf)  to[out=0, in=90] (jb)-- cycle;
\fill[blue!50] (jb) to[out=90, in=0] (epsf)  -- (gf) -- (rt) -- (rb)-- cycle;
\fill[red!50] (ea) to[out=90, in=180] (epsf)  -- (gf) -- (lt) -- (lb)-- cycle;
\draw (epsf) -- (gf);
\draw (jb) to[out=90, in=0] (epsf);
\draw (ea) to[out=90, in=180] (epsf);

\node[above] at (gf) {$Gf$};
\node[below] at (epsf) {$\gcu_f$};
\node[below] at (jb) {$J_b$};
\node[below] at (ea) {$E_a$};
\end{tikzpicture}
  \mapsto
  \begin{tikzpicture}[baseline=0.5cm, x=0.5cm, y=0.5cm]
\coordinate (rt) at (6.000000, 3.000000);
\coordinate (rb) at (6.000000, 0.000000);
\coordinate (lt) at (0.000000, 3.000000);
\coordinate (lb) at (0.000000, 0.000000);
\coordinate (Eb) at (5.000000, 3.000000);
\coordinate (etab) at (4.000000, 1.000000);
\coordinate (Gf) at (3.000000, 3.000000);
\coordinate (cu) at (3.000000, 2.000000);
\coordinate (etaa) at (2.000000, 1.000000);
\coordinate (Ja) at (1.000000, 3.000000);
\fill[black!20] (cu) to[out=180, in=0] (etaa)  to[out=180, in=-90] (Ja) -- (lt) -- (lb) -- (rb) -- (rt) -- (Eb) to[out=-90, in=0] (etab) to[out=180, in=0] (cu)-- cycle;
\fill[blue!50] (Gf) -- (cu)  to[out=0, in=180] (etab) to[out=0, in=-90] (Eb)-- cycle;
\fill[red!50] (Gf) -- (cu)  to[out=180, in=0] (etaa) to[out=180, in=-90] (Ja)-- cycle;
\draw (etab) to[out=0, in=-90] (Eb);
\draw (cu) to[out=0, in=180] (etab);
\draw (etaa) to[out=180, in=-90] (Ja);
\draw (cu) to[out=180, in=0] (etaa);
\draw (Gf) -- (cu);

\node[above] at (Ja) {$J_a$};
\node[above] at (Eb) {$E_b$};
\node[above] at (Gf) {$Gf$};
\node[below] at (cu) {$\gcu_f$};
\end{tikzpicture}
  \quad \text{and} \quad
  \begin{tikzpicture}[baseline=0.5cm, x=0.5cm, y=0.5cm]
\coordinate (etaf) at (2.000000, 1.000000);
\coordinate (eb) at (3.000000, 2.000000);
\coordinate (gf) at (2.000000, 2.000000);
\coordinate (ja) at (1.000000, 2.000000);
\coordinate (rt) at (4.000000, 2.000000);
\coordinate (rb) at (4.000000, 0.000000);
\coordinate (lt) at (0.000000, 2.000000);
\coordinate (lb) at (0.000000, 0.000000);
\fill[black!20] (eb) to[out=-90, in=0] (etaf)  to[out=180, in=-90] (ja) -- (lt) -- (lb) -- (rb) -- (rt)-- cycle;
\fill[blue!50] (gf) -- (etaf)  to[out=0, in=-90] (eb)-- cycle;
\fill[red!50] (ja) to[out=-90, in=180] (etaf)  -- (gf)-- cycle;
\draw (gf) -- (etaf);
\draw (eb) to[out=-90, in=0] (etaf);
\draw (ja) to[out=-90, in=180] (etaf);
\node[above] at (ja) {$J_a$};
\node[above] at (eb) {$E_b$};
\node[above] at (gf) {$Gf$};
\node[below] at (etaf) {$\gu_f$};
\end{tikzpicture}
  \mapsto
  \begin{tikzpicture}[baseline=0.5cm, x=0.5cm, y=0.5cm]
\coordinate (gf) at (3.000000, 3.000000);
\coordinate (jb) at (5.000000, 0.000000);
\coordinate (epsb) at (4.000000, 2.000000);
\coordinate (epsa) at (2.000000, 2.000000);
\coordinate (ea) at (1.000000, 0.000000);
\coordinate (etaf) at (3.000000, 1.000000);
\coordinate (rt) at (6.000000, 3.000000);
\coordinate (rb) at (6.000000, 0.000000);
\coordinate (lt) at (0.000000, 3.000000);
\coordinate (lb) at (0.000000, 0.000000);
\fill[black!20] (ea) to[out=90, in=180] (epsa)  to[out=0, in=180] (etaf) to[out=0, in=180] (epsb) to[out=0, in=90] (jb)-- cycle;
\fill[blue!50] (jb) to[out=90, in=0] (epsb)  to[out=180, in=0] (etaf) -- (gf) -- (rt) -- (rb)-- cycle;
\fill[red!50] (ea) to[out=90, in=180] (epsa)  to[out=0, in=180] (etaf) -- (gf) -- (lt) -- (lb)-- cycle;
\draw (epsb) to[out=180, in=0] (etaf);
\draw (jb) to[out=90, in=0] (epsb);
\draw (epsa) to[out=0, in=180] (etaf);
\draw (ea) to[out=90, in=180] (epsa);
\draw (etaf) -- (gf);

\node[above] at (gf) {$Gf$};
\node[below] at (etaf) {$\gu_f$};
\node[below] at (ea) {$E_a$};
\node[below] at (jb) {$J_a$};
\end{tikzpicture}.
  \]
  % We can define a map from generalised counits to generalised units
  % as follows.
  % \[
  % \input{./string-diagrams/g-counit}
  % \mapsto
  % \input{./string-diagrams/g-unit-from-g-counit}
  % \]
  % Indeed we can check that the right-hand side satisfies the axioms of generalised units
  % using the axioms generalised counits and deformation of strings.
  % The inverse of this map is as follows.
  % \[
  % \input{./string-diagrams/g-unit}
  % \mapsto
  % \input{./string-diagrams/g-counit-from-g-unit}
  % \]
  % Then we can check that the above two maps are inverse of each other by deforming the string diagram.
\end{proof}

\subsection{Generalised Units in \cateff}
We can introduce generalised units to \cateff{}.
We fix a grading category $\cats$ and wide subcategory $\catr$ of it.
The typing rule corresponding to generalised units is as follows.
\[
\infer[\textsc{Tc-Gunit}]{
  \Gamma \vdash_{h \circ f \circ g} M : A
}{
  \text{$g \colon a' \to a$ in $\catr$}
  &
  \Gamma \vdash_{f \colon a \to b} M : A
  &
  \text{$h \colon b \to b'$ in $\catr$}
}
\]
If we think of objects in grading category as conditions of states,
the rule \textsc{Tc-Gunit} represents weakening of
the condition along the grading morphism.
For example,
if objects in the grading category are types of states,
generalised units represent the subtyping relation of the types.
The denotation of a judgement derived by \textsc{Tc-Gunit} is defined by
$\denote{\Gamma \vdash_{h \circ f \circ g} M : A}s
= \mu_{f \circ g, h}
(\Term{\Sigma}{f \circ g}{\gu_h}
(\mu_{g,f}
(\gu_g (\denote{\Gamma \vdash_f M : A}s))))$.
% \[
% \begin{tikzcd}
%   \Term{\Sigma}{f}{\denote{A}}
%   \ar[r, "\gu_{g}"]
%   &
%   \Term{\Sigma}{g}{\Term{\Sigma}{f}{\denote{A}}}
%   \ar[r, "\mu"]
%   &
%   \Term{\Sigma}{f \circ g}{\denote{A}}
%   \ar[d, "\Term{\Sigma}{f \circ g}{\gu_h}"]
%   \\
%   & & \Term{\Sigma}{f \circ g}{\Term{\Sigma}{h}{\denote{A}}}
%   \ar[d, "\mu"]
%   \\
%   & & \Term{\Sigma}{h \circ f \circ g}{\denote{A}}
% \end{tikzcd}
% \]

\section{Correspondence of Parameterised Monads to Category-Graded Monads with Generalised Units}
\label{correspondence-of-p-monads-to-cat-graded-monads-with-gu}
In this section,
we describe correspondence of parameterised monads
to category-graded monads with generalised units.
This correspondence was shown by \cite{OWE20},
but it needs some modification.

\begin{definition}[Parameterised monad \cite{Atkey09}]
  A \textit{parameterised monad} consists of
  a functor $P \colon \cats^{\opposite} \times \cats \to [\catc, \catc]$,
  a natural transformation $\eta^{P}_{a} \colon \idfunc_{\catc} \nattr P(a,a)$ for each $a \in \ob{\cats}$,
  and a natural transformation $\mu^{P}_{a, b, c} \colon P(a,b) P(b,c) \nattr P(a,c)$ for each $a, b, c \in \ob{\cats}$
  % the following data:
  %\ begin{itemize}
  % \item a functor $P \colon \cats^{\opposite} \times \cats \to [\catc, \catc]$,
  % \item for each $a \in \ob{\cats}$, a natural transformation
  %   $\eta^{P}_{a} \colon \idfunc_{\catc} \nattr P(a,a)$, and
  % \item for each $a, b, c \in \ob{\cats}$, a natural transformation
  %   $\mu^{P}_{a, b, c} \colon P(a,b) P(b,c) \nattr P(a,c)$
  % \end{itemize}
  satisfying
  appropriate commutative diagram,
  % the following commutative diagrams:
  % \[
  % \begin{tikzcd}
  %   P(a,b)
  %   \ar[r, Rightarrow, "{\eta^P_{a} P(a,b)}"]
  %   \ar[d, Rightarrow, "{P(a,b) \eta^P_{b}}"']
  %   \ar[rd, equal]
  %   &
  %   P(a,a)P(a,b)
  %   \ar[d, Rightarrow, "\mu^P_{a,a,b}"]
  %   \\
  %   P(a,b)P(b,b)
  %   \ar[r, Rightarrow, "\mu^P_{a,b,b}"']
  %   &
  %   P(a,a)
  % \end{tikzcd}
  % \quad
  % \begin{tikzcd}
  %   P(a,b)P(b,c)P(c,d)
  %   \ar[r, "{\mu^P_{a,b,c} P(c,d)}"]
  %   \ar[d, "{P(a,b) \mu^P_{b,c,d}}"']
  %   &
  %   P(a,c)P(c,d)
  %   \ar[d, "\mu^P_{a,c,d}"]
  %   \\
  %   P(a,b)P(b,d)
  %   \ar[r, "\mu^P_{a,b,d}"']
  %   &
  %   P(a,d)
  % \end{tikzcd}
  % \]
  and $\eta^P_a$ is dinatural in $a$
  and $\mu^P_{a,b,c}$ is dinatural in $b$ and natural in $a,c$.
  % that is satisfying the following commutative diagrams:
  % \[
  % \begin{tikzcd}
  %   \idfunc_{\catc}
  %   \ar[r, Rightarrow, "\eta^P_{a}"]
  %   \ar[d, Rightarrow, "\eta^P_{a'}"']
  %   &
  %   P(a,a)
  %   \ar[d, Rightarrow, "{P(a,f)}"]
  %   \\
  %   P(a',a')
  %   \ar[r, Rightarrow, "{P(f,a')}"']
  %   &
  %   P(a,a')
  % \end{tikzcd}
  % \quad
  % \begin{tikzcd}
  %   P(a,b)P(b',c)
  %   \ar[r, Rightarrow, "{P(a,g)P(b', c)}"]
  %   \ar[d, Rightarrow, "{P(a,b)P(g, c)}"']
  %   &
  %   P(a,b')P(b',c)
  %   \ar[d, Rightarrow, "\mu^P_{a,b',c}"]
  %   \\
  %   P(a,b)P(b,c)
  %   \ar[r, Rightarrow, "\mu^P_{a,b,c}"']
  %   &
  %   P(a,c)
  % \end{tikzcd}
  % \]
  % \[
  % \begin{tikzcd}
  %   P(a',b)P(b,c)
  %   \ar[r, Rightarrow, "\mu^P_{a',b,c}"]
  %   \ar[d, Rightarrow, "{P(f,b)P(b,c)}"']
  %   &
  %   P(a',c)
  %   \ar[d, Rightarrow, "{P(f,c)}"]
  %   \\
  %   P(a,b)P(b,c)
  %   \ar[r, Rightarrow, "\mu^P_{a,b,c}"']
  %   &
  %   P(a,c)
  % \end{tikzcd}
  % \quad
  % \begin{tikzcd}
  %   P(a,b)P(b,c)
  %   \ar[r, Rightarrow, "\mu^P_{a,b,c}"]
  %   \ar[d, Rightarrow, "{P(a,b)P(b,h)}"']
  %   &
  %   P(a,c)
  %   \ar[d, Rightarrow, "{P(a,h)}"]
  %   \\
  %   P(a,b)P(b,c')
  %   \ar[r, Rightarrow, "\mu^P_{a,b,c'}"']
  %   &
  %   P(a,c')
  % \end{tikzcd}
  % \]
  % for all $a, a', b, b', c, c' \in \ob{\cats}$ and
  % $f \colon a \to a'$, $g \colon b \to b'$ and $h \colon c \to c'$.
\end{definition}

We introduce the pair completion
to unify parameterised monads and category-graded monads with generalised units.
\begin{definition}[Pair completion \cite{OWE20}]
  Let $\cats$ be a small category.
  The \textit{pair completion} $\cats^{\nabla}$ of $\cats$ is
  a category whose objects are the objects of $\cats$
  and homsets are $\cats^{\nabla}(a,b) = \cats(a,b) \sqcup \{ (a,b) \}$.
  Composition of morphisms are defined as follows:
  \begin{equation*}
    \inj_1 g \circ \inj_1 f = \inj_1 (g \circ f),
    \quad
    \inj_1 g \circ \inj_2 g = \inj_2 (a,c),
    \quad
    \inj_2 (b,c) \circ \inj_1 f = \inj_2 (a,c),
    \quad
    \inj_2 (b,c) \circ \inj_2 (a,b) = \inj_2 (a,c)
  \end{equation*}
  where $a, b, c \in \ob{\cats}$,
  $f \colon a \to b$ and $g \colon b \to c$.
\end{definition}

The following two propositions show
the correspondence of parameterised monads
to category-graded monads with generalised units.
\begin{proposition}[Category-graded monads with generalised units from parameterised monads]
  Let $P \colon \cats^{\opposite} \times \cats \to [\catc, \catc]$
  be a parameterised monad.
  We define $(T_P, \eta^{T_P}, \mu^{T_P}, \gu^{T_P})$ as follows:
  \begin{itemize}
  \item $T_P(a) \defeq \catc$, $T_P(f) \defeq P(a,b)$ for $f \colon a \to b$ in $\cats^{\nabla}$,
  \item $\eta^{T_P}_{a} \defeq \eta^P_a
    \colon \idfunc_{\catc} \nattr T_P(\idmorph_a)$
    for $a \in \ob{\cats}$,
  \item $\mu^{T_P}_{f,g} \defeq \mu^P_{a,b,c}
    \colon T_P(f) \circ T_P(g) \nattr T_P(g \circ f)$
    for $a, b, c \in \ob{\cats}$,
    $f \colon a \to b$ and $g \colon b \to c$ in $\cats^{\nabla}$ and
  \item $\gu^{T_P}_{\inj_1 f} \defeq P(a, f) \circ \eta^P_a
    \colon \idfunc_{\catc} \nattr T_P(f)$
    for $a, b \in \cats$ and $f \colon a \to b$ in $\cats$.
  \end{itemize}
  Then $(T_P, \eta^{T_P}, \mu^{T_P}, \gu^{T_P})$ is
  a $\cats^{\nabla}$-graded monad with a generalised unit.
\end{proposition}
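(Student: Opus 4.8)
The plan is to verify, axiom by axiom, that the tuple $(T_P, \eta^{T_P}, \mu^{T_P}, \gu^{T_P})$ is a $\cats^{\nabla}$-graded monad with generalised unit, taking as the wide subcategory $\catr \defeq \inj_1 \cats \subseteq \cats^{\nabla}$, i.e.\ the morphisms of the form $\inj_1 f$. First I would settle the bookkeeping: since $\cats^{\nabla}$ has the same objects as $\cats$ and $T_P(h) = P(x,y)$ for every $h \colon x \to y$, the $\cats^{\nabla}$-composite of $f \colon a \to b$ and $g \colon b \to c$ has source $a$ and target $c$, so $\mu^{T_P}_{f,g} = \mu^P_{a,b,c} \colon T_P(f)\,T_P(g) = P(a,b)P(b,c) \nattr P(a,c) = T_P(g \circ f)$ is well-typed; likewise $\eta^{T_P}_a = \eta^P_a \colon \idfunc_{\catc} \nattr T_P(\idmorph_a)$ and, writing $P(a,f)$ for $P(\idmorph_a, f)$, $\gu^{T_P}_{\inj_1 f} = P(a,f) \circ \eta^P_a \colon \idfunc_{\catc} \nattr T_P(\inj_1 f)$ are well-typed. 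Since $\inj_1$ is injective on morphisms, $\gu^{T_P}$ is unambiguously defined.

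Next I would check the lax-functor axioms for $(T_P, \eta^{T_P}, \mu^{T_P})$. The key observation is that $\eta^{T_P}$ and $\mu^{T_P}$ depend only on the \emph{objects} of $\cats^{\nabla}$ (which are exactly the objects of $\cats$) and that composition in $\cats^{\nabla}$ preserves source and target; hence every instance of the two unit triangles and of the associativity pentagon for $T_P$ is literally an instance of the corresponding coherence axiom of the parameterised monad $P$ (the two unit laws relating $\mu^P$ with $\eta^P$, and the associativity of $\mu^P$). Naturality of the structure $2$-cells is inherited directly, because $\eta^P$ and $\mu^P$ are themselves natural transformations. This already shows that $(T_P, \eta^{T_P}, \mu^{T_P})$ is a $\cats^{\nabla}$-graded monad.

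Then I would turn to the generalised-unit axioms. One first checks that $\catr$ is a wide subcategory of $\cats^{\nabla}$: it has all the objects, it contains every identity since $\idmorph_a = \inj_1 \idmorph_a$, and it is closed under composition since $\inj_1 g \circ \inj_1 f = \inj_1(g \circ f)$. The identity coherence diagram then reduces to $\gu^{T_P}_{\inj_1 \idmorph_a} = P(a,\idmorph_a) \circ \eta^P_a = \eta^P_a = \eta^{T_P}_a$, using functoriality of $P(a,\blank)$. For the composition diagram, given $f \colon a \to b$ and $g \colon b \to c$ in $\cats$, the required equation unwinds to
\[
\mu^P_{a,b,c} \circ \bigl( P(a,b) \cdot ( P(b,g) \circ \eta^P_b ) \bigr) \circ \bigl( P(a,f) \circ \eta^P_a \bigr) = P(a, g \circ f) \circ \eta^P_a ,
\]
where $\cdot$ denotes whiskering. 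The crux is the identity $\mu^P_{a,b,c} \circ \bigl( P(a,b) \cdot ( P(b,g) \circ \eta^P_b ) \bigr) = P(a,g)$: naturality of $\mu^P$ in its last argument gives $\mu^P_{a,b,c} \circ ( P(a,b) \cdot P(b,g) ) = P(a,g) \circ \mu^P_{a,b,b}$, and then the right unit law $\mu^P_{a,b,b} \circ ( P(a,b) \cdot \eta^P_b ) = \idfunc_{P(a,b)}$ finishes it; whiskering this onto $P(a,f) \circ \eta^P_a$ on the right and using functoriality $P(a,g) \circ P(a,f) = P(a, g \circ f)$ yields the claim.

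I expect most of the work to be this translation between the two formalisms --- essentially reindexing along the pair completion, in the spirit of \cite{OWE20} --- and to be routine once the identification $T_P(h) = P(\mathrm{src}(h), \mathrm{tgt}(h))$ and the composition rule of $\cats^{\nabla}$ are fixed. The one genuinely substantive point is the crux identity above in the generalised-unit part, where the reindexing maps and the unit laws of the parameterised monad have to be combined; a secondary pitfall is keeping the variance straight (the $\cats^{\opposite}$ hidden in the definition of a category-graded monad versus the covariant indexing used in the statement), which is exactly why I would nail down well-typedness at the outset.
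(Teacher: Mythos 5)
Your verification is correct: the lax-functor axioms for $T_P$ are indeed instances of the parameterised-monad coherence laws, and the crux identity $\mu^P_{a,b,c} \circ \bigl(P(a,b)\cdot(P(b,g)\circ\eta^P_b)\bigr) = P(a,g)$, obtained from naturality of $\mu^P$ in its third argument together with the right unit law, is exactly what is needed for the composition axiom of the generalised unit. The paper states this proposition without proof, and your argument supplies the standard direct verification one would expect there, with the variance and well-typedness bookkeeping handled correctly.
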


\begin{proposition}[Parameterised monads from category-graded monads with generalised units]
  Let $T \colon (\cats^{\nabla})^{\opposite} \to \Endo{\catc}$
  with ${(\gu^T_{f} \colon \idfunc_{\catc} \nattr T_f)}_{f \in \cats}$
  be a category-graded monad with generalised unit
  satisfying $T_{\inj_1 \idmorph_a} = T_{\inj_2 (a,a)}$.
  \footnote{
    The condition $T_{\inj_1 \idmorph_a} = T_{\inj_2 (a,a)}$ is
    necessary for the construction of $P_T(f,g)$ to be well-defined,
    but it was not imposed in \cite{OWE20}.
  }
  We define a functor
  $P_T \colon \cats^{\opposite} \times \cats \to [\catc, \catc]$
  and natural transformations $\eta^{P_T}$ and $\mu^{P_T}$
  as follows.
  \begin{itemize}
  \item $P_T(a,b) \defeq T_{\inj_2 (a,b)}$ and $P_T(a,a) \defeq T_{\inj_1 \idmorph_a} = T_{\inj_2 (a,a)}$ for objects.
  \item For $a, a', b, b' \in \ob{\cats}$ and $f \colon a' \to a$ and $g \colon b \to b'$, $P_T(f,g)$ is
    \[
    \begin{tikzcd}
      P_T(a,b) = T_{\inj_2 (a,b)}
      \ar[r, Rightarrow, "{\gu^T_{f} T_{\inj_2 (a,b)}}"{yshift=3pt}]
      &
      T_{\inj_1 f} T_{\inj_2 (a,b)}
      \ar[r, Rightarrow, "{T_{\inj_1 f} T_{\inj_2 (a,b)} \gu^T_g}"{yshift=3pt}]
      &
      T_{\inj_1 f} T_{\inj_2 (a,b)} T_{\inj_1 g}
      \ar[r, Rightarrow, "{\mu^T_{\inj_1 f, \inj_2 (a,b)} T_{\inj_1 g}}"{yshift=3pt}]
      &
      T_{\inj_2 (a',b)} T_{\inj_1 g}
      \ar[d, Rightarrow, "{\mu^T_{\inj_2 (a',b), \inj_1 g}}"]
      \\
      & & &
      T_{\inj_2 (a',b')} = P_T(a', b').
    \end{tikzcd}
    \]
  \item $\eta^{P_T}_a \defeq \eta^T_a \colon \idfunc_{\catc} \nattr P(a,a)$,
    and $\mu^{P_T}_{a,b,c} \defeq \mu^T_{\inj_2 (a,b), \inj_2 (b,c)} \colon P(a,b) P(b,c) \nattr P(a,c)$.
  \end{itemize}
  Then $(P_T, \eta^{P_T}, \mu^{P_T})$ is a parameterised monad.
\end{proposition}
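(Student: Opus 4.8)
The plan is to run through the data and axioms required of a parameterised monad for $(P_T, \eta^{P_T}, \mu^{P_T})$ and reduce each point to the axioms of the category-graded monad $T$, the two axioms of its generalised unit, and the composition rules of $\cats^{\nabla}$. A useful preliminary step is to record the simplified form of $P_T(f,g)$ when one of $f,g$ is an identity: inserting $\gu^T_{\idmorph} = \eta^T$ (the second generalised-unit axiom) and then cancelling one multiplication of $T$ against $\eta^T$ by a unit law of $T$, one gets $P_T(\idmorph_a, g) = \mu^T_{\inj_2(a,b),\inj_1 g}\circ(T_{\inj_2(a,b)}\gu^T_g)$ for $g\colon b\to b'$ (on $P_T(a,b)$), and dually $P_T(f,\idmorph_b) = \mu^T_{\inj_1 f,\inj_2(b,b)}\circ(\gu^T_f T_{\inj_2(b,b)})$. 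Throughout, the hypothesis $T_{\inj_1\idmorph_a} = T_{\inj_2(a,a)}$ is exactly what makes the $\eta^T$'s (which land in $T_{\inj_1\idmorph_a}$) composable against the $\inj_2$-graded components, so that all of these composites typecheck.

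Several axioms then follow almost for free. Functoriality of $P_T$ on identities comes from the simplifications above plus one more application of a unit law of $T$. Associativity of $\mu^{P_T}$ is literally the associativity axiom of $T$ for the triple $(\inj_2(a,b),\inj_2(b,c),\inj_2(c,d))$, since these $\inj_2$-morphisms compose in $\cats^{\nabla}$ to the expected $\inj_2$-morphisms. Naturality of $\mu^{P_T}$ in the first and third arguments, and its dinaturality in the middle argument, reduce the same way to the associativity axiom of $T$ applied to mixed triples such as $(\inj_1\psi,\inj_2(a,b),\inj_2(b,c))$, combined with naturality of the relevant $\gu^T_\psi$ (used to slide $\gu^T_\psi$ past the multiplications appearing in the simplified form of $P_T(\psi,-)$ and $P_T(-,\psi)$).

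What remains, and what I expect to be the main obstacle, is (i) functoriality of $P_T$ on composites and (ii) the two unit laws $\mu^{P_T}_{a,a,b}\circ(\eta^{P_T}_a\cdot P_T(a,b)) = \idmorph = \mu^{P_T}_{a,b,b}\circ(P_T(a,b)\cdot\eta^{P_T}_b)$ together with dinaturality of $\eta^{P_T}$ in $a$. For (i) I would expand $P_T(f'\circ f, g'\circ g)$ and $P_T(f',g')\circ P_T(f,g)$ into strings of $\gu^T$'s and $\mu^T$'s, slide each $\gu^T$ rightward by naturality, fuse the two generalised-unit legs into a single $\gu^T_{f\circ f'}$ using the \emph{first} generalised-unit axiom, and re-bracket the $\mu^T$'s by associativity of $T$; string-diagram calculus makes this bookkeeping tractable. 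The delicate point is (ii): the plain axioms of $T$ do not force these identities, because they do not distinguish the multiplications indexed by $\inj_1\idmorph_a$ from those indexed by $\inj_2(a,a)$, so the argument must genuinely feed in the generalised-unit axioms and, once more, the hypothesis $T_{\inj_1\idmorph_a} = T_{\inj_2(a,a)}$ — this is precisely where the construction of \cite{OWE20} needs the correction flagged in the footnote. With (i) and (ii) settled the construction is complete.
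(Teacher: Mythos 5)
Your division of labour is reasonable, and the parts you classify as routine are indeed so: associativity of $\mu^{P_T}$ is the associativity of $T$ at the triple $(\inj_2(a,b),\inj_2(b,c),\inj_2(c,d))$, and functoriality of $P_T$ on composites follows from naturality of the $\gu^T$'s, the first generalised-unit axiom, and coherence of $\mu^T$, exactly as you outline. (For what it is worth, the paper states this proposition without proof, so there is no official argument to measure you against.)

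The problem is your point (ii), which you correctly isolate as the delicate step but then declare settled without an argument --- and I do not think it can be settled from the stated hypotheses. The left unit law you need is $\mu^T_{\inj_2(a,a),\inj_2(a,b)}\circ(\eta^T_a T_{\inj_2(a,b)})=\idmorph$, whereas the lax-functor unit axiom of $T$ only yields this for $\mu^T_{\inj_1\idmorph_a,\inj_2(a,b)}$. The generalised-unit axioms are of no help here: they constrain $\gu^T$ and the multiplications $\mu^T_{\inj_1 f,\inj_1 g}$ indexed by the wide subcategory only, and the hypothesis $T_{\inj_1\idmorph_a}=T_{\inj_2(a,a)}$ identifies two \emph{functors} without identifying the two \emph{2-cells} $\mu^T_{\inj_1\idmorph_a,k}$ and $\mu^T_{\inj_2(a,a),k}$. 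A concrete obstruction: let $\cats$ be trivial, so that $\cats^\nabla$ is the two-element idempotent monoid $\{1,e\}$ with $e=\inj_2(*,*)$; put $T_1X=T_eX=X\times\{0,1\}$, $\eta(x)=(x,1)$, $\mu_{f,g}((x,n),m)=(x,nm)$ for $(f,g)\neq(e,e)$, and $\mu_{e,e}((x,n),m)=(x,0)$. All lax-functor axioms and both generalised-unit axioms hold (the unit axioms of $T$ never mention $\mu_{e,e}$, and associativity is a short case check), yet $\mu^{P_T}_{*,*,*}\circ(\eta^{P_T}_{*}P_T(*,*))$ sends $(x,n)$ to $(x,0)$, so $(P_T,\eta^{P_T},\mu^{P_T})$ is not a monad. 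Your proof therefore cannot go through as planned; the construction needs the stronger hypothesis that $\inj_1\idmorph_a$ and $\inj_2(a,a)$ are identified at the level of the whole lax-functor structure (so that $\mu^T_{\inj_2(a,a),k}=\mu^T_{\inj_1\idmorph_a,k}$ and dually), not merely $T_{\inj_1\idmorph_a}=T_{\inj_2(a,a)}$ as functors. The same issue threatens the right unit law and the dinaturality of $\eta^{P_T}$, which also pass through mixed-index multiplications.
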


\begin{theorem}
  Parameterised monads correspond to category-graded monads with generalised units.
  More precisely,
  there is a one to one correspondence between
  the set of $\cats$-parameterised monads and
  the set of $\cats^{\nabla}$-graded monads with generalised units $(T, \gu)$
  satisfying $T_{\inj_1 f} = T_{\inj_2 (a,b)}$
  for all $a, b \in \ob{\cats}$ and $f \colon a \to b$ in $\cats$.
\end{theorem}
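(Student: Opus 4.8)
The plan is to show that the two constructions in the preceding propositions assemble into a pair of mutually inverse maps. Write $\Phi$ for the map sending a $\cats$-parameterised monad $P$ to $(T_P, \eta^{T_P}, \mu^{T_P}, \gu^{T_P})$, and $\Psi$ for the map sending a $\cats^{\nabla}$-graded monad with generalised unit $(T,\gu)$, with generalised unit taken on the wide subcategory $\inj_1(\cats) \subseteq \cats^{\nabla}$, to $(P_T, \eta^{P_T}, \mu^{P_T})$. First I would record that $\Phi$ and $\Psi$ land in the intended sets: the two propositions already establish that $T_P$ is a $\cats^{\nabla}$-graded monad with generalised unit and that $P_T$ is a parameterised monad, and $T_P$ visibly satisfies the side condition $T_{\inj_1 f} = T_{\inj_2(a,b)}$ since by construction both sides equal $P(a,b)$. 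It then suffices to check $\Psi \circ \Phi = \idfunc$ and $\Phi \circ \Psi = \idfunc$.

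For $\Psi \circ \Phi = \idfunc$, I would start from $P$ and unwind $P_{T_P}$. On objects $P_{T_P}(a,b) = (T_P)_{\inj_2(a,b)} = P(a,b)$, and $\eta^{P_{T_P}}_a = \eta^{T_P}_a = \eta^P_a$ while $\mu^{P_{T_P}}_{a,b,c} = \mu^{T_P}_{\inj_2(a,b),\inj_2(b,c)} = \mu^P_{a,b,c}$, all on the nose. The only genuine computation is the functorial action on morphisms: expanding $P_{T_P}(f,g)$ via the four-step composite of the last proposition, substituting $\gu^{T_P}_{\inj_1 f} = P(a,f) \circ \eta^P_a$ and $\mu^{T_P} = \mu^P$, and then collapsing the diagram using the unit law of $P$ (a $\mu^P$ immediately after an $\eta^P$ is an identity) together with the (di)naturality of $\eta^P$ and $\mu^P$ in the relevant variables; what remains is exactly $P(f,g)$.

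For $\Phi \circ \Psi = \idfunc$, I would start from $(T,\gu)$ satisfying the side condition and compute $T_{P_T}$. The underlying $1$-cells agree: $(T_{P_T})_{\inj_2(a,b)} = P_T(a,b) = T_{\inj_2(a,b)}$, and $(T_{P_T})_{\inj_1 f} = P_T(a,b) = T_{\inj_2(a,b)} = T_{\inj_1 f}$ by the hypothesis; and $\eta^{T_{P_T}}_a = \eta^{P_T}_a = \eta^T_a$. For the multiplication, $\mu^{T_{P_T}}_{f,g} = \mu^{P_T}_{a,b,c} = \mu^T_{\inj_2(a,b),\inj_2(b,c)}$, and one must check this coincides with $\mu^T_{f,g}$ for every pair of composable $\cats^{\nabla}$-morphisms, i.e. that under the hypothesis the multiplication of $T$ depends only on the domains and codomains of its indices; this is to be extracted from the lax-functor axioms together with the generalised-unit axioms, with the footnoted identity $T_{\inj_1\idmorph_a} = T_{\inj_2(a,a)}$ guaranteeing the composites in question are well-typed. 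Finally $\gu^{T_{P_T}}_{\inj_1 f} = P_T(\idmorph_a, f) \circ \eta^{P_T}_a$, and unfolding $P_T(\idmorph_a,f)$ and applying the generalised-unit axioms and the monad unit law reduces this to $\gu^T_{\inj_1 f}$.

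The hard part will be this last round trip: verifying that $\mu^T_{f,g}$ and $\gu^T_{\inj_1 f}$ are reconstructed by $\Phi \circ \Psi$, since these require carefully propagating the composite definitions of $P_T(f,g)$ and $\gu^{T_P}$ through the coherence laws and using the side hypothesis (and its strengthening $T_{\inj_1\idmorph_a} = T_{\inj_2(a,a)}$) to identify $\inj_1$- and $\inj_2$-indexed components. Each individual step is a routine diagram chase, most conveniently carried out with string diagrams as in the earlier proofs, so once the two round-trip identities are confirmed the stated bijection follows.
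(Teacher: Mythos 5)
The paper states this theorem without proof, so there is no official argument to compare against; your strategy --- showing that the two constructions $\Phi \colon P \mapsto (T_P, \gu^{T_P})$ and $\Psi \colon (T,\gu) \mapsto P_T$ of the preceding propositions are mutually inverse --- is the natural one, and your treatment of the round trip $\Psi \circ \Phi = \idmorph$ is sound: the only nontrivial point there is recovering $P(f,g)$ from the four-step composite, which does follow from the unit laws and the (di)naturality of $\eta^P$ and $\mu^P$ as you indicate.

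The genuine gap is in the round trip $\Phi \circ \Psi = \idmorph$, at exactly the step you defer to a ``routine diagram chase'': the claim that, under the hypothesis $T_{\inj_1 f} = T_{\inj_2(a,b)}$, the $2$-cells $\mu^T_{f,g}$ depend only on the objects $a,b,c$ (so that e.g.\ $\mu^T_{\inj_1 f, \inj_1 g} = \mu^T_{\inj_2(a,b),\inj_2(b,c)}$), and likewise that $\gu^T_{\inj_1 f}$ is recovered as $P_T(\idmorph_a,f)\circ\eta^T_a$. The side condition is an equality of $1$-cells only, and none of the lax-functor or generalised-unit axioms identifies $2$-cells whose morphism indices differ but whose underlying functors coincide: associativity only relates $\mu^T_{\inj_1 f,-}$ to $\mu^T_{\inj_2(a,b),-}$ after composition with other structure maps that need not be invertible, and the law $\gu_{g\circ f} = \mu_{f,g}\circ T_f\gu_g\circ\gu_f$ constrains $\mu^T_{\inj_1 f,\inj_1 g}$ only after precomposition with generalised units. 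A small test case shows the determinacy genuinely fails: take $\cats$ the terminal category, so $\cats^{\nabla}$ is the two-element monoid $\{1,e\}$ with $e$ idempotent, and take $\catc$ a one-object category whose endomorphism monoid is $(\mathbb{Z},+)$ with all $T_1 = T_e = \idfunc_{\catc}$. The unit laws force $\mu_{1,1}=\mu_{1,e}=\mu_{e,1}=-\eta$, but associativity leaves $\mu_{e,e}$ completely free, so such a $(T,\gu)$ need not be of the form $T_P$ (indeed for $\mu_{e,e}\neq\mu_{1,1}$ the induced $(P_T,\eta^{P_T},\mu^{P_T})$ even fails the monad unit law). So as written the second round trip is not established, and cannot be by any diagram chase; the theorem needs either a strengthened side condition forcing the whole lax-functor structure, not just the underlying functors, to be constant on each hom-set $\cats^{\nabla}(a,b)$, or a weaker notion of ``correspondence''. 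Identifying and proving that strengthened condition is the real content your proposal leaves open.
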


\subsection{Correspondence of Eilenberg-Moore Constructions}
Let $P \colon \cats^{\opposite} \times \cats \to [\catc, \catc]$
be a parameterised monad.
We obtain the functor
$\EM{T^P} \colon {(\cats^{\nabla})}^{\opposite} \to \cat$
by Eilenberg-Moore construction on
$T_P \colon {(\cats^{\nabla})}^{\opposite} \laxto \Endo{\catc}$.
In this section,
we describe the relation between $P$-algebras and $T_P$-algebras.
For $a \in \ob{\cats}$,
Recall that the category of $T^P$-algebra at $a$, $\EM{T^P}(a)$,
consists of the following data.
\begin{itemize}
\item Objects are $(A,h)$ where
  for $f \colon a \to b$, $g \colon b \to c$ in $\cats^{\nabla}$
  $A \colon \sum_{b} {(\cats^{\nabla})}(b,a) \to \catc$
  and
  $\xi_{f,g} \colon T_f A_g \to A_{g \circ f}$
  are compatible with $\eta^{T_P}$ and $\mu^{T_P}$.
\item A morphism from $(A, \xi)$ to $(A', \xi')$
  is a natural transformation $\alpha \colon A \nattr A'$
  which compatible with $\xi$ and $\xi'$.
\end{itemize}

On the other hand,
the category of $P$-algebra $\catc^P$ consists of the following data \cite{Atkey09}.
\begin{itemize}
\item Objects are $(B, \zeta)$ where
  $B \colon \cats^{\opposite} \to \catc$
  and
  $\zeta_{a,b} \colon P(a,b) B_b \to B_a$
  are compatible with $\eta^P$ and $\mu^P$.
\item A morphism from $(B, \zeta)$ to $(B', \zeta')$
  is a natural transformation $\beta \colon B \nattr B'$
  which compatible with $\zeta$ and $\zeta'$.
\end{itemize}

\textbf{Construction of $T_P$-algebras from $P$-algebras.}
Given a $P$-algebra $(B, \zeta)$,
we can construct a $T_P$-algebra $(A, \xi)$ at $a$ for any $a \in \ob{\cats}$ by
$A_{g} \defeq B_b$ and $\xi_{f,g} \defeq \zeta_{c,b} \colon {T_P}_{f} A_{g} = P(c,b) B_b \to B_c = A_{g \circ f}$
where $f \colon c \to b$ and $g \colon b \to a$ are morphisms in $\cats^{\nabla}$.
This $T_P$-algebra $(A,\xi)$ satisfies
$A_{\inj_1 f} = A_{\inj_2 (b,a)}$
for all $b \in \ob{\cats}$ and $f \colon b \to a$ in $\cats$.

Next,
we consider a construction of morphisms of $T_P$-algebras
from morphisms of $P$-algebras.
Consider a morphism $\beta \colon (B, \zeta) \to (B', \zeta')$ between $P$-algebras.
Recall that $\beta$ is a natural transformation $\beta \colon B \nattr B'$
with $\zeta'_{b, c} \circ (P(b,c) \beta_c) = \beta_b \circ \zeta_{b,c}$.
% satisfying the following commutative diagram.
% \[
% \begin{tikzcd}
%   P(b,c) B_c
%   \ar[r, "{P(b,c) \beta_c}"]
%   \ar[d, "\zeta_{b,c}"']
%   &
%   P(b,c) B'_c
%   \ar[d, "\zeta'_{b,c}"]
%   \\
%   B_b
%   \ar[r, "\beta_b"']
%   &
%   B'_b
% \end{tikzcd}
% \]
Let $(A, \xi)$ and $(A', \xi')$ be
the $T_P$-algebras constructed from $(B, \zeta)$ and $(B', \zeta')$, respectively.
We define a natural transformation $\alpha \colon A \nattr A'$ to be
$\alpha_{f \colon b \to a} \defeq \beta_b \colon A_f = B_b \to B'_b = A'_f$.
This $\alpha$ becomes a $T_P$-homomorphism at $a$.
Indeed, for all $f \colon b \to c$ and $g \colon c \to a$ in $\cats^{\nabla}$
the following commutative diagram holds.
\[ \begin{tikzcd}
  {T_P}_f A_g \ar[r, "{{T_P}_f \alpha_f}"] \ar[d, equal] \ar[ddd, bend right=50, "h_{f,g}"']&
  {T_P}_f A'_g \ar[d, equal] \ar[ddd, bend left=50, "h'_{f,g}"] \\
  P(b,c)B_c \ar[d, "k_{b,c}"] \ar[r, "{P(b,c)\beta_c}"] &
  P(b,c)B'_c \ar[d, "k'_{b,c}"] \\
  B_b \ar[r, "{\beta_b}"] \ar[d, equal]&
  B'_b \ar[d, equal] \\
  A_{gf} \ar[r, "{\alpha_{gf}}"] &
  A'_{gf} \\
\end{tikzcd} \]

Summarizing the above discussion,
we obtain a functor
$F_a \colon \catc^P \to \EM{T_P}(a)$
where $a \in \ob{\cats}$.
In particular,
we get
$F_a \colon \catc^P \to \catc^{T_P}_{a}$
where $\catc^{T_P}_{a}$ be the full subcategory of $\EM{T_P}(a)$
whose objects are $(A,\xi)$ satisfying 
$A_{\inj_1 f} = A_{\inj_2 (b,a)}$
for all $b \in \ob{\cats}$ and $f \colon b \to a$ in $\cats$.

\textbf{Construction of $P$-algebras from $T_P$-algebras.}
Conversely, given a $T_P$-algebra $(A, \xi)$ at $a$ satisfying
$A_{\inj_1 f} = A_{\inj_2 (b,a)}$
for all $b \in \ob{\cats}$ and $f \colon b \to a$ in $\cats$,
we can construct a $P$-algebra $(B, \zeta)$ as follows.
\begin{gather*}
  B_b \defeq A_{\inj_2 (b,a)},
  \quad
  B_f \defeq
  \left( \begin{tikzcd}[ampersand replacement = \&]
    B_c = A_{\inj_2 (c, a)} \ar[r, "{\gu^{T_P}_f}"] \&
    {T_P}_f A_{\inj_2 (c, a)} \ar[r, "\xi_{f, \inj_2 (c, a)}"] \&
    A_{\inj_2 (c,a) \circ f} = B_b
  \end{tikzcd} \right),
  \\
  \zeta_{b,c} \defeq \xi_{\inj_2 (b,c), \inj_2 (c,a)}
  \colon P(b,c) B_c = {T_P}_{\inj_2 (b,c)} A_{\inj_2 (c,a)} \to A_{\inj_2 (b, a)} = B_b.
\end{gather*}
We prove that $B$ is a functor.
Firstly, we have
$ B_{\idmorph_b}
= \xi_{\idmorph_b, \inj_2 (b,a)} \circ \gu^{T_P}_{\idmorph_b} A_{\inj_2 (b,a)}
= \xi_{\idmorph_b, \inj_2 (b,a)} \circ \eta^{T_P}_{b} A_{\inj_2 (b,a)}
= \idmorph_{A_{\inj_2 (b,a)}}.$
% \[
% \begin{tikzcd}
%   A_{ba}
%   \ar[r, "{\eta^{T_P}_{b}} A_{ba}"]
%   \ar[rr, bend left=45, "B_{\idmorph_b}"]
%   \ar[rr, bend right=45, equal] &
%   T_{\idmorph_b} A_{ba} \ar[r, "\xi_{\idmorph_b, ba}"] &
%   A_{ba}
% \end{tikzcd}.
% \]
Secondly, we have
\[
  B_f \circ B_g
  =
  \left(
  \begin{tikzcd}
    A_{\inj_2 (d,a)}
    \ar[r, "{\gu^{T_P}_g A_{\inj_2 (d,a)}}"{yshift=3pt}]
    & {T_P}_g A_{\inj_2 (d,a)}
    \ar[r, "h_{g, \inj_2 (d,a)}"]
    & A_{\inj_2 (d,a) \circ g}
    \ar[r, "{\gu^{T_P}_f A_{\inj_2 (d,a) \circ g}}"{yshift=3pt}]
    & {T_P}_f A_{\inj_2 (d,a) \circ g}
    \ar[r, "\xi_{f, \inj_2 (d,a) \circ g}"]
    & A_{\inj_2 (d,a) \circ g \circ f}
  \end{tikzcd}
  \right)
\]
and
\[
B_{g \circ f} =
\left(
\begin{tikzcd}
  A_{\inj_2 (d,a)} \ar[r, "{\gu^{T_P}_{gf} A_{\inj_2 (d,a)}}"{yshift=3pt}] &
  {T_P}_{g \circ f} A_{\inj_2 (d,a)} \ar[r, "\xi_{g \circ f, \inj_2 (d,a)}"] &
  A_{\inj_2 (d,a) \circ g \circ f}
\end{tikzcd}
\right)
\]
for $f \colon b \to c$ and $g \colon c \to d$ in $\cats$.
These two morphisms are equal
because the following diagram commutes.
\[
\begin{tikzcd}
  A_{\inj_2 (d,a)}
  \ar[d, "{\gu^{T_P}_g A_{\inj_2 (d,a)}}"']
  \ar[rrd, bend left=15, "{\gu^{T_P}_{g \circ f} A_{\inj_2 (d,a)}}"]
  & & \\
  {T_P}_g A_{\inj_2 (d,a)}
  \ar[r, "{\gu^{T_P}_f T_g A_{\inj_2 (d,a)}}"{yshift=3pt}]
  \ar[d, "\xi_{g, \inj_2 (d,a)}"']
  & {T_P}_f {T_P}_g A_{\inj_2 (d,a)}
  \ar[r, "\mu^{T_P}_{f,g} A_{\inj_2 (d,a)}"{yshift=3pt}]
  \ar[d, "T_f \xi_{g, \inj_2 (d,a)}"]
  & {T_P}_{g \circ f} A_{\inj_2 (d,a)}
  \ar[d, "\xi_{g \circ f, \inj_2 (d,a)}"] \\
  A_{\inj_2 (d,a) \circ g}
  \ar[r, "{\gu^{T_P}_f A_{\inj_2 (d,a) \circ g}}"'{yshift=-3pt}]
  & {T_P}_f A_{\inj_2 (d,a) \circ g}
  \ar[r, "\xi_{f, \inj_2 (d,a) \circ g}"'{yshift=-3pt}]
  & A_{\inj_2 (d,a) \circ g \circ f}
\end{tikzcd}
\]
In the above diagram,
the top triangle is commutative by the definition of generalised unit,
the left bottom square is commutative by the naturality of $\gu^{T_P}$
and the right bottom square is commutative the definition of $T_P$-algebra $(A,\xi)$.
Summarizing the above discussion,
we obtain a functor
$G_a \colon \catc^{T_P}_a \to \catc^P$.
We can easily check the following theorem by the definition of $F_a$ and $G_a$.
\begin{theorem}
  The category $\catc^P$ of $P$-algebras
  is isomorphic to the category $\catc^{T_P}_a$ of $T_P$-algebras at $a$:
  $F_a : \catc^P \cong \catc^{T_P}_a : G_a$.
\end{theorem}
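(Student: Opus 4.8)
The plan is to verify directly that the two functors $F_a \colon \catc^P \to \catc^{T_P}_a$ and $G_a \colon \catc^{T_P}_a \to \catc^P$ constructed above are mutually inverse. On morphisms this will be immediate: $F_a$ sends a $P$-homomorphism $\beta$ to the $T_P$-homomorphism $\alpha$ with $\alpha_{f \colon b \to a} = \beta_b$, while $G_a$ sends $\alpha$ back to the family $(\alpha_{\inj_2(b,a)})_b = (\beta_b)_b$; hence both composites $G_a F_a$ and $F_a G_a$ act as the identity on arrows as soon as they act as the identity on objects. So the real content is the two object-level round trips.

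For $G_a \circ F_a = \idfunc_{\catc^P}$ I would start from a $P$-algebra $(B,\zeta)$: $F_a$ produces the $T_P$-algebra $(A,\xi)$ with $A_g = B_b$ and $\xi_{f,g} = \zeta_{c,b}$ for $f \colon c \to b$, $g \colon b \to a$ in $\cats^{\nabla}$, and then $G_a$ produces $(B',\zeta')$ with $B'_b = A_{\inj_2(b,a)} = B_b$ and $\zeta'_{b,c} = \xi_{\inj_2(b,c),\inj_2(c,a)} = \zeta_{b,c}$, so carriers and structure maps return on the nose. The only remaining point is that the reconstructed action of the functor $B'$ on a morphism $f$ of $\cats$ agrees with $B_f$: by definition $B'_f$ is the composite of $(\gu^{T_P}_{\inj_1 f})_{B_\bullet}$ with $\xi_{\inj_1 f,\,\inj_2(\bullet,a)}$, and expanding $\gu^{T_P}_{\inj_1 f} = P(\bullet,f)\circ\eta^P_\bullet$ together with $\xi_{\inj_1 f,\,\inj_2(\bullet,a)} = \zeta$ turns the claim into the unit law that governs the functoriality of a parameterised-monad algebra, a short diagram chase using the compatibility of $\zeta$ with $\eta^P$.

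For $F_a \circ G_a = \idfunc_{\catc^{T_P}_a}$ I would start from a $T_P$-algebra $(A,\xi)$ at $a$ lying in $\catc^{T_P}_a$, i.e. with $A_{\inj_1 f} = A_{\inj_2(b,a)}$ for every $f \colon b \to a$ in $\cats$; then $G_a$ yields $(B,\zeta)$ with $B_b = A_{\inj_2(b,a)}$ and $\zeta_{b,c} = \xi_{\inj_2(b,c),\inj_2(c,a)}$, and $F_a$ yields $(A',\xi')$ with $A'_g = B_{\mathrm{dom}(g)}$ and $\xi'_{f,g} = \zeta_{\mathrm{dom}(f),\mathrm{dom}(g)}$. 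On carriers, $A'_{\inj_2(b,a)} = B_b = A_{\inj_2(b,a)}$ directly, while $A'_{\inj_1 f} = B_b = A_{\inj_2(b,a)} = A_{\inj_1 f}$ by the defining hypothesis of $\catc^{T_P}_a$ --- this is exactly where the subcategory condition (hence the footnote) is needed. It then remains to check $\xi'_{f,g} = \xi_{f,g}$ for all $f,g$ in $\cats^{\nabla}$: when both indices lie in the $\inj_2$ part this is the definition of $\zeta$, and the remaining cases, where one or both of $f,g$ is of the form $\inj_1(\cdot)$, follow by inserting the appropriate generalised unit and invoking the unit and multiplication coherence axioms of $(A,\xi)$, in the same style as the diagram established for the well-definedness of $G_a$. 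I expect this last point --- tracking the four $\inj_1$/$\inj_2$ combinations for the pair $(f,g)$ and confirming that each reconstructed $\xi'_{f,g}$ collapses back to $\xi_{f,g}$ --- to be the only delicate step; everything else (carriers, morphism components, and the two coherence conditions) is routine once the definitions of $F_a$, $G_a$, $\gu^{T_P}$ and composition in $\cats^{\nabla}$ are unwound.
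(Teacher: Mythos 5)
Your plan is the same direct verification the paper intends: the paper's entire proof is the remark that the isomorphism can "easily be checked by the definition of $F_a$ and $G_a$", and your object-level and morphism-level round trips are exactly that check spelled out, with the $G_a F_a$ direction correctly reduced to the unit/dinaturality laws of the $P$-algebra. The one step you defer --- showing $\xi'_{f,g}=\xi_{f,g}$ in the cases where $f$ or $g$ has the form $\inj_1(\cdot)$, for an \emph{arbitrary} object of the full subcategory $\catc^{T_P}_a$, which is cut out only by a condition on carriers and not on structure maps --- is where all the remaining content sits, and the paper supplies no more detail there than you do.
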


\end{document}